\documentclass{amsart}
\usepackage{times}
\usepackage{amsthm}
\usepackage{amssymb}
\usepackage[pdftex]{graphicx}
\usepackage[all]{xy}
\usepackage[mathscr]{eucal}
\usepackage{amsmath,latexsym,oldlfont}
\usepackage{mathdots}
\usepackage{pifont}
\usepackage{stmaryrd}
\usepackage{textcomp}
\usepackage{pifont}
\usepackage{multirow}
\usepackage{tikz}
\usetikzlibrary{arrows.meta}
\usepackage{float}
\usepackage{bbm}
\usepackage[colorlinks,linkcolor=blue,citecolor=red,linktocpage=true]{hyperref}
\usepackage{mathtools}
\usepackage{pgfkeys}
\usetikzlibrary{positioning}
\numberwithin{equation}{section} 
\newtheorem{lem}{Lemma}
\newtheorem{thm}[lem]{Theorem}
\newtheorem{cor}[lem]{Corollary}
\newtheorem{prop}[lem]{Proposition}
\newtheorem{example}[lem]{Example}
\theoremstyle{definition}
\newtheorem{defn}{Definition}
\theoremstyle{remark}
\newtheorem{rem}{Remark}

\newsavebox{\spacebox}
\begin{lrbox}{\spacebox}
    \verb*! !
\end{lrbox}

\title[ LDPC Fractus Codes: Sparse Codes with Recursive Structure ] { LDPC Fractus Codes: Sparse Codes with Recursive Structure.}
\author{
{\small Jes\'us Carrillo--Pacheco}\\
{\tiny
Academy of Mathematics\\
Universidad Aut\'onoma de la Ciudad de M\'exico\\
09390 M\'exico, Ciudad de M\'exico\\
\texttt{jesus.carrillo@uacm.edu.mx}}}
\begin{document}

\maketitle
\pagestyle{plain}

\begin{abstract}
We introduce a new family of recursively constructed sparse matrices, termed Fractus matrices, and investigate their use in constructing low-density parity-check (LDPC) codes. Generated through a self-similar recursive process, these matrices yield regular sparse parity-check matrices while preserving key structural properties across successive iterations. This recursive structure enables an efficient encoding algorithm with computational complexity that is nearly linear in the block length. Decoding is performed using standard iterative message-passing algorithms, thereby retaining the low-complexity decoding characteristic of LDPC codes.
The proposed construction produces Tanner graphs with girth six and guarantees a minimum Hamming distance of at least $\ell+1$. We establish several algebraic properties of Fractus matrices, including sparsity, regularity, recursive decomposition, and symmetry under the flip-transpose operation. In addition, we show that the family of Fractus matrices admits a natural lattice structure and that the associated LDPC codes inherit corresponding lattice-theoretic properties. These results establish a connection between order theory and coding theory.
Overall, the proposed framework integrates recursive matrix constructions, efficient encoding, graph-theoretic analysis, and lattice theory into a unified algebraic approach to the design and analysis of scalable LDPC codes.
\end{abstract}

\section{Introduction}
Parity-check codes constitute a broad class of error-correcting codes whose description in terms of parity-check matrices is particularly useful for determining code parameters and developing efficient encoding and decoding algorithms. A prominent class within this family is that of low-density parity-check (LDPC) codes.

LDPC codes, together with their associated iterative decoding algorithms, were introduced by Gallager in \cite{3.01,3}. Although they initially received limited attention because of the computational limitations of the time, LDPC codes experienced a renewed interest in the 1990s, when advances in computational power made Gallager's approach practically feasible. Since then, LDPC codes have become an important class of error-correcting codes because of their sparse representation, efficient iterative decoding, and error-correction performance approaching the Shannon limit.

An LDPC code is defined by a sparse parity-check matrix, that is, a matrix whose entries are predominantly zero. More precisely, an $(N,\ell,k)$-regular LDPC code is a linear block code of length $N$ whose parity-check matrix $H$ has exactly $k$ ones in each row and exactly $\ell$ ones in each column. The sparsity of $H$ is one of the main features that makes iterative decoding computationally attractive. In contrast to classical linear block codes, the parity-check matrix of an LDPC code is not necessarily presented in a systematic or diagonal form. Nevertheless, encoding can be performed by solving the parity-check equations represented by $H$ and expressing the parity symbols as linear combinations of the information symbols.

The performance of an LDPC code is determined not only by its length, dimension, and rate, but also by the combinatorial structure of its parity-check matrix. In particular, the Tanner graph associated with $H$ plays a fundamental role in the analysis of iterative decoding. The Tanner graph is a bipartite graph whose variable nodes correspond to the code symbols and whose check nodes correspond to the parity-check equations. This representation provides a natural framework for studying structural properties such as cycles, girth, stopping sets, and degree distributions, as well as for implementing message-passing decoding algorithms such as belief propagation and the sum-product algorithm.

The girth of a Tanner graph is the length of its shortest cycle. Since Tanner graphs are bipartite, their girth is necessarily even. Iterative decoding is optimal on cycle-free Tanner graphs, and hence Tanner graphs with large girth are generally desirable \cite{5.01}. In particular, avoiding cycles of length four is an important design criterion, so that a girth of at least six is typically required in practical LDPC constructions \cite{1.1}. Short cycles introduce statistical dependencies among the messages exchanged during iterative decoding and can therefore degrade decoder performance. Over the binary erasure channel, decoding failures are additionally characterized by stopping sets, which are combinatorial structures of the Tanner graph.

It is important to emphasize that these graph-theoretic properties depend on the particular parity-check matrix used to represent the code. Indeed, a linear code may admit several distinct parity-check matrices,
${\EuScript C}=\ker H_1=\ker H_2=\cdots,$
and consequently several different Tanner graphs, denoted by $T_{H_1},T_{H_2},\ldots$. For example, adding redundant rows, that is, linear combinations of existing rows, to a parity-check matrix produces another parity-check matrix defining the same linear code, but its Tanner graph contains additional check nodes. Thus, properties such as the number of cycles, the girth, the degree distribution, and the behavior of iterative decoding algorithms are properties of a particular Tanner-graph representation rather than intrinsic properties of the abstract linear code. In particular, two parity-check matrices representing the same code may have Tanner graphs with different girths.

Another important issue in the practical implementation of LDPC codes is encoding. Although decoding has traditionally received greater attention, efficient encoding is essential for achieving good overall system performance \cite{4}. If $G$ is a generator matrix for an LDPC code, a codeword corresponding to a message $m$ is obtained as $c=mG$. However, the sparsity of the parity-check matrix $H$ does not generally imply sparsity of the generator matrix $G$. Consequently, direct encoding using $G$ may require considerable memory and computational resources, particularly for codes of large length. This motivates the search for structured parity-check matrices that allow efficient encoding while retaining the advantages of sparse iterative decoding.

Structured constructions of LDPC codes have therefore attracted considerable attention. Many classical constructions rely on algebraic or combinatorial techniques, while others are based on random procedures. Although random constructions can produce codes with desirable asymptotic properties, their lack of explicit structure may complicate the storage and manipulation of large parity-check matrices, as well as the analysis of their associated Tanner graphs. In contrast, algebraically structured constructions can provide compact descriptions and facilitate the study of parameters such as girth, minimum distance, encoding complexity, and decoding performance.

Lattice-theoretic methods provide another algebraic perspective for studying structured error-correcting codes. The ordered structure of lattices offers a natural framework for describing hierarchical relationships, algebraic symmetries, and decompositions that may arise in structured code constructions. In this context, lattice-theoretic concepts can complement graph-theoretic methods in the analysis of LDPC codes and provide additional algebraic tools for studying their underlying structures \cite{1}.

Fractal matrices constitute a class of structured matrices characterized by recursive or self-similar patterns generated through iterative construction rules. Their hierarchical organization makes it possible to construct large matrices from relatively simple building blocks while preserving structural properties such as sparsity and regularity. These features suggest a natural connection with LDPC codes, whose performance depends strongly on the combinatorial structure of sparse parity-check matrices. In particular, recursive matrix constructions can provide systematic ways of controlling row and column weights and, consequently, the degree distributions of the associated Tanner graphs. They may also provide compact descriptions of large parity-check matrices and facilitate the analysis of their graph-theoretic properties, including cycles and girth \cite{6}.

The matrices considered in this paper, which we call \emph{Fractus matrices}, form a particular class of fractal matrices. Their recursive and self-similar structure allows large matrices to be constructed from smaller ones in a systematic manner. This structure is particularly suitable for the construction of regular LDPC codes. In addition, it leads to recursive encoding procedures whose computational complexity is nearly linear in the code length. Decoding can be performed using standard iterative message-passing algorithms, whose complexity is proportional to the number of edges in the corresponding Tanner graph. Since the Tanner graphs considered here are sparse, this number grows linearly with the code length, resulting in efficient decoding for increasingly large codes.

The main purpose of this paper is to investigate the algebraic and combinatorial properties of Fractus matrices and their applications to LDPC codes. We show that the Tanner graphs arising from the proposed constructions have girth six and that the minimum Hamming distance of the resulting codes grows linearly with the code length. The corresponding codes also have competitive rates. Furthermore, we investigate the lattice structures induced by these matrices and establish algebraic properties that reveal a connection between Fractus matrices, lattice theory, and LDPC codes. These results provide a structured framework for constructing and analyzing families of LDPC codes while preserving explicit algebraic and recursive properties.

In classical LDPC coding, the encoding process is usually formulated as a linear algebra problem associated with the parity-check matrix. Given a parity-check matrix $H$, the information symbols are selected and the remaining symbols are determined by solving a system of linear equations. Although this approach is general, the encoding procedure does not necessarily exploit the particular structure of the parity-check matrix. In particular, for a large and sparse LDPC matrix, a direct implementation may require additional algebraic operations or a suitable decomposition of $H$.

The encoding procedure proposed for the Fractus codes is fundamentally different in this respect. The parity-check matrices of the Fractus codes possess a recursive and highly structured form, inherited from the recursive construction of the Fractus matrices. The proposed encoder explicitly exploits this structure rather than treating the parity-check matrix as an arbitrary sparse matrix. Through the successive fragmentations of the Fractus matrix, the encoding problem is reduced recursively to smaller systems, allowing the parity symbols to be determined from the information symbols by explicit linear transformations.

More precisely, the Fractus encoder is a systematic encoder of the form
$$S\longmapsto \bigl(P_2(S),P_1(S),S\bigr),$$
where $S$ contains the information symbols and $P_1(S)$ and $P_2(S)$ are obtained recursively from the Fractus structure. Thus, the information symbols appear explicitly in the codeword, while the parity symbols are computed from them without solving a new general linear system for each codeword.
The main advantage of this construction is therefore not merely that the Fractus codes admit a systematic encoding, since systematic encoding is standard for linear codes. Rather, the distinctive feature is that the systematic encoder is derived from the intrinsic recursive structure of the Fractus matrices. Consequently, the same structure that defines the parity-check matrices also provides a natural mechanism for encoding. This establishes a direct connection between the combinatorial structure of the Fractus matrices and the computational structure of their associated LDPC codes.\\
This article is organized as follows:The paper is organized as follows. Section 3 introduces the definition of Fractus codes and establishes their main properties. In particular, we study the algebraic and combinatorial properties arising from the underlying Fractus matrices and discuss their relevance to the construction of linear codes. Section 4 introduces the fragmentation of codewords and develops the corresponding structure induced by the fragmentation of Fractus matrices. Section 5 studies the lattices associated with Fractus matrices and codes, establishing the order-theoretic relationships induced by the corresponding operations and their interaction with the map $\rho$. Finally, Section 6 presents the main conclusions and outlines some possible directions for further research.
\section{Notation and preliminaries}
 Throughout this paper, we use the notation $C_n^m$ for the binomial coefficient $\binom{m}{n}$. We will frequently use Pascal's identity, which, in our notation, takes the form
\begin{equation}\label{HHYTR}
C_n^m=C_n^{m-1}+C_{n-1}^{m-1}.
\end{equation}
  Let $A=(a_{ij})$ be an $m\times n$ matrix. If every entry $a_{ij}\in \{0,1\}$, then $A$ is called a  $(0, 1)$-matrix. 
   We say that it is \textit{ sparse} if it has many zeros and few ones. It is  $(k, \ell)$-regular if it  is a $(0,1)$-matrix which has exactly $k$ ones in each row and exactly $\ell$ ones in each column; otherwise, the matrix is irregular, see \cite{0.1}, \cite{4.1}.
\begin{defn}\label{fed789865r}
 If $A=(a_{ij})$ be an ${m\times n}$-matrix, we define $A^F=(a_{n-j+1,m-i+1})$  an  ${n\times m}$-matrix and we call it the \textit{ flip-transpose matrix}. 
  \end{defn}
  Let $J_n$ be the $n\times n$ matrix with ones along its skew-diagonal and zeroes elsewhere, that is, $J_n = [e_n | e_{n-1} | \ldots | e_1]$. More over clearly 
\begin{equation}\label{rules000010101}
J_n^2=I_n\; and \; J_n^T=J_n^F=J_n.\\
\end{equation}
The following properties are easy to verify

\begin{equation}\label{rules00001}
\begin{aligned}
\text{(1)}\quad & A_{mn}^F=J_nA_{mn}^TJ_m,\\
\text{(2)}\quad & (A^F)^F=A,\\
\text{(3)}\quad & (A^T)^F=(A^F)^T,\\
\text{(4)}\quad & (A+B)^F=A^F+B^F,\\
\text{(5)}\quad & (AB)^F=B^FA^F.
\end{aligned}
\end{equation}
\begin{lem} \label{FTR43eDD}
Let $A$ be an $m\times n$ matrix. Then the rank of its flip-transpose satisfies
$\operatorname{rank}(A^F)=\operatorname{rank}(A).$
\end{lem}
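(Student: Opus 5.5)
The plan is to use property (1) of \eqref{rules00001}, which writes the flip-transpose as a product
\[
A^F = J_n A^T J_m ,
\]
where $A$ is $m\times n$, $J_n$ is the $n\times n$ skew-diagonal permutation matrix and $J_m$ is the $m\times m$ one. The proof then has three steps.

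\textbf{Step 1: the outer factors are invertible.} By \eqref{rules000010101} we have $J_n^2=I_n$ and $J_m^2=I_m$. So $J_n$ and $J_m$ are invertible, each being its own inverse.

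\textbf{Step 2: invertible factors do not change rank.} For any matrix $B$ and invertible matrices $P$ and $Q$ of compatible sizes, $\operatorname{rank}(PBQ)=\operatorname{rank}(B)$. The reason is the pair of inequalities
\[
\operatorname{rank}(PBQ)\le\operatorname{rank}(B),
\qquad
\operatorname{rank}(B)=\operatorname{rank}\bigl(P^{-1}(PBQ)Q^{-1}\bigr)\le \operatorname{rank}(PBQ),
\]
where the first holds because the rank of a product is at most the rank of each factor. Applying this with $P=J_n$, $B=A^T$ and $Q=J_m$ gives $\operatorname{rank}(A^F)=\operatorname{rank}(A^T)$.

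\textbf{Step 3: transposition preserves rank.} Row rank equals column rank, so $\operatorname{rank}(A^T)=\operatorname{rank}(A)$. This holds over any field, and in particular over $\mathbb{F}_2$, which is where the codes in this paper live.

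There is no real obstacle. The only point needing care is confirming the identity (1) from Definition~\ref{fed789865r}. The $(i,j)$ entry of $J_nA^TJ_m$ is $(A^T)_{n-i+1,\,m-j+1}=a_{m-j+1,\,n-i+1}$. This agrees with the definition of $A^F$ after reading its indices as (row index of $A^F$, column index of $A^F$) ranging over $n\times m$. The identity is already stated as property (1), so we may take it as given.

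A direct alternative avoids the product formula. $A^F$ is obtained from $A^T$ by reversing the order of its rows and reversing the order of its columns. Permuting rows or columns does not change the dimension of the row space or of the column space, so again $\operatorname{rank}(A^F)=\operatorname{rank}(A^T)=\operatorname{rank}(A)$.
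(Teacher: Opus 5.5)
Your proof is correct and follows the paper's own argument. The paper simply invokes $A^F=J_nA^TJ_m$ and concludes $\operatorname{rank}(A^F)=\operatorname{rank}(A^T)=\operatorname{rank}(A)$; you supply the omitted details (invertibility of $J_n,J_m$, rank invariance under invertible factors, row rank equals column rank). One small note: your entrywise formula $(A^F)_{ij}=a_{m-j+1,\,n-i+1}$ is the correct one. It does not literally match the formula $a_{n-j+1,\,m-i+1}$ in Definition~\ref{fed789865r} when $m\neq n$, which appears to be a typo in the paper, but this has no effect on the rank argument.
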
  
\begin{proof}
Considere la propiedad $A^F=J_nA^TJ_m$ then $$\operatorname{rank}(A^F)=\operatorname{rank}(A^T)=\operatorname{rank}(A).$$
\end{proof}  
  A \textit{block lower triangular matrix} (or block-lower triangular matrix) is a square matrix partitioned into submatrices (blocks) such that all blocks above the main block diagonal are zero matrices.  We write $A\unlhd B$ if $A$ is a submatrix of $B$.
 We will give some introductory definitions and background on linear codes. A good reference for additional information is \cite{0.1.0.2},  \cite{3.5.2} and \cite{5.1}.
A Galois field ${\mathbb F}=GF(q)$ is a finite field of cardinality $q=p^m$ with $p$ is a prime.
A linear $[n, k]$ block code ${\EuScript C}$, is a $k$-dimensional subspace of the $n$-vector space $V$ over ${\mathbb F}$. 
A generator matrix $G$ of an $(n, k)$ code ${\EuScript C}$ is a $k\times n$ matrix whose rows are linearly independent and 
satisfies for all $c\in {\EuScript C}$ we have to $c=uG$
 
\begin{defn}\label{Genertmtrix 0980}
Let  $H\in{\mathbb F}^{m\times n}$ be a sparse $(k, \ell)$-regular, parity-check matrix. The \textit{Low-Density Parity-Check} (LDPC) code defined by $H$ is the linear code
\begin{equation}\label{equ100001}
{\EuScript C}_k^{\ell}=\{ x\in{\mathbb F}^n | Hx^T=0 \} 
\end{equation}
In other words, a vector $x=(x_1,x_2,\ldots,x_n)$ is a codeword if and only if it satisfies all parity-check equations represented by the rows of $H$.
\end{defn}
Throughout this work, we also use the notation
\begin{equation}\label{equ1222221}
{\EuScript C}_k^{\ell}=\ker H,
\end{equation}
where ${\EuScript C}$ is the code defined in \eqref{equ100001}.
If $H$ has rank $R$, then the code has length $n$, dimension $k=n-R$ and rate 
\begin{equation}\label{FRED453}
r=\frac{k}{n}=1-\frac{R}{n}.
\end{equation}
When $H$ has full row rank, i.e., $R=m$, the dimension is $k=n-m$.
The defining feature of an LDPC code is that its parity-check matrix is sparse; that is, each row and each column contains only a small number of nonzero entries relative to the matrix dimensions. This sparsity enables efficient iterative decoding algorithms based on the associated Tanner graph.
The $m$ by $n$ matrix $H=[a_{ij}]$ is called the \textit{ parity-check matrix}.
First of all, for the encoding process, it is necessary to create a generator matrix G from the parity-check matrix H. 
The generator matrix $G$ can be obtained by applying Gaussian elimination and elementary row operations to the parity-check matrix $H$. If $H$ is in systematic form,
$H=(I_{n-k}\mid A)$, where $A$ is an $(n-k)\times k$ matrix, then the corresponding generator matrix is
\begin{equation}\label{equattwo}
G = (-A^T\mid I_k),
\end{equation}
where $I_k$  is the $k\times k$ identity matrix and $A^T$ denotes the transpose of $A$.
\begin{defn}\label{TYTR6y54}
The encoder is called systematic if the $K$ information symbols $S$ appear unchanged as $K$ coordinates of the codeword.
\end{defn}
A lattice ${\mathscr L}$ is a partially ordered set in which every pair $a, b\in {\mathscr L}$ of elements has a unique supremum or join  (least upper bound) $a\vee b$, 
i.e. $a\leq a\vee b$  and  $b\leq a\vee b$ also, if $u$ is any common upper bound of $a$ and $b$, then $a\vee b \leq u$.\\
Analogously to infimum or meet (greatest lower bound) $a\wedge b$ 
i.e. $a\wedge b \leq a$  and  $a\wedge b \leq b$ also, if $u$ is any common lower bound of $a$ and $b$, then $u \leq a\wedge b $.\\
And it satisfies the following properties:\\
 Commutativity: $a\vee b=b\vee a$ and $a\wedge b=b\wedge a$\\
 Associativity: $(a\vee b)\vee c=a\vee (b \vee c)$ and  $(a\wedge b)\wedge c=a\wedge (b\wedge c)$\\
 Idempotence: $a\vee a=a$ and $a\wedge a=a$ \\
 Absorption law: $a\vee (a\wedge b)=a$ and $a\wedge (a\vee b)=a$\\
These properties show that a lattice can be described equivalently either in terms of a partial order or in terms of the two binary operations $\vee$ and $\wedge$. In particular, the partial order can be recovered from the lattice operations by
$$a\leq b
\quad\Longleftrightarrow\quad
a\vee b=b
\quad\Longleftrightarrow\quad
a\wedge b=a.$$
For additional definitions and properties on this topic, see \cite{0.001}.\\
Consequently, the family of Fractus matrices $\mathscr{FM}$ inherits a natural lattice structure through its indices. More precisely, whenever the corresponding operations are defined within the family, the order relation between Fractus matrices is determined by their parameters:
$$H_k^{\ell}\leq H_{k^{\prime}}^{\ell^{\prime}}
\quad\Longleftrightarrow\quad
k\leq k^{\prime}\ \text{and}\ \ell\leq\ell^{\prime}.$$
Thus, the lattice structure organizes the family $\mathscr{FM}$ into a two-parameter hierarchy rather than as an arbitrary collection of binary matrices.

An important feature of Fractus matrices is that their lattice structure is accompanied by a recursive and self-similar structure. The matrices are generated recursively from matrices with smaller parameters, while their dimensions are determined by binomial coefficients. This recursive construction induces a hierarchical pattern that is reflected in both their combinatorial properties and their associated Tanner graphs.

The transpose operation also interacts naturally with this structure. In particular, the Fractus family is closed under the flip-transpose operation, in the sense that
$(H_k^\ell)^F=H_\ell^k,$
where $F$ denotes the corresponding flip-transpose operation. Hence,
${\mathscr FM}^F={\mathscr FM}.$\\
When the parity-check matrix is sparse and its dimensions grow according to the recursive Fractus construction, the resulting codes belong naturally to the family of low-density parity-check (LDPC) codes. The recursive structure of the Fractus matrices is particularly useful in this context. It provides a systematic way to construct parity-check matrices of increasing dimensions while preserving the underlying combinatorial pattern. Moreover, the block structure induced by the recursion can be exploited in the design of encoding and decoding procedures. In particular, suitable fragmentations of a Fractus matrix lead to block decompositions that can be used to obtain efficient encoding algorithms.
Thus, the lattice structure, the recursive construction, the flip-transpose symmetry, and the sparse block structure are not independent features. Together they provide the algebraic and combinatorial framework for the construction of Fractus LDPC codes. This viewpoint allows the parameters $(k,\ell)$ to serve simultaneously as indices of a lattice, as parameters of the underlying matrix family, and as structural parameters of the corresponding error-correcting codes.
\section{ Definition and Properties of Fractus Codes.} 
With the notation $C_{\ell}^m=\frac{m!}{\ell! (m-\ell)!}$  for the binomial coefficient, we have
\begin{defn}\label{definit000001}
Let ${\mathbb N}_0=\{0,1,2,\ldots\}$, and let ${\mathbb M}_{(0,1)}$ denote the set of all $(0,1)$-matrices over a field $\mathbb{F}$. We recursively define the $(k,\ell)$-\textit{Fractus matrices} as follows.
\begin{align*}
 \Psi: {\mathbb N}_0\times {\mathbb N}_0&\longrightarrow {\mathbb M}_{(0,1)}\\
                                                        (k, \ell)&\longmapsto \Psi(k, \ell)
  \end{align*}
 \begin{description}
\item[a)] $\Psi(0,0)= [0]$, \quad $ \Psi(k, 0)=[1]$, \quad $\Psi(0, \ell)=[1]$ is the $1\times 1$-matrix respetively.
\item [b)] $\Psi(k,1)=(1,1,\ldots, 1)$ the $1\times k$-matrix and 
$\Psi(1, \ell)=\left[\begin{smallmatrix}
1\\
\vdots \\
1 
\end{smallmatrix}\right]$ the $\ell \times 1$-matrix
\item[c)] If $k, \ell \geq 2$ then we recursively define 
\begin{equation}\label{JUHG456}
\Psi(k, \ell)=\left[ \begin{array}{c|c} \Psi(k, \ell-1) & 0 \\ \hline I_{C_{\ell-1}^{k+\ell -2}} & \Psi(k-1, \ell) \end{array} \right]
\end{equation}
\end{description}
where  $\Psi(k, \ell)$ is a  matrix of size $C_{\ell-1}^{k+\ell-1}\times C_{\ell}^{k+\ell-1}$, $0$ denote the zero matrix
of size $C_{\ell-2}^{k+\ell-2}\times C_{\ell}^{k+\ell-2}$ and  $I_{C_{\ell-1}^{k+\ell -2}}$ is the identity matrix of size $C_{\ell-1}^{k+\ell -2}\times C_{\ell-1}^{k+\ell -2}$.
\end{defn}
 In this article, we use the notation
\begin{equation}\label{GAT543k}
H_k^{\ell}:=\Psi(k,\ell).
\end{equation}
Using this notation, we can restate the definition \ref{definit000001}.
\begin{defn}\label{def56430001}
For all  $(k, \ell)\in {\mathbb N}_0\times {\mathbb N}_0$ , we define the $(k, \ell)$-\textit{fractus matrices}
\begin{description}
\item[a)] $H_0^0= [0]$, \quad $H_k^0=[1]$, \quad $H_0^{\ell}=[1]$ is the $1\times 1$-matrix respetively.
\item [b)] $H_k^1=(1, \ldots, 1)$ the $1\times k$-matrix and 
$H_1^{\ell}=\left[\begin{smallmatrix}
1\\
\vdots \\
1 
\end{smallmatrix}\right]$ the $\ell \times 1$-matrix
\item[c)] For $k, \ell \geq 2$, $H_k^{\ell}$ is a $(k, \ell)$-regular matrix of size  $C_{\ell-1}^{k+\ell-1}\times C_{\ell}^{k+\ell-1}$
\begin{equation}\label{m4tr1x67532}
H_k^{\ell}=  \left[ \begin{tabular}{c|c}
 $H_k^{\ell-1}$ & $ 0 $ \cr\hline $I_{C_{\ell-1}^{k+\ell -2}}$  & $H_{k-1}^{\ell}$
\end{tabular} \right] ,
\end{equation}
where $H_k^{\ell-1}$ is a $(k, \ell-1)$-regular submatrix of size $C_{\ell-2}^{k+\ell-2}\times C_{\ell-1}^{k+\ell-2}$, $H_{k-1}^{\ell}$ is a $(k-1, \ell)$-regular matrix of size $C_{\ell-1}^{k+\ell-2}\times C_{\ell}^{k+\ell-2}$,
$I_{C_{\ell-1}^{k+\ell -2}}$ denotes the identity matrix of order $C_{\ell-1}^{k+\ell -2}\times C_{\ell-1}^{k+\ell -2}$ and  0 denotes the zero matrix of size  $C_{\ell-2}^{k+\ell-2}\times C_{\ell}^{k+\ell-2}$.
\end{description}
\end{defn}
\begin{example}\label{gssto432}
In these examples, we illustrate the structure of $(k, \ell)$-Fractus matrices. 
\begin{enumerate}
\item
$H_2^2= \left[\begin{tabular}{c|c}
$H_2^1$ & $0$ \cr\hline $I_2$  & $H_1^2$
\end{tabular} \right]$
$= \left[ \begin{tabular}{cc|c}
 1 &1& 0 \cr\hline
 1 & 0 & 1\\
 0 & 1 & 1 \\
\end{tabular} \right]$ 
\newline
\newline
\item
$ H_3^2= \left[\begin{tabular}{c|c}
 $H_3^1$ & $ 0 $ \cr\hline $I_3$  & $H_2^2$
\end{tabular} \right]$
$= \left[\begin{tabular}{ccc|ccc}
1 & 1 & 1 & 0 & 0 & 0  \cr\hline
1 & 0 & 0 & 1 & 1 & 0 \\
0 & 1 & 0 & 1 & 0 & 1 \\
0 & 0 & 1 & 0 & 1 & 1
\end{tabular} \right]$\\
\newline
\newline
\item
$H_2^3= \left[ \begin{tabular}{c|c}
 $H_2^2$ & $ 0 $ \cr\hline $I_3$  & $H_1^3$
\end{tabular} \right] $
$= \left[ \begin{tabular}{ccc|c}
1 & 1 &  0 & 0 \\
1 & 0 & 1 &  0 \\
0 & 1 & 1 & 0  \cr\hline
1 & 0 & 0 & 1 \\
0 & 1 & 0 & 1 \\
0 & 0 & 1 & 1  \\
\end{tabular} \right]$\\
\newline
\newline
\item
$ H_3^3= \left[\begin{tabular}{c|c}
 $H_3^2$ & $ 0 $ \cr\hline $I_6$  & $H_2^3$
\end{tabular} \right]=$
$\left[\begin{tabular}{cccccc|cccc}
1 & 1 & 1 & 0 & 0 & 0 & 0 & 0 & 0& 0 \\
1 & 0 & 0 & 1 & 1 & 0 & 0 & 0 & 0& 0 \\
0 & 1 & 0 & 1 & 0 & 1 & 0 & 0 & 0& 0\\
0 & 0 & 1 & 0 & 1 & 1 & 0 & 0 & 0& 0\cr\hline
1 & 0 & 0 & 0 & 0 & 0 & 1 & 1 & 0& 0\\
0 & 1 & 0 & 0 & 0 & 0 & 1 & 0 & 1& 0\\
0 & 0 & 1 & 0 & 0 & 0 & 0 & 1 & 1& 0\\
0 & 0 & 0 & 1 & 0 & 0 & 1 & 0 & 0 & 1\\
0 & 0 & 0 & 0 & 1 & 0 & 0 & 1 & 0 & 1\\
0 & 0 & 0 & 0 & 0 & 1 & 0 & 0 & 1 & 1\\
\end{tabular} \right]$\\
\newline
\newline
\item
$H_4^2= \left[ \begin{tabular}{c|c}
 $H_4^1$ & $ 0 $ \cr\hline $I_4$  & $H_3^2$
\end{tabular} \right]$
$=\left[ \begin{tabular}{cccc|cccccc}
1 & 1 & 1 & 1 & 0 & 0 & 0 & 0 & 0 & 0\cr\hline
1 &    0     &  0 & 0 & 1 & 1 &  1 & 0 & 0 & 0\\
        0 & 1 & 0 & 0 & 1 & 0 &  0 & 1 & 1 &  0\\
        0 & 0 & 1 & 0 & 0 & 1 &  0 & 1 & 0 &  1\\
0 & 0 & 0 & 1 & 0 & 0 &  1 & 0 & 1 &  1
\end{tabular} \right]$
\newline
\newline\\
\item
$H_3^3= \left[ \begin{tabular}{c|c}
 $H_3^2$ & $ 0 $ \cr\hline $I_6$  & $H_2^3$
\end{tabular} \right]
= \left[ \begin{tabular}{cccccc|cccc}
1 & 1 & 1 & 0 & 0 & 0 & 0 & 0 & 0& 0 \\
1 & 0 & 0 & 1 & 1 & 0 & 0 & 0 & 0& 0 \\
0 & 1 & 0 & 1 & 0 & 1 & 0 & 0 & 0& 0\\
0 & 0 & 1 & 0 & 1 & 1 & 0 & 0 & 0& 0\cr\hline
1 & 0 & 0 & 0 & 0 & 0 & 1 & 1 & 0& 0\\
0 & 1 & 0 & 0 & 0 & 0 & 1 & 0 & 1& 0\\
0 & 0 & 1 & 0 & 0 & 0 & 0 & 1 & 1& 0\\
0 & 0 & 0 & 1 & 0 & 0 & 1 & 0 & 0 & 1\\
0 & 0 & 0 & 0 & 1 & 0 & 0 & 1 & 0 & 1\\
0 & 0 & 0 & 0 & 0 & 1 & 0 & 0 & 1 & 1\\
\end{tabular} \right]$ 
\end{enumerate}
\end{example}
\begin{defn}\label{Akl4327}
Let $k, \ell\geq 2$  be  integers, let $H_k^{\ell}$ be a $(k,\ell)$-fractus matrix,  we defined by $(k, \ell)$-\textit{Fractus Codes}
\begin{enumerate}
\item ${\EuScript C}_k^{\ell}=0$ \quad if $(k, \ell)\in \{ (k, 0), (0, \ell)\}$ \quad where $k, \ell \geq 1$,
\item ${\EuScript C}_k^{\ell}=\{X\in {\mathbb F}^{C_{\ell}^{k+\ell-1}} : H_k^{\ell}X^T=0^T\}$ \quad if $k, \ell \geq 2$.
\end{enumerate}
 \end{defn}
  Thus,\\ 
  ${\EuScript C}_k^{\ell}$ is a linear block code of length $N=C_{\ell}^{k+\ell-1}$ if $k, \ell \geq 2$,\\
  ${\EuScript C}_k^{\ell}$ is a linear block code of length $N=0$  if  $(k, \ell)\in \{ (k, 0), (0, \ell)\}$ where $k, \ell \geq 1$ \\
  and $H_k^{\ell}$ is its $(k, \ell)$-\textit{parity-check fractus matrix}.
\begin{lem}\label{regul44556}
If $k, \ell \geq 2$, then ${\EuScript C}_k^{\ell}$ is $(k, \ell)$-regular code.
\end{lem}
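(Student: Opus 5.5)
The plan is to prove the stronger statement that the parity-check matrix $H_k^{\ell}$ is a $(k,\ell)$-regular $(0,1)$-matrix of size $C_{\ell-1}^{k+\ell-1}\times C_{\ell}^{k+\ell-1}$ for all $k,\ell\geq 1$. The regularity of the code ${\EuScript C}_k^{\ell}$ is by definition the regularity of $H_k^{\ell}$, so the lemma follows. The argument is an induction on $k+\ell$, based on the recursive description \eqref{m4tr1x67532}. The cases with $k=1$ or $\ell=1$ serve as the base. The degenerate matrices $H_k^0$, $H_0^{\ell}$, $H_0^0$ never occur in the recursion once $k,\ell\geq 2$, since the recursion only calls $H_k^{\ell-1}$ and $H_{k-1}^{\ell}$, whose indices are $\geq 1$.

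\textbf{Base cases.} $H_k^1=(1,\ldots,1)$ is $1\times k$. It has $k$ ones in its single row and one $1$ in each column, so it is $(k,1)$-regular. Its size is $C_0^{k}\times C_1^{k}=1\times k$, as required. Dually, $H_1^{\ell}$ is the $\ell\times 1$ all-ones column. It is $(1,\ell)$-regular, with size $C_{\ell-1}^{\ell}\times C_{\ell}^{\ell}=\ell\times 1$.

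\textbf{Inductive step.} Let $k,\ell\geq 2$, and assume the claim for $H_k^{\ell-1}$ and $H_{k-1}^{\ell}$. The sizes of these two matrices are $C_{\ell-2}^{k+\ell-2}\times C_{\ell-1}^{k+\ell-2}$ and $C_{\ell-1}^{k+\ell-2}\times C_{\ell}^{k+\ell-2}$.

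First I check that the block matrix \eqref{m4tr1x67532} is well defined. The identity block $I_{C_{\ell-1}^{k+\ell-2}}$ has as many columns as $H_k^{\ell-1}$ and as many rows as $H_{k-1}^{\ell}$, so the blocks fit. By Pascal's identity \eqref{HHYTR}, the total number of rows is
$$C_{\ell-2}^{k+\ell-2}+C_{\ell-1}^{k+\ell-2}=C_{\ell-1}^{k+\ell-1},$$
and the total number of columns is
$$C_{\ell-1}^{k+\ell-2}+C_{\ell}^{k+\ell-2}=C_{\ell}^{k+\ell-1}.$$

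Next I count ones in each row and column.
\begin{itemize}
\item A row in the top block meets $H_k^{\ell-1}$ (contributing $k$ ones) and the zero block, so it has $k$ ones.
\item A row in the bottom block meets the identity (contributing exactly one $1$) and $H_{k-1}^{\ell}$ (contributing $k-1$ ones), so it has $k$ ones.
\item A column in the left block meets $H_k^{\ell-1}$ (contributing $\ell-1$ ones) and the identity (contributing one $1$), so it has $\ell$ ones.
\item A column in the right block meets the zero block and $H_{k-1}^{\ell}$ (contributing $\ell$ ones), so it has $\ell$ ones.
\end{itemize}
Hence $H_k^{\ell}$ is $(k,\ell)$-regular.

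There is no real obstacle. The only care needed is the bookkeeping of block sizes via Pascal's identity, together with the observation that the boundary cases $k=2$ or $\ell=2$ reduce to the base matrices $H_1^{\ell}$ and $H_k^1$, which are correctly regular.
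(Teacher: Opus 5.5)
Your proof is correct and takes the same route as the paper. Both argue by induction through the block recursion \eqref{m4tr1x67532}, using that $H_k^{\ell-1}$ and $H_{k-1}^{\ell}$ are regular and that the identity block adds exactly one $1$ to each bottom row and each left column. Your version is tidier: you induct on $k+\ell$ from the base cases $H_k^1$ and $H_1^{\ell}$, and you write out the row/column counts and the Pascal-identity size check that the paper leaves as ``immediate.''
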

\begin{proof}
The proof is by induction on $k$ and $\ell$.\\
Clearly $H_2^2$ is $(2, 2)$-regular.
Now suppose that for each $2\leq k^{\prime}<k$ and for each $2\leq \ell^{\prime}<\ell$ we have that $H_{k^{\prime}}^2$ is $(k^{\prime}, 2)$-regular and
$H^{\ell^{\prime}}_2$ is $(2, \ell^{\prime})$-regular then clearly
$ H_k^2 = \left[ \begin{tabular}{c|c}
 $H_k^1$ & $ 0 $ \cr\hline $I_k$  & $H_{k-1}^2$
\end{tabular} \right]$
is a $(k, 2)$-regular analogously
$ H_2^{\ell} = \left[ \begin{tabular}{c|c}
 $H_2^{\ell-1}$ & $ 0 $ \cr\hline $I_{\ell-1}$  & $H_1^{\ell}$
\end{tabular} \right]$
is a $(2, \ell)$-regular.\\
The induction hypothesis states that for all $2\leq k^{\prime}<k$ and for all $2\leq \ell^{\prime}<\ell$,
the matrix $H_k^{\ell^{\prime}}$ is $(k, \ell^{\prime})$-regular and   $H_{k^{\prime}}^{\ell}$ is $(k^{\prime}, \ell)$-regular. \\
Then, by definition
\begin{center}
$ H_k^{\ell} = \left[ \begin{tabular}{c|c}
 $H_k^{\ell-1}$ & $ 0 $ \cr\hline $I_{C_{\ell-1}^{k+\ell -2}}$  & $H_{k-1}^{\ell}$
\end{tabular} \right].$
\end{center}
By the induction hypothesis, $H_k^{\ell-1}$ is $(k, \ell-1)$-regular and $H_{k-1}^{\ell}$ is $(k-1, \ell)$-regular. Therefore, it follows immediately that 
$H_k^{\ell}$ is $(k, \ell)$-regular
\end{proof}
\begin{lem}\label{sp4r539999}
For all $k,\ell\geq 2$, the Fractus code ${\EuScript C}_k^{\ell}$ is an $(N,\ell,k)$-LDPC code.
\end{lem}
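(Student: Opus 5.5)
The plan is to combine the regularity established in Lemma \ref{regul44556} with a direct count showing that $H_k^{\ell}$ is sparse. Recall that an $(N,\ell,k)$-regular LDPC code is a code of length $N$ defined by a parity-check matrix with exactly $k$ ones in each row and exactly $\ell$ ones in each column. So there are three things to check: the length, the row and column weights, and the low density.

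\textbf{Length.} By Definition \ref{Akl4327}, ${\EuScript C}_k^{\ell}=\ker H_k^{\ell}$. By Definition \ref{def56430001}, $H_k^{\ell}$ has size
\[
M\times N \quad\text{with}\quad M=C_{\ell-1}^{k+\ell-1},\qquad N=C_{\ell}^{k+\ell-1}.
\]
Hence the code has length $N$. I will first confirm that these sizes are consistent with the block decomposition (\ref{m4tr1x67532}). This follows from Pascal's identity (\ref{HHYTR}):
\[
C_{\ell-1}^{k+\ell-1}=C_{\ell-2}^{k+\ell-2}+C_{\ell-1}^{k+\ell-2},\qquad C_{\ell}^{k+\ell-1}=C_{\ell-1}^{k+\ell-2}+C_{\ell}^{k+\ell-2}.
\]
The row count is the sum of the row counts of the two block rows, and the column count is the sum of the column counts of the two block columns.

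\textbf{Regularity.} Lemma \ref{regul44556} gives exactly $k$ ones per row and $\ell$ ones per column. In its inductive step, the top block row is $[H_k^{\ell-1}\mid 0]$, whose rows each contain $k$ ones. The bottom block row is $[I\mid H_{k-1}^{\ell}]$, whose rows each contain $1+(k-1)=k$ ones. On the column side, the left columns carry $(\ell-1)+1$ ones and the right columns carry $0+\ell$ ones. This is the needed weight structure.

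\textbf{Sparsity.} The total number of ones is $kM=\ell N$. The double count is consistent, since
\[
kC_{\ell-1}^{k+\ell-1}=\ell C_{\ell}^{k+\ell-1}=\frac{(k+\ell-1)!}{(\ell-1)!(k-1)!}.
\]
The density is therefore
\[
\frac{kM}{MN}=\frac{k}{N}=\frac{\ell}{M}.
\]
For fixed $k$ and $\ell$, the row and column weights stay constant. Meanwhile $N=C_{\ell}^{k+\ell-1}$ grows polynomially in either parameter when the other is fixed, and grows exponentially along the diagonal. So the proportion of nonzero entries tends to zero, and each row and column has weight small relative to the dimensions. Together with regularity, this means $H_k^{\ell}$ is a sparse $(k,\ell)$-regular parity-check matrix in the sense of Definition \ref{Genertmtrix 0980}, so ${\EuScript C}_k^{\ell}$ is an $(N,\ell,k)$-LDPC code.

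\textbf{Expected obstacle.} The main difficulty is that ``sparse'' is only informal in the paper. For small cases such as $H_2^2$ (density $2/3$) the density is not actually low. I would therefore state the sparsity claim asymptotically, using the bound $k/N\to 0$ as $k+\ell\to\infty$, and treat the finite cases as members of the family rather than arguing that each one is individually sparse.
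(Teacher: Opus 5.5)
Your proposal is correct and follows essentially the same route as the paper: it takes the $(k,\ell)$-regularity from Lemma~\ref{regul44556}, double-counts the ones to get density $k/C_{\ell}^{k+\ell-1}=\ell/C_{\ell-1}^{k+\ell-1}$, and concludes sparsity because this density tends to $0$. Your additions are sound refinements rather than a different argument: the Pascal-identity check of the block sizes, and the remark that sparsity is only asymptotic (e.g.\ density $2/3$ for $H_2^2$). The uniform bound $k/C_{\ell}^{k+\ell-1}=(k+\ell)/\binom{k+\ell}{k}\le 2/(k+\ell-1)$ makes that remark precise.
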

\begin{proof}
By construction (see Definition  \ref{definit000001}),  $H_k^{\ell}$ is a $C_{\ell-1}^{k+\ell-1}\times C_{\ell}^{k+\ell-1}$-matrix  and is 
$(k, \ell)$-regular. Hence the density of ones in the matrix is 
$$\frac{kC_{\ell-1}^{k+\ell-1}}{C_{\ell-1}^{k+\ell-1}\times C_{\ell}^{k+\ell-1}}=\frac{\ell C_{\ell}^{k+\ell-1}}{C_{\ell-1}^{k+\ell-1}\times C_{\ell}^{k+\ell-1}}\leq 1$$ then
$$\lim_{k \to \infty} \frac{k}{C_{\ell}^{k+\ell-1}}=\lim_{\ell \to \infty}\frac{\ell }{C_{\ell-1}^{k+\ell-1}}=0$$ 
Thus, as shown in the proof of Proposition \ref{regul44556}, the matrix $H_k^{\ell}$ is sparse and $(k,\ell)$-regular. Consequently, ${\EuScript C}_k^{\ell}$ is an $(N,\ell,k)$-LDPC code.
\end{proof}
\begin{defn}\label{obsyyt65}
Let $k,\ell\geq 2$ be integers, let $0\leq r\leq \min\{k-1,\ell-1\},$ and let $H_k^{\ell}$ be a $(k,\ell)$-fractus matrix. For $1\leq r\leq \min\{k-1,\ell-1\},$ we recursively define  
the \textit{$r$-fragmentation}  of $H_k^\ell$,  denoted by $H_k^{\ell}(r)$, 
\begin{itemize}
\item $H_k^{\ell}(0)=H_k^{\ell}$,
\item
$H_k^{\ell}(r)=
\left[
\begin{array}{c|c}
H_k^{\ell-1}(r-1) & 0 \\ \hline
I_{C_{\ell-1}^{k+\ell-2}} & H_{k-1}^{\ell}(r-1)
\end{array}
\right].$
\end{itemize}
\end{defn}
\begin{example}\label{GTR543hu}
In this example, we illustrate the first four fragmentations of a $(k,\ell)$-Fractus matrices.
\begin{equation*}\label{r87}
H_{k}^{\ell}(0)=H_{k}^{\ell}
\end{equation*}
\begin{equation*}\label{redre67987}
H_{k}^{\ell}(1)= \left[ \begin{tabular}{c|c}
 $H_k^{\ell-1}(0)$ & $ 0 $ \cr\hline $I_{C_{\ell-1}^{k+\ell-2}}$  & $H_{k-1}^{\ell}(0)$
\end{tabular} \right]
= \left[ \begin{tabular}{c|c}
 $H_k^{\ell-1}$ & $ 0 $ \cr\hline $I_{C_{\ell-1}^{k+\ell-2}}$  & $H_{k-1}^{\ell}$
\end{tabular} \right]
\end{equation*}
\begin{equation*}\label{redre6754}
H_{k}^{\ell}(2)= \left[ \begin{tabular}{c|c}
 $H_k^{\ell-1}(1)$ & $ 0 $ \cr\hline $I_{C_{\ell-1}^{k+\ell-2}}$  & $H_{k-1}^{\ell}(1)$
\end{tabular} \right]= 
\left[ \begin{tabular}{c|c|c|c}
 $H_k^{\ell-2}$ & $0$ & $0$ & $0$ 
\cr\hline $I_{C_{\ell-2}^{k+\ell-3}}$ & $H_{k-1}^{\ell-1}$ & $0$ & $0$
 \cr\hline $I_{C_{\ell-2}^{k+\ell-3}}$ & $0$ & $H_{k-1}^{\ell-1}$ & $0$ 
\cr\hline $0$ & $I_{C_{\ell-1}^{k+\ell-3}}$ & $I_{C_{\ell-1}^{k+\ell-3}}$ & $H_{k-2}^{\ell}$
\end{tabular} \right]
\end{equation*}
\begin{equation*}\label{redreyyuy}
H_{k}^{\ell}(3)= \left[ \begin{tabular}{c|c|c|c|c|c|c|c}
 $H_k^{\ell-3}$ & $0$ & $0$ & $0$ & $0$ & $0$ & $0$ & $0$
\cr\hline $I_{C_{\ell-3}^{k+\ell-4}}$ & $H_{k-1}^{\ell-2}$ & $0$ & $0$ & $0$ & $0$ & $0$ & $0$
 \cr\hline $I_{C_{\ell-3}^{k+\ell-4}}$ & $0$ & $H_{k-1}^{\ell-2}$ & $0$ & $0$ & $0$ & $0$ & $0$
\cr\hline $0$ & $I_{C_{\ell-2}^{k+\ell-4}}$ & $I_{C_{\ell-2}^{k+\ell-4}}$ & $H_{k-2}^{\ell-1}$ & $0$ & $0$ & $0$ & $0$
\cr\hline $I_{C_{\ell-3}^{k+\ell-4}}$ & $0$ & $0$ & $0$ & $H_{k-1}^{\ell-2}$ & $0$ & $0$ & $0$
\cr\hline $0$ & $I_{C_{\ell-2}^{k+\ell-4}}$ & $0$ & $0$ & $I_{C_{\ell-2}^{k+\ell-4}}$ & $H_{k-2}^{\ell-1}$ & $0$ & $0$
\cr\hline $0$ & $0$ & $I_{C_{\ell-2}^{k+\ell-4}}$ & $0$ & $I_{C_{\ell-2}^{k+\ell-4}}$ & $0$ & $H_{k-2}^{\ell-1}$ & $0$
\cr\hline $0$ & $0$ & $0$ & $I_{C_{\ell-1}^{k+\ell-4}}$ & $0$ & $I_{C_{\ell-1}^{k+\ell-4}}$ & $I_{C_{\ell-1}^{k+\ell-4}}$ & $H_{k-3}^{\ell}$
\end{tabular} \right]
\end{equation*}
\end{example}
\bigskip
\subsection{Systematic linear encoder induced by the Fractus structure.}\label{resoleq2312} 
Let $k, \ell\geq 2$  
 the objective of this section first is to solve the matrix system.
\begin{equation}\label{Carm765101}
H_k^{\ell}X^T=Y^T
\end{equation}
 for a given vector $Y\in {\mathbb F}^{C_{\ell-1}^{k+\ell-1}}$. We denote by
\begin{equation}\label{Carm765}   
{\mathcal S}_{(k, \ell, Y)}=\{X\in {\mathbb F}^{C_{\ell}^{k+\ell-1}}:  H_k^{\ell} X^T=Y^T\} 
\end{equation}
the solution set. 
By item (2) of first fragmentation see Example \ref{tygffd3u76}, we can write the system $H_k^{\ell}X^T=Y^T$ in block-matrix form
\begin{equation}\label{rrted23145}
  \left[ \begin{tabular}{c|c}
 $H_k^{\ell - 1}$ & $ 0 $ \cr\hline $I_{C_{\ell -1}^{k+\ell -2}}$  & $H_{k-1}^{\ell}$
\end{tabular} \right] \begin{bmatrix} 
 X_1^T \\
X_2^T 
\end{bmatrix}
=
\begin{bmatrix} 
 Y_1^T \\
Y_2^T
\end{bmatrix}
\end{equation}
We can always write 
\begin{equation}\label{ABmatrix323}
H_{k-1}^{\ell}=B\sqcup A
\end{equation} 
  where $B$ be a submatrix of order $C^{k+\ell-2}_{\ell-1}\times C^{k+\ell-2}_{\ell-2}$ and  $A$  a submatrix of order  $C^{k+\ell-2}_{\ell-1}\times (C^{k+\ell-2}_{\ell}-C^{k+\ell-2}_{\ell-2})$
  of $H_{k-1}^{\ell}$, recall that $H_k^{\ell}$ is a matrix of size $C^{k+\ell-1}_{\ell-1}\times C^{k+\ell-1}_{\ell}$ where the symbol $\sqcup$ denotes the concatenation of the submatrices $B$ and $A$ to form $H_k^{\ell}$. Then we have
  \begin{equation}\label{Yatr543}
  H_k^{\ell}=\left[ \begin{tabular}{c|c|c}
                   $H_k^{\ell-1}$ & $0$ & $0$ 
\cr\hline $ I_{C^{k+\ell-2}_{\ell-1}}$ & $B$ &  $A$ 
 \end{tabular} \right]
 \end{equation}
\begin{thm}\label{DfDfFr43765}
Let $k,\ell\geq 2$ be integers 
then ${\mathcal S}_{(k,\ell,Y)}$ is the set of all $X=(P_2,P_1,S)$ such that
\begin{enumerate}
\item If the vector $Y=(Y_1, Y_2)\neq 0$. Then  
\begin{equation}\label{soluciones7777}
\begin{aligned}
P_2^T&=Y_2^T-BP_1^T-AS^T \\
P_1^T&=\phi^{-1}\left[Y_1^T-H_k^{\ell-1}Y_2^T+(H_k^{\ell-1}A)S^T\right] \\
            S&\in{\mathbb F}^{C^{k+\ell-2}_{\ell}-C^{k+\ell-2}_{\ell-2}}.
\end{aligned}
\end{equation}
\item If the vector $Y=0$. Then 
\begin{equation}\label{soluciones8888}
\begin{aligned}
P_2^T&=-BP_1^T-AS^T \\
P_1^T&=\phi^{-1}(H_k^{\ell-1}A)S^T \\
            S&\in{\mathbb F}^{C^{k+\ell-2}_{\ell}-C^{k+\ell-2}_{\ell-2}}.
\end{aligned}
\end{equation}
\end{enumerate}
\end{thm}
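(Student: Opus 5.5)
The plan is direct block elimination on the partitioned system \eqref{rrted23145}, using the column splitting \eqref{Yatr543}. Write $X=(P_2,P_1,S)$. The column blocks of \eqref{Yatr543} have widths $C^{k+\ell-2}_{\ell-1}$, $C^{k+\ell-2}_{\ell-2}$ and $C^{k+\ell-2}_{\ell}-C^{k+\ell-2}_{\ell-2}$. So $P_2$ plays the role of $X_1$, and $(P_1,S)$ is the splitting of $X_2$ induced by $H_{k-1}^{\ell}=B\sqcup A$. The sizes match because $C^{k+\ell-2}_{\ell-1}+C^{k+\ell-2}_{\ell}=C^{k+\ell-1}_{\ell}$ by Pascal's identity \eqref{HHYTR}. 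Write $Y=(Y_1,Y_2)$ according to the row blocks. Then $H_k^{\ell}X^T=Y^T$ is equivalent to the pair
$$H_k^{\ell-1}P_2^T=Y_1^T,\qquad P_2^T+BP_1^T+AS^T=Y_2^T.$$

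First, solve the second (lower) block equation for $P_2$. Because the identity block $I_{C^{k+\ell-2}_{\ell-1}}$ multiplies $P_2$, this gives $P_2^T=Y_2^T-BP_1^T-AS^T$ with no conditions, which is the first line of \eqref{soluciones7777}.

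Second, substitute this into the upper block equation. This gives
$$-\bigl(H_k^{\ell-1}B\bigr)P_1^T=Y_1^T-H_k^{\ell-1}Y_2^T+\bigl(H_k^{\ell-1}A\bigr)S^T .$$
Here $H_k^{\ell-1}B$ is square of order $C^{k+\ell-2}_{\ell-2}$, since $H_k^{\ell-1}$ is $C^{k+\ell-2}_{\ell-2}\times C^{k+\ell-2}_{\ell-1}$ and $B$ is $C^{k+\ell-2}_{\ell-1}\times C^{k+\ell-2}_{\ell-2}$. I would therefore take $\phi$ to be the linear map $P_1^T\mapsto -H_k^{\ell-1}BP_1^T$; over $GF(2)$ the sign is immaterial. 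Inverting $\phi$ gives the second line of \eqref{soluciones7777}, and $S$ remains a free parameter.

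Third, check the converse. Any triple $(P_2,P_1,S)$ defined by these formulas satisfies both block equations, by reversing the substitution. Hence the described set is exactly ${\mathcal S}_{(k,\ell,Y)}$. Case (2) is then the specialization $Y_1=Y_2=0$ of case (1), which gives \eqref{soluciones8888}. Since the derivation never used $Y\neq 0$, the two cases need no separate argument.

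The main obstacle is justifying that $\phi$ is invertible, that is, that the $C^{k+\ell-2}_{\ell-1}\times C^{k+\ell-2}_{\ell-2}$ block $B$ can be chosen so that $H_k^{\ell-1}B$ is nonsingular. This requires care. In $H_2^2$ over $GF(2)$, for instance, the rows sum to zero, so $H_k^{\ell}$ need not have full row rank. If $\phi$ were invertible for every $Y$, the system would be solvable for all $Y$, which forces full row rank.

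My first attempt is to choose the columns of $B$ recursively using the fragmentation of $H_{k-1}^{\ell}$ (Definition \ref{obsyyt65}), so that $H_k^{\ell-1}B$ becomes block lower triangular with identity diagonal blocks coming from the $I_{C}$ blocks. Failing that, I would state the result under the hypothesis that $\phi$ is invertible, or over a field where this holds. In that setting the theorem reduces exactly to the elimination carried out in the first three steps, and I would flag the consistency condition on $Y_1$ for the degenerate characteristic.
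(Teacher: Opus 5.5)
Your proposal is correct and is essentially the paper's own argument. The paper left-multiplies the system by the invertible matrix $\left[\begin{smallmatrix}0 & I\\ I & -H_k^{\ell-1}\end{smallmatrix}\right]$, which is exactly your substitution of the lower block equation into the upper one, and it likewise just \emph{assumes} $\phi=-H_k^{\ell-1}B$ invertible, so stating the theorem under that hypothesis matches what the paper proves.

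Your doubt about $\phi$ is justified and sharper than the paper's remark that column permutations remove any singularity. Invertibility of $\phi$ forces full row rank of $H_k^{\ell}$, so no choice of $B$ can work otherwise. In characteristic $2$, Corollary~\ref{wsewq435} gives $H_k^{\ell-1}H_{k-1}^{\ell}=0$, hence $\phi=0$ for every column selection $B$, so your recursive construction cannot succeed there. For $k<\ell$, $B$ would need more columns than $H_{k-1}^{\ell}$ has, since $C^{k+\ell-2}_{\ell-2}>C^{k+\ell-2}_{\ell}$.
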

\begin{proof}
We consider a solution of  type $X=(P_2, P_1, S)$, with 
 $P_2\in{\mathbb F}^{C^{k+\ell-2}_{\ell-1}}$, $P_1\in{\mathbb F}^{C^{k+\ell-2}_{\ell-2}}$,  $S\in{\mathbb F}^{C^{k+\ell-2}_{\ell}-C^{k+\ell-2}_{\ell-2}}$ and  $Y^T=\begin{bmatrix} 
 Y_1^T \\
Y_2^T
\end{bmatrix}$
 be a known column vector of size  $C^{k+\ell-1}_{\ell-1}$ with $Y_1$ of size $1 \times C^{k+\ell-2}_{\ell-2}$ and $Y_2$ of size $1 \times C^{k+\ell-2}_{\ell-1}$. Then of \ref{Yatr543} we have
{\tiny \begin{multline*}
  \begin{bmatrix} 
0\;\; &\;\;  I_{C^{k+\ell-2}_{\ell-1}}  \\
I_{C^{k+\ell-2}_{\ell-2}} & -H_k^{\ell-1}  
\end{bmatrix}
\begin{bmatrix}
H_k^{\ell-1}\;\; & 0\;\; & 0 \;\; \\
 I_{C^{k+\ell-2}_{\ell-1}}\;\;& B\;\; &  A \;\;
\end{bmatrix}
\begin{bmatrix} 
 P_2^T \\
P_1^T \\
S^T
\end{bmatrix}
$=$
  \begin{bmatrix} 
0\;\; &\;\;  I_{C^{k+\ell-2}_{\ell-1}}  \\
I_{C^{k+\ell-2}_{\ell-2}} & -H_k^{\ell-1}  
\end{bmatrix}
\begin{bmatrix} 
 Y_1^T \\
Y_2^T
\end{bmatrix}
\end{multline*}}
{\tiny \begin{multline}
\begin{bmatrix}\label{Fiell5654}
 I_{C^{k+\ell-2}_{\ell-1}}\;\;& B\;\; &  A \;\;\\
0\;\; & -H_k^{\ell-1}B\;\; & -H_k^{\ell-1} A \;\; 
\end{bmatrix}
\begin{bmatrix} 
 P_2^T \\
P_1^T \\
S^T
\end{bmatrix}
$=$
\begin{bmatrix} 
 Y_2^T \\
Y_1^T-H_k^{\ell-1}Y_2^T
\end{bmatrix}
\end{multline}}
gives us $$P_2^T+BP_1^T+AS^T=Y_2^T$$ and  $$-(H_k^{\ell-1}B)P_1^T-(H_k^{\ell-1}A)S^T =Y_1^T-H_k^{\ell-1}Y_2$$ 
Let 
\begin{equation}\label{invM4trx1010}
\phi:=-H_k^{\ell-1}B
\end{equation}\label{solut656543}
 the matrix of order $C^{k+\ell-2}_{\ell-2}\times C^{k+\ell-2}_{\ell-2}$ and suppose  that it is invertible, if it is not invertible, we can simply perform further column permutations to remove this singularity, see section II of  \cite{ 4}.  So now we have,  for a vector $S$ whose values were chosen:
\begin{align*}
P_1^T&=\phi^{-1}\left[Y_1^T-H_k^{\ell-1}Y_2^T+(H_k^{\ell-1}A)S^T\right] \\
P_2^T&=Y_2^T-BP_1^T-AS^T \\
            S&\in{\mathbb F}^{C^{k+\ell-2}_{\ell}-C^{k+\ell-2}_{\ell-2}}.
\end{align*}
then we have $(P_2^T, P_1^T, S)\in {\mathcal S}_{(k, \ell, Y)}$.\\ 
If, in equation  \ref{Carm765101}, $Y=0$ then 
\begin{equation}\label{casez340}
\begin{aligned} 
P_1^T&=\phi^{-1}(H_k^{\ell-1}A)S^T\\
P_2^T&=-BP_1^T-AS^T\\
         S&\in{\mathbb F}^{C^{k+\ell-2}_{\ell}-C^{k+\ell-2}_{\ell-2}}
\end{aligned}
\end{equation}
and  $X=(P_2, P_1, S)\in {\mathcal S}_{(k, \ell, \overline{0})}$ where
${\mathcal S}_{(k, \ell, 0)}=\{X:  H_k^{\ell}X^T=0^T\}$ denote  the solution space of the homogeneous system $H_k^{\ell} X^T=0^T$
\end{proof}
The following definition is based on theorem \ref{DfDfFr43765}.
\begin{defn}\label{HYU654}
Let $k,\ell \geq 2$ be integers, and let $A$ and $B$ be matrices such that
$$
H_{k-1}^{\ell}=B\sqcup A,
$$
with $\phi=-H_k^{\ell-1}B$ an invertible matrix.

For a given $Y=(Y_1,Y_2)\in {\mathbb F}^{C_{\ell-1}^{k+\ell-1}}$, we define the function
\begin{equation}\label{Uuuyt65X8}
\begin{aligned}
\Psi_{k, \ell,Y}: {\mathbb F}^{C^{k+\ell-2}_{\ell}-C^{k+\ell-2}_{\ell-2}}&\longrightarrow {\mathcal S}_{(k, \ell, Y)}\\
                                                                                        S&\longmapsto (P_2(S), P_1(S), S) 
\end{aligned}
\end{equation}
where
\begin{equation}\label{Vvvyt65X8}
\begin{aligned}
P_2^T(S)&=Y_2^T-BP_1^T-AS^T \\
P_1^T(S)&=\phi^{-1}\left[Y_1^T-H_k^{\ell-1}Y_2^T+(H_k^{\ell-1}A)S^T\right] 
\end{aligned}
\end{equation}
\end{defn}
\begin{prop}\label{DfDfFr43}
Let $k,\ell\geq 2$ be integers, and let $\Psi_{k,\ell,Y}$ be the map defined in \ref{Uuuyt65X8}. Then:
\begin{enumerate}
\item $\Psi_{k,\ell,Y}$ is an injective function.
\item If $\operatorname{rank}(H_k^{\ell})=C_{\ell-1}^{k+\ell-1}$, then 
$$\Psi_{k, \ell,\overline{0}}: {\mathbb F}^{C^{k+\ell-2}_{\ell}-C^{k+\ell-2}_{\ell-2}}\longrightarrow {\EuScript C}_k^{\ell}$$ 
is an isomorphism.
\end{enumerate}
\end{prop}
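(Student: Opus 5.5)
Injectivity is immediate because $S$ is kept as the last block of the image. If $\Psi_{k,\ell,Y}(S)=\Psi_{k,\ell,Y}(S')$, then the triples $(P_2(S),P_1(S),S)$ and $(P_2(S'),P_1(S'),S')$ coincide. Comparing the last block of coordinates, of length $C^{k+\ell-2}_{\ell}-C^{k+\ell-2}_{\ell-2}$, gives $S=S'$. This argument uses nothing about $\phi$ beyond the well-definedness of the map, which Definition~\ref{HYU654} guarantees. The map does land in ${\mathcal S}_{(k,\ell,Y)}$: Theorem~\ref{DfDfFr43765} shows that every triple built by \eqref{Vvvyt65X8} satisfies both block equations of \eqref{Yatr543}.

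For item (2), take $Y=\overline{0}$. By \eqref{casez340}, $P_1^T(S)=\phi^{-1}(H_k^{\ell-1}A)S^T$ and $P_2^T(S)=-BP_1^T(S)-AS^T$. Both are linear in $S$, so $\Psi_{k,\ell,\overline{0}}$ is a linear map into ${\mathcal S}_{(k,\ell,0)}={\EuScript C}_k^{\ell}$, and by item (1) it is an injective linear map. It remains to show surjectivity, and I would do this in two ways.

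The first is a dimension count. The domain has dimension $C^{k+\ell-2}_{\ell}-C^{k+\ell-2}_{\ell-2}$. The rank hypothesis and \eqref{FRED453} give
\[
\dim{\EuScript C}_k^{\ell}=C^{k+\ell-1}_{\ell}-C^{k+\ell-1}_{\ell-1}.
\]
Expand both binomials with Pascal's identity \eqref{HHYTR}: $C^{k+\ell-1}_{\ell}=C^{k+\ell-2}_{\ell}+C^{k+\ell-2}_{\ell-1}$ and $C^{k+\ell-1}_{\ell-1}=C^{k+\ell-2}_{\ell-1}+C^{k+\ell-2}_{\ell-2}$. Subtracting gives exactly $C^{k+\ell-2}_{\ell}-C^{k+\ell-2}_{\ell-2}$. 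So the two spaces have equal finite dimension, and an injective linear map between them is an isomorphism.

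The second is direct and serves as a check. Let $X=(P_2,P_1,S)\in{\EuScript C}_k^{\ell}$. Reading the homogeneous case of \eqref{Fiell5654} backwards, the invertibility of $\phi$ forces $P_1$ to equal the formula in \eqref{casez340}, and then the first block row forces $P_2$ to equal its formula too. Hence $X=\Psi_{k,\ell,\overline{0}}(S)$. This argument needs the premultiplying block matrix in the proof of Theorem~\ref{DfDfFr43765} to be invertible. It is, being block-triangular up to a permutation of the blocks with identity blocks, so the transformed system is equivalent to the original one.

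The main subtlety I expect is bookkeeping rather than mathematics. One must make sure that $\phi$ is square of order $C^{k+\ell-2}_{\ell-2}$ and invertible, as Definition~\ref{HYU654} assumes, and that the coordinate split $(P_2,P_1,S)$ has the block sizes written in the theorem. With those in place both arguments go through, and I would present the dimension count as the main proof.
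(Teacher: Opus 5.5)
Your proposal is correct, and your main argument (injectivity from the unchanged $S$-block, linearity of $\Psi_{k,\ell,\overline{0}}$, and the Pascal-identity count $C^{k+\ell-1}_{\ell}-C^{k+\ell-1}_{\ell-1}=C^{k+\ell-2}_{\ell}-C^{k+\ell-2}_{\ell-2}$, so that an injective linear map between spaces of equal dimension is an isomorphism) is exactly the paper's proof. Your supplementary direct surjectivity check is also sound, and it shows in passing that once $\phi$ is invertible the rank hypothesis in item (2) holds automatically.
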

\begin{proof}
For item (1), by Theorem \ref{DfDfFr43765}, every solution vector is of the form
$X=(P_2(S),P_1(S),S),$
where
$S\in{\mathbb F}^{C_{\ell}^{k+\ell-2}-C_{\ell-2}^{k+\ell-2}}.$

Suppose that
$\Psi_{k,\ell,Y}(S_1)
=(P_2(S_1),P_1(S_1),S_1)
=(P_2(S_2),P_1(S_2),S_2)
=\Psi_{k,\ell,Y}(S_2).$
Equality of the third coordinates immediately implies that
$S_1=S_2.$
Therefore, $\Psi_{k,\ell,Y}$ is injective.
 For item (2), suppose that $Y=\overline{0}$. Then
\begin{align*}
P_2^T(S)&=-BP_1^T-AS^T,\\
P_1^T(S)&=\phi^{-1}(H_k^{\ell-1}A)S^T.
\end{align*}
Substituting $P_1(S)$ into the expression for $P_2(S)$, we obtain
\begin{equation}\label{GHTR87hg}
\begin{aligned}
P_2^T(S)
&=-BP_1^T-AS^T\\
&=-B\phi^{-1}(H_k^{\ell-1}A)S^T-AS^T\\
&=-\bigl[B\phi^{-1}(H_k^{\ell-1}A)+A\bigr]S^T.
\end{aligned}
\end{equation}
Clearly, by item (1), $\Psi_{k,\ell,\overline{0}}$ is a monomorphism.
By the  definition \ref{Carm765}, we have ${\EuScript C}_k^{\ell}\cong {\mathcal S}_{(k, \ell, \overline{0})}$
\begin{align*}
\dim {\EuScript C}_k^{\ell}
&=C_{\ell}^{k+\ell-1}-C_{\ell-1}^{k+\ell-1}\\
&=(C_{\ell}^{k+\ell-2}+C_{\ell-1}^{k+\ell-2})-(C_{\ell-1}^{k+\ell-2}+C_{\ell-2}^{k+\ell-2})\\
&=C_{\ell}^{k+\ell-2}-C_{\ell-2}^{k+\ell-2}\\
&=\dim {\mathbb F}^{C_{\ell}^{k+\ell-2}-C_{\ell-2}^{k+\ell-2}}.
\end{align*}
Thus, $\Psi_{k,\ell,\overline{0}}$ is an injective linear map between vector spaces of the same dimension. Therefore, it is an isomorphism.
\end{proof}

\begin{cor}\label{YTE786jh} 
For a given $S\in {\mathbb F}^{C_{\ell}^{k+\ell-2}-C_{\ell-2}^{k+\ell-2}}$, $Y\in {\mathbb F}^{C_{\ell-1}^{k+\ell-1}}$ and  $A_{k}^{\ell}=A_{k}^{\ell}(0)$ in initial form  (see Definition \ref{obs36754}). 
Then the computational complexity of $\Psi_{k,\ell,Y}(S)$ is given by $$O(C_{\ell-1}^{k+\ell-1})+O((C_{\ell-2}^{k+\ell-1})^2).$$
\end{cor}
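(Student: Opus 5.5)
We read off the cost of $\Psi_{k,\ell,Y}(S)$ directly from the formulas in Definition \ref{HYU654}. We count field operations for each elementary step in the order the encoder performs them, and we use the sparsity of every matrix except $\phi^{-1}$. Write $N=C_\ell^{k+\ell-1}$, $M=C_{\ell-1}^{k+\ell-1}$ and $m_1=C_{\ell-2}^{k+\ell-2}$. The key sparsity facts come from Lemma \ref{regul44556}. $H_k^{\ell-1}$ is $(k,\ell-1)$-regular, so a matrix--vector product with it costs $O(k\cdot m_1)$. $H_{k-1}^{\ell}=B\sqcup A$ is $(k-1,\ell)$-regular, so products with $B$ and with $A$ cost at most $O(\ell\cdot(\text{number of columns}))$, hence at most the number of ones in $H_{k-1}^\ell$. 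All of these are bounded by the total number of ones in $H_k^\ell$, which is $kM=\ell N$. Treating $k,\ell$ as the fixed regularity parameters, as in the intended LDPC regime, each sparse product is $O(M)$. The matrix $\phi=-H_k^{\ell-1}B$, of order $m_1\times m_1$, is fixed once and for all. Its inverse is precomputed and stored as a possibly dense matrix, and it is not part of the per-codeword cost.

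The steps are as follows. First, compute $u=Y_1^T-H_k^{\ell-1}Y_2^T$; this is one sparse product plus a vector addition, $O(M)$. Second, compute $w=H_k^{\ell-1}(AS^T)$. We evaluate it as two successive sparse products, never forming $H_k^{\ell-1}A$ explicitly, so it costs $O(M)$. Third, compute $P_1^T=\phi^{-1}(u+w)$. This is a dense $m_1\times m_1$ matrix times a vector, costing $O(m_1^2)$. Fourth, compute $P_2^T=Y_2^T-BP_1^T-AS^T$; $AS^T$ is already available, and $BP_1^T$ is a sparse product, so this costs $O(M)$. Summing gives $O(M)+O(m_1^2)$. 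Since $m_1=C_{\ell-2}^{k+\ell-2}\le C_{\ell-2}^{k+\ell-1}$ by Pascal's identity \eqref{HHYTR}, we get $O(m_1^2)\subseteq O((C_{\ell-2}^{k+\ell-1})^2)$, which yields the stated bound. If the stated bound is meant with the finer size, it holds a fortiori.

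The main obstacle is justifying that only the dense step $\phi^{-1}$ contributes quadratically. Two points need care. First, one must not precompute $H_k^{\ell-1}A$ or $B\phi^{-1}H_k^{\ell-1}A$ as dense matrices, since that would introduce an $O(m_1\cdot\dim S)$ term. Second, the "initial form" hypothesis $A_k^\ell=A_k^\ell(0)$ must be used to guarantee that the column split $B\sqcup A$ is the natural one. Under that split $\phi$ is invertible without extra column permutations, so the precomputation of $\phi^{-1}$, which costs $O(m_1^3)$ once, is legitimately excluded. I would first check on small cases such as $H_3^3$ that with this split $\phi$ is invertible over $\mathbb F$, and then state the count as above.
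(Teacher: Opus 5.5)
Your step-by-step tally is the paper's own proof. Table I itemizes exactly your operations: $AS^T$, $H_k^{\ell-1}Y_2^T$, $H_k^{\ell-1}(AS^T)$, one product by a precomputed dense $\phi^{-1}$, $BP_1^T$, and the final addition. Your remarks that $\phi$ is only $C_{\ell-2}^{k+\ell-2}\times C_{\ell-2}^{k+\ell-2}$ and that $H_k^{\ell-1}A$ must never be formed densely are correct.

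The step that does not hold up is ``treating $k,\ell$ as fixed, so each sparse product is $O(M)$''. In this family $N=C_\ell^{k+\ell-1}$ and $M=C_{\ell-1}^{k+\ell-1}$ are functions of $(k,\ell)$ alone. With $k,\ell$ fixed, every cost is $O(1)$ and the statement says nothing. Each sparse product costs at most the $kM=\ell N$ ones of $H_k^\ell$, as you yourself note, so what your count honestly yields is $O(kM)+O(m_1^2)$.

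\begin{itemize}
\item For $\ell\ge 4$ the stated bound does follow. Since $M=\frac{k+1}{\ell-1}C_{\ell-2}^{k+\ell-1}$ and $C_{\ell-2}^{k+\ell-1}\ge C_2^{k+3}$ there, you get $kM\le (C_{\ell-2}^{k+\ell-1})^2$.
\item For $\ell=2$ it cannot hold uniformly in $k$. The bound reads $O(k)$, while $S$ alone has $C_2^k-1$ entries, all of which enter $AS^T$, and the output has length $C_2^{k+1}$. The case $\ell=3$ fails similarly.
\end{itemize}

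The paper's table has the same looseness: it lists the sparse steps as $O(C_\ell^{k+\ell-1})=O(N)$ and then reports the total with $C_{\ell-1}^{k+\ell-1}$. State the restriction on $\ell$, or your convention on constants, explicitly rather than hiding it in a ``fixed parameters'' clause.

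Your reading of the hypothesis $A_k^\ell=A_k^\ell(0)$ is also off. ``Initial form'' only means that you encode from the $0$-fragmentation, i.e.\ with one application of the block recursion. This is as opposed to the recursive encoder of Proposition~\ref{TGRE432}, whose cost is given in Remark~\ref{comp45565}. It says nothing about which columns form $B$.

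Moreover, the natural split does not make $\phi$ invertible. For $(k,\ell)=(4,3)$, take $B$ to be the first five columns of $H_3^3$. Then $H_4^2B=2Q$, where $Q$ has rows $r_1=(1,1,1,1,1)$, $r_2=(1,1,1,0,0)$, $r_3=(1,0,0,1,1)$, $r_4=(0,1,0,1,0)$, $r_5=(0,0,1,0,1)$. These satisfy $r_4+r_5=2r_1-r_2-r_3$, so $\phi$ is singular over every field. In characteristic $2$, in fact, $\phi=0$ for every choice of columns, by Corollary~\ref{wsewq435}. Your planned sanity check on $H_3^3$ would mislead you, because that matrix is square and $S$ is empty.

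None of this damages the operation count. Any column choice keeps $B$ and $A$ as column-submatrices of the sparse $H_{k-1}^{\ell}$. Simply adopt the paper's standing assumption that $\phi$ is invertible after a column permutation, with $\phi^{-1}$ precomputed once.
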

\begin{proof}
The table shows the computational complexity of calculating $\Psi_{k,\ell,Y}(S)$ for a given $S\in {\mathbb F}^{C_{\ell}^{k+\ell-2}-C_{\ell-2}^{k+\ell-2}}$,
$Y\in {\mathbb F}^{C_{\ell-1}^{k+\ell-1}}$ and  
$A_{k}^{\ell}$ in initial form  (see Definition \ref{obs36754}). 
\begin{center}
\begin{equation*}\label{table00001}
\begin{tabular}{| c | c | c |}
\hline
 $\Psi_{k,\ell,Y}(S)$ & {\bf TABLE}\quad I:  & \\
\hline
{\bf Operation}  & {\bf Comment} & {\bf Complexity}\\
 \hline
$AS^T$  & M‑SparseM & $O(C_{\ell}^{k+\ell-1})$\\
\hline
$ A_k^{\ell-1}Y_2^T$  & M‑SparseM & $O(C_{\ell}^{k+\ell-1})$\\
\hline
$A_k^{\ell-1}(AS^T)$  &  M‑SparseM & $O(C_{\ell}^{k+\ell-1})$\\
\hline
$P_1^T=\phi^{-1}[Y_1^T-A_k^{\ell-1}Y_2^T+(A_k^{\ell-1}A)S^T]$  & M‑SparseM & $O((C_{\ell-2}^{k+\ell-1})^2)$\\ 
\hline
$BP_1^T$    & M‑SparseM & $O(C_{\ell}^{k+\ell-1})$\\
\hline
$P_2^T=Y_2^T-BP_1^T-AS^T$   & Addition & $O(C_{\ell-1}^{k+\ell-1})$\\ 
\hline
\end{tabular}
\end{equation*}
\end{center}
\end{proof}
%

  
  




Using Definition \ref{HYU654}, we have the following proposition.
\begin{prop}
Let $k, \ell \geq 2$,  then 
\begin{align*}
\Psi_{k, \ell,\overline{0}}: {\mathbb F}^{C^{k+\ell-2}_{\ell}-C^{k+\ell-2}_{\ell-2}}&\longrightarrow {\EuScript C}_k^{\ell}\\
\Psi_{k,\ell,\overline{0}}(S)
&=S
\begin{bmatrix}
-\left[B\phi^{-1}(H_k^{\ell-1}A)+A\right]\\[2mm]
(\phi^{-1}H_k^{\ell-1}A)\\[2mm]
I_{C^{k+\ell-2}_{\ell}-C^{k+\ell-2}_{\ell-2}}
\end{bmatrix}^T.
\end{align*}                  
 is a systematic linear encoder.
\end{prop}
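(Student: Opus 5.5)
The plan is to combine the explicit formulas from Theorem \ref{DfDfFr43765} (case $Y=0$) with equation \eqref{GHTR87hg} from the proof of Proposition \ref{DfDfFr43}. Throughout, we keep the standing assumption of Definition \ref{HYU654} that $\phi=-H_k^{\ell-1}B$ is invertible, after a column permutation if necessary. We also take the isomorphism statement of Proposition \ref{DfDfFr43}(2) under its full-rank hypothesis, so that $\Psi_{k,\ell,\overline{0}}$ lands in, and is onto, ${\EuScript C}_k^{\ell}$.

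\textbf{Step 1: linearity.} Write
\[
M_2:=-\bigl[B\phi^{-1}(H_k^{\ell-1}A)+A\bigr],\qquad M_1:=\phi^{-1}H_k^{\ell-1}A .
\]
When $Y=\overline{0}$, Definition \ref{HYU654} gives $P_1^T(S)=M_1S^T$, and \eqref{GHTR87hg} gives $P_2^T(S)=M_2S^T$. Both are linear in $S$, so $\Psi_{k,\ell,\overline{0}}$ is linear.

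\textbf{Step 2: matrix form.} Stack the three blocks:
\[
\Psi_{k,\ell,\overline{0}}(S)^T=\begin{bmatrix}M_2\\ M_1\\ I\end{bmatrix}S^T,
\]
where $I=I_{C^{k+\ell-2}_{\ell}-C^{k+\ell-2}_{\ell-2}}$. Transposing yields exactly the displayed formula $S\,[\,M_2;\,M_1;\,I\,]^T$. Hence the generator matrix is $G=[\,M_2^T\mid M_1^T\mid I\,]$.

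\textbf{Step 3: encoder property.} We must check that the image lies in ${\EuScript C}_k^{\ell}$, i.e.\ that $H_k^{\ell}G^T=0$. For this, use the block decomposition \eqref{Yatr543}.

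The first block row gives $H_k^{\ell-1}M_2$. This should vanish, and it is the main obstacle, because the theorem's derivation only imposed the premultiplied system \eqref{Fiell5654}. The transformation matrix $\left[\begin{smallmatrix}0&I\\ I&-H_k^{\ell-1}\end{smallmatrix}\right]$ there is invertible: it is block anti-triangular with identity blocks, and its inverse is $\left[\begin{smallmatrix}H_k^{\ell-1}&I\\ I&0\end{smallmatrix}\right]$. So the transformed system is equivalent to the original one. The direct verification is
\[
H_k^{\ell-1}M_2=-H_k^{\ell-1}B\phi^{-1}H_k^{\ell-1}A-H_k^{\ell-1}A=\phi\phi^{-1}H_k^{\ell-1}A-H_k^{\ell-1}A=0 .
\]
The second block row gives $M_2+BM_1+A=0$, which holds by the definition of $M_2$.

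\textbf{Step 4: systematic and one-to-one.} Injectivity follows from Proposition \ref{DfDfFr43}(1). In the language of Definition \ref{TYTR6y54}, the encoder is systematic because the last block of $G$ is the identity, so $S$ appears unchanged as the last coordinates of the codeword. Combined with the dimension count in Proposition \ref{DfDfFr43}(2), every codeword arises this way, so the map is a systematic linear encoder onto ${\EuScript C}_k^{\ell}$.
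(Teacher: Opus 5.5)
Your proof is correct and follows essentially the paper's route: take the homogeneous formulas from Theorem~\ref{DfDfFr43765}, substitute $P_1$ into $P_2$, stack the blocks, and invoke the rank hypothesis for the dimension count. Your direct check $H_k^{\ell}G^T=0$ (using $\phi=-H_k^{\ell-1}B$ and the invertibility of the row transformation) is a welcome tightening of what the paper leaves implicit in that theorem's derivation. One small improvement: the full-rank hypothesis that you and the paper both import is actually automatic once $\phi$ is invertible, since $\left[\begin{smallmatrix} I & B\\ 0 & \phi\end{smallmatrix}\right]$ is then a nonsingular square block of the matrix obtained from $H_k^{\ell}$ by that invertible row transformation, so $\operatorname{rank}H_k^{\ell}=C_{\ell-1}^{k+\ell-1}$.
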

\begin{proof}
By the rank assumption, the dimension of $S_{(k,\ell,\overline{0})}$ is $C_\ell^{k+\ell-2}-C_{\ell-2}^{k+\ell-2},$
and equation \ref{casez340} shows that every codeword is uniquely determined by $S$.
By equation \ref{casez340} we have 
\begin{align*}
P_2^T(S)&=-BP_1^T-AS^T,\\
P_1^T(S)&=\phi^{-1}(H_k^{\ell-1}A)S^T,\\
S&\in{\mathbb F}^{C^{k+\ell-2}_{\ell}-C^{k+\ell-2}_{\ell-2}}.
\end{align*}
Substituting $P_1^T(S)$ into $P_2^T(S)$, we obtain
$$P_2^T(S)=-[B\phi^{-1}(H_k^{\ell-1}A)+A]S^T$$
Moreover, for $k,\ell \geq 2$, given the matrices $B$, $A$, $\phi$, and $H_k^{\ell-1}$ remain unchanged.
Therefore 
\begin{align*}
\Psi_{k,\ell,\overline{0}}(S)
&=S
\begin{bmatrix}
-\left[B\phi^{-1}(H_k^{\ell-1}A)+A\right]\\[2mm]
(\phi^{-1}H_k^{\ell-1}A)\\[2mm]
I_{C^{k+\ell-2}_{\ell}-C^{k+\ell-2}_{\ell-2}}
\end{bmatrix}^T
\end{align*}
\end{proof}
\begin{defn}\label{76ukl4327}
Let $k, \ell \geq 2$. Then 
\begin{align*}
\Psi_{k, \ell,\overline{0}}: {\mathbb F}^{C^{k+\ell-2}_{\ell}-C^{k+\ell-2}_{\ell-2}}&\longrightarrow {\EuScript C}_k^{\ell}\\
\Psi_{k,\ell,\overline{0}}(S)
&=S
\begin{bmatrix}
-\left[B\phi^{-1}(H_k^{\ell-1}A)+A\right]\\[2mm]
\phi^{-1}(H_k^{\ell-1}A)\\[2mm]
I_{C^{k+\ell-2}_{\ell}-C^{k+\ell-2}_{\ell-2}}
\end{bmatrix}^T.
\end{align*}                  
We call this encoder the \textit{Fractus encoder}.\\
We define the matrix
$$G_k^{\ell}= \begin{bmatrix}
-\left[B\phi^{-1}(H_k^{\ell-1}A)+A\right]\\[2mm]
\phi^{-1}(H_k^{\ell-1}A)\\[2mm]
I_{C^{k+\ell-2}_{\ell}-C^{k+\ell-2}_{\ell-2}}
\end{bmatrix}$$  
as the \emph{generator matrix} of the $(k,\ell)$-Fractus code. In particular, the \emph{Fractus encoder} is given by
$\Psi_{k,\ell,\overline {0}}(S)=S(G_k^{\ell})^T.$
Since the last block of $G_k^{\ell}$ is $I_K$, the generator matrix is systematic.
\end{defn}
By Corollary \ref{YTE786jh}, the computational complexity of the encoding is given by
\begin{equation}\label{RREt543}
O\left(C_{\ell-1}^{k+\ell-1}\right)
+
O\left(\left(C_{\ell-2}^{k+\ell-1}\right)^2\right).
\end{equation}
\subsubsection{An efficient algorithm for the encoding process}
\begin{center}
\hrulefill\\
{\bf Algorithm 1}\label{algor453}\\
\hrulefill\\
\end{center}
 {\bf Input}:    $S\in {\mathbb F}^{C_{\ell}^{k+\ell -2}-C_{\ell -2}^{k+\ell -2}}$\\
  {\bf Output}:  $(P_2, P_1, S)\in {\EuScript C}_k^{\ell}$\\ 
  
1) Compute $P_2^T=-\left[B\phi^{-1}(H_k^{\ell-1}A)+A\right]S^T$\\ 

2) Compute $P_1^T=\phi^{-1}(H_k^{\ell-1}A)S^T$ \\

3) Return $X=(P_2, P_1, S)$\\





\hrulefill\\

The following table shows the computational complexity where $P_2^T=-BP_1^T-AS^T$ is as 
in \ref{casez340} and $\operatorname{rank}(H_k^{\ell})=C_{\ell-1}^{k+\ell-1}$
\bigskip
\begin{equation*}\label{table00001}
\begin{tabular}{| c | c | c |}
\hline
$k\geq 2$,  $\ell\geq 2$  & \textbf{TABLE}\;   & \\
\hline
\textbf{Operation}  & \textbf{Comment} & \textbf{Complexity}\\
 \hline
$AS^T$  & Multiplication by sparse matrix & $O(C_{\ell}^{k+\ell-1})$\\
\hline
$H_k^{\ell-1}(AS^T)$  &  Multiplication by sparse matrix & $O(C_{\ell}^{k+\ell-1})$\\
\hline
$P_1^T=\phi^{-1}(H_k^{\ell-1}(AS^T))$  & Multiplication by "dense" matrix & $O((C_{\ell-2}^{k+\ell-1})^2)$\\ 
\hline
$BP_1^T$    & Multiplication by sparse matrix & $O(C_{\ell}^{k+\ell-1})$\\
\hline
$P_2^T=-BP_1^T-AS^T$   & Addition & $O((C_{\ell-1}^{k+\ell-1}))$\\ 
\hline
\end{tabular}\\
\end{equation*}
\bigskip
\begin{prop}\label{TGRE432}
Let $k, \ell\geq 2$  be  integers.\\  
Let $S=(S_1,S_2)\in {\mathbb F}^{C_{\ell -1}^{k+\ell -3}-C_{\ell -3}^{k+\ell -3}}\oplus {\mathbb F}^{C_{\ell }^{k+\ell -3}-C_{\ell -2}^{k+\ell -3}}$.
\begin{equation}\label{Tgr433}
\Psi_{k, \ell, \overline{0}}(S)=(\Psi_{k, \ell-1, \overline{0}}(S_1), \Psi_{k-1, \ell, \Psi_{k, \ell-1, \overline{0}}(S_1)}(S_2))
\end{equation}
\end{prop}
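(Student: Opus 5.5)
The plan is to prove \eqref{Tgr433} directly from the first fragmentation \eqref{m4tr1x67532}. Split a vector $X\in{\mathbb F}^{C_\ell^{k+\ell-1}}$ as $X=(X_1,X_2)$, with $X_1$ of length $C_{\ell-1}^{k+\ell-2}$ and $X_2$ of length $C_{\ell}^{k+\ell-2}$. The block form of $H_k^{\ell}$ then shows that $H_k^\ell X^T=0^T$ is equivalent to the two conditions
\[
H_k^{\ell-1}X_1^T=0^T,\qquad H_{k-1}^{\ell}X_2^T=-X_1^T .
\]
Thus $X_1\in{\EuScript C}_k^{\ell-1}$ and $X_2\in{\mathcal S}_{(k-1,\ell,-X_1)}$. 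Over $GF(2)$ the sign disappears, and this is how I read the subscript $\Psi_{k,\ell-1,\overline 0}(S_1)$ in the statement. Over a general field I would keep $-X_1$ as the right-hand side.

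The steps, in order, are as follows.

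\textbf{Step 1: dimension bookkeeping.} Check that $S_1$ has exactly $\dim{\EuScript C}_k^{\ell-1}=C_{\ell-1}^{k+\ell-3}-C_{\ell-3}^{k+\ell-3}$ coordinates. Check that $S_2$ has exactly the number of free coordinates of Theorem \ref{DfDfFr43765} applied to $(k-1,\ell)$, namely $C_{\ell}^{k+\ell-3}-C_{\ell-2}^{k+\ell-3}$. Finally, check that the vector output by $\Psi_{k,\ell-1,\overline0}$ has length $C_{\ell-1}^{k+\ell-2}$, which equals the number of rows of $H_{k-1}^{\ell}$, so it is an admissible right-hand side. Two applications of Pascal's identity \eqref{HHYTR} give
\[
\bigl(C_{\ell-1}^{k+\ell-3}-C_{\ell-3}^{k+\ell-3}\bigr)+\bigl(C_{\ell}^{k+\ell-3}-C_{\ell-2}^{k+\ell-3}\bigr)=C_\ell^{k+\ell-2}-C_{\ell-2}^{k+\ell-2},
\]
so $S=(S_1,S_2)$ ranges over exactly the domain of $\Psi_{k,\ell,\overline0}$.

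\textbf{Step 2: the right-hand side is a codeword.} Set $X_1=\Psi_{k,\ell-1,\overline0}(S_1)$. By Proposition \ref{DfDfFr43} this lies in ${\EuScript C}_k^{\ell-1}$. Set $X_2=\Psi_{k-1,\ell,X_1}(S_2)$. By Theorem \ref{DfDfFr43765}, $X_2$ lies in ${\mathcal S}_{(k-1,\ell,X_1)}$. The block equivalence above then puts $(X_1,X_2)$ in ${\EuScript C}_k^\ell$. The right-hand side of \eqref{Tgr433} is also linear in $S$, because each $\Psi_{\cdot,\cdot,\overline0}$ is linear and $\Psi_{k-1,\ell,Y}(S_2)$ is affine in $(Y,S_2)$ with linear part given by \eqref{Vvvyt65X8}. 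It is injective as well: $S_1$ is read off from $X_1$, and $S_2$ is the tail of $X_2$.

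\textbf{Step 3: identify the two maps.} By Steps 1 and 2, both sides of \eqref{Tgr433} are injective linear maps from the same space onto ${\EuScript C}_k^\ell$, using the isomorphism in Proposition \ref{DfDfFr43}(2) under the full-rank hypothesis. To prove they coincide, I would show that they agree on the systematic coordinates, since a codeword is determined by its information coordinates. This is the main obstacle. $\Psi_{k,\ell,\overline0}$ depends on the splitting $H_{k-1}^\ell=B\sqcup A$, with $S$ occupying the last columns of $X$. The recursive side instead places the information symbols inside $X_1$, via the splitting for $(k,\ell-1)$, and at the tail of $X_2$, via the splitting for $(k-1,\ell)$.

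I would first try to choose these splittings compatibly. The second fragmentation $H_k^\ell(2)$ in Example \ref{GTR543hu} shows that the columns of $H_{k-1}^\ell$ split as the columns of $H_{k-1}^{\ell-1}$ followed by those of $H_{k-2}^{\ell}$. Taking $B,A$ for $(k-1,\ell)$ as the restriction of this column split, and reordering so that the information positions are those of $S_1$ followed by those of $S_2$, the two encoders then pick out the same information set. If this column matching cannot be made literal, I would fall back on a weaker statement. Since both maps are isomorphisms onto ${\EuScript C}_k^\ell$, \eqref{Tgr433} holds after composing with a coordinate permutation of $S$, and that permutation is the identity exactly when the information sets are chosen compatibly. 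I would state this explicitly as the convention under which the proposition is meant.
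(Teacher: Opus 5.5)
You are right that the real difficulty is in Step 3, but neither of your repairs closes it, and in fact the identity \eqref{Tgr433} is false.

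\textbf{Steps 1 and 2.} These are sound and match the paper's proof: fragment once, take $X_1=\Psi_{k,\ell-1,\overline 0}(S_1)$, then solve $H_{k-1}^{\ell}X_2^T=-X_1^T$ with $S_2$ free. Two corrections are needed.

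The ``over $GF(2)$'' reading of the subscript does not work. By Corollary~\ref{wsewq435}, $H_k^{\ell-1}H_{k-1}^{\ell}=0$ in characteristic $2$, so $\phi=-H_k^{\ell-1}B=0$ and no $\Psi$ is defined there.

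The statement must therefore carry $-\Psi_{k,\ell-1,\overline 0}(S_1)$ in the subscript. As printed, the right side gives $H_k^{\ell}X^T=(0,\,2X_1)^T\neq 0$. Likewise, in Step 2 you need $\Psi_{k-1,\ell,-X_1}$, not $\Psi_{k-1,\ell,X_1}$.

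With that fix, Steps 1--2 show that
$\Phi(S):=(\Psi_{k,\ell-1,\overline 0}(S_1),\Psi_{k-1,\ell,-\Psi_{k,\ell-1,\overline 0}(S_1)}(S_2))$
is \emph{a} systematic encoder of ${\EuScript C}_k^{\ell}$.

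\textbf{The compatible-splitting repair is impossible.} In $\Psi_{k,\ell,\overline 0}(S)=(P_2,P_1,S)$ the block $P_2$ has length $C_{\ell-1}^{k+\ell-2}$, so it is the whole $X_1$-block. Hence, for every splitting $H_{k-1}^{\ell}=B\sqcup A$, all information positions of the left side lie in the $X_2$-block. In $\Phi$, by contrast, $S_1$ is the tail of $X_1$. So it sits in the $X_{12}$-block of $H_k^{\ell}(2)$, the copy of $H_{k-1}^{\ell-1}$ inside $H_k^{\ell-1}$. The column split of $H_{k-1}^{\ell}$ you propose concerns the other copy, the $X_{21}$-block, which occupies different coordinates. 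The two information sets therefore cannot be matched.

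\textbf{The fallback is false.} Let $I_R$ be the information set of $\Phi$ and set $T(S)=\Psi_{k,\ell,\overline 0}(S)|_{I_R}$. Then $\Psi_{k,\ell,\overline 0}=\Phi\circ T$, and $T$ is a general linear automorphism. It is a coordinate permutation only if each coordinate in the $S_1$-slot agrees, on every codeword, with some coordinate of the $X_2$-block.

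This already fails for $(k,\ell)=(4,3)$ in characteristic $>3$. There $S=S_1\in\mathbb F^5$ sits at positions $6,\dots,10$ of $X=(x_1,\dots,x_{20})$, and $S_2$ is empty. One computes
\[
\Phi(e_1)=(0,1,-1,0,-1,1,0,0,0,0,\;0,0,0,0,-1,1,0,1,-1,0),
\]
so $x_6=1$, and among $x_{11},\dots,x_{20}$ only $x_{16}$ and $x_{18}$ equal $1$. But $\Phi(e_2)$ has $(x_6,x_{16})=(0,1)$ and $\Phi(e_3)$ has $(x_6,x_{18})=(0,1)$.

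Hence \eqref{Tgr433} fails for every admissible splitting and every reordering of $S$. This is not a matter of convention, and your Step 3 cannot be completed.

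The paper's proof has the same hole: after the two-stage solve it simply declares the concatenation to be $\Psi_{k,\ell,\overline 0}((S_1,S_2))$.

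\textbf{What is true.} First, $\Psi_{k,\ell,\overline 0}=\Phi\circ T$. Second, suppose you allow a column permutation in $B\sqcup A$, as the paper does, and take $A$ to be the columns indexed by the image of $I_R$ under the block swap $X_{12}\leftrightarrow X_{21}$. This swap is a code automorphism, by the symmetry of $H_k^{\ell}(2)$ in Example~\ref{tygffd3u76}. Then $\Psi_{k,\ell,\overline 0}$ equals $\Phi$ followed by that swap of codeword coordinates.
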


\begin{proof}
In this proposition, we solve the system of equations.
$$H_k^{\ell}\left[\begin{smallmatrix}
X_1^T \\
X_2^T 
\end{smallmatrix}\right]=\left[\begin{smallmatrix}
0^T \\
0^T 
\end{smallmatrix}\right]$$
using the $1$-fragmentation; see Definition \ref{obsyyt65}.
Which,  is equivalent to solving the following matrix system
\begin{equation}\label{apul6574ch54}
\begin{cases} 
   H_k^{\ell-1}X_1^T&=0^T\\
  H_{k-1}^{\ell}X_2^T&=-X_1^T
\end{cases}
\end{equation}
where $X_1=(X_{11}, X_{12})\in {\mathbb F}^{C^{k+\ell-3}_{\ell-2}}\oplus {\mathbb F}^{C^{k+\ell-3}_{\ell-1}}$ and $X_2=(X_{21}, X_{22})\in {\mathbb F}^{C^{k+\ell-3}_{\ell-1}}\oplus {\mathbb F}^{C^{k+\ell-3}_{\ell}}$.\\
Adapting the method of  theorem \ref{DfDfFr43765}, we solve the first matrix equation of \ref{apul6574ch54}, proceeding in the following manner:
$$
\begin{aligned}
\left[
\begin{array}{c|c}
H_k^{\ell-2} & 0 \\ \hline
I_{C_{\ell-2}^{k+\ell-3}} & H_{k-1}^{\ell-1}
\end{array}
\right]
\begin{bmatrix}
X_{11}^{T} \\
X_{12}^{T}
\end{bmatrix}
&\sim
\begin{bmatrix}
I_{C_{\ell-1}^{k+\ell-2}} & B_1 & A_1 \\
0 & -A_k^{\ell-2}B_1 & -A_k^{\ell-2}A_1
\end{bmatrix}
\begin{bmatrix}
P_{12}(S_1)^{T} \\
P_{11}(S_1)^{T} \\
S_1^{T}
\end{bmatrix} \\
&=
\begin{bmatrix}
0^{T} \\
0^{T}
\end{bmatrix}.
\end{aligned}
$$
with $B_1$ matrix of size $C_{\ell -2}^{k+\ell -3}\times C_{\ell -3}^{k+\ell -3}$ and $A_1$ matrix of size
$C_{\ell -2}^{k+\ell -3}\times( C_{\ell -1}^{k+\ell -3}-C_{\ell -3}^{k+\ell -3})$ such that $H_{k-1}^{\ell-1}=B_1\sqcup A_1$
and $\phi_1=-H_k^{\ell-1}B_1$ the invertible matrix of size 
$C_{\ell -3}^{k+\ell -3}\times C_{\ell -3}^{k+\ell -3}$, now for a vector $S_1$ whose values were chosen we have
\begin{align*}
P_{12}(S_1)^T&=-B_1P_{11}^T-A_1S_1^T\\
P_{11}(S_1)^T&=\phi_1^{-1}[(H_k^{\ell-2}A_1)S_1^T]\\
    S_1&\in {\mathbb F}^{C_{\ell -1}^{k+\ell -3}-C_{\ell -3}^{k+\ell -3}}
\end{align*}
Therefore
\begin{equation}\label{inc4r43}
 \Psi_{k, \ell-1, \overline{0}}(S_1)=(P_{12}(S_1), P_{11}(S_1), S_1)\in {\mathcal S}_{(k, \ell-1, \overline{0})}
 \end{equation}
So, if we do  $X_1=(X_{11}, X_{12})$,   where $X_{11}= P_{12}(S_1)$, \quad $X_{12}= (P_{11}(S_1), S_1^T)$.\\ 
To solve the second matrix equation in \ref{apul6574ch54},  we proceed as before (see Theorem \ref{DfDfFr43765}).

 Find $\left[\begin{smallmatrix}
X_{21}^T \\
X_{22}^T 
\end{smallmatrix}\right]$ such that 
\[
\left[
\begin{array}{c|c}
H_{k-1}^{\ell-1} & 0 \\ \hline
I_{C_{\ell-1}^{k+\ell-3}} & H_{k-2}^{\ell}
\end{array}
\right]
\left[
\begin{smallmatrix}
X_{21}^T \\
X_{22}^T
\end{smallmatrix}
\right]
=
-\begin{bmatrix}
P_{12}(S_1)^T \\
P_{11}(S_1)^T \\
S_1^T
\end{bmatrix}.
\]
This is equivalent to the following 
\[
\begin{bmatrix}
I_{C^{k+\ell-2}_{\ell-1}} & B & A \\
0 & -H_k^{\ell-1}B & -H_k^{\ell-1}A
\end{bmatrix}
\begin{bmatrix}
P_{22}(S_2)^T \\
P_{21}(S_2)^T \\
S_2^T
\end{bmatrix}
=-
\begin{bmatrix}
P_{12}(S_1)^T \\
P_{11}(S_1)^T \\
S_1^T
\end{bmatrix}.
\]
where $X_{11}=P_{12}(S_1)$ and $X_{12}=(P_{11}(S_1),S_1)$ are previously determined values, and $X_{21}=P_{22}(S_2)$ and $X_{22}=(P_{21}(S_2),S_2)$ are determined by $S_2$.\\
To do this, find the submatrix $B_2$ of size $C_{\ell -1}^{k+\ell -3}\times C_{\ell -2}^{k+\ell -3}$ and the submatrix $A_2$ of size
$C_{\ell -1}^{k+\ell -3}\times( C_{\ell }^{k+\ell -3}-C_{\ell -2}^{k+\ell -3})$ such that $H_{k-2}^{\ell}=B_2\sqcup A_2$ and $\phi_2=-H_{k-1}^{\ell-1}B_2$ the invertible matrix of size 
$C_{\ell -2}^{k+\ell -3}\times C_{\ell -2}^{k+\ell -3}$, now for a vector $S_2\in {\mathbb F}^{C_{\ell }^{k+\ell -3}-C_{\ell -2}^{k+\ell -3}}$ we have (see equation \ref{Fiell5654} in proof of theorem \ref{DfDfFr43765})
\begin{align*}
P_{21}(S_2)^T&=\phi_2^{-1}[P_{12}(S_1)^T-H_{k-1}^{\ell-1}\begin{bmatrix}
-P_{11}(S_1)^T \\
-S_1^T
\end{bmatrix} +(H_{k-1}^{\ell-1}A_2)S_2^T]\\
P_{22}(S_2)^T&=\begin{bmatrix}
-P_{11}(S_1)^T \\
-S_1^T
\end{bmatrix}-B_2P_{21}(S_2)^T-A_2S_2^T\\
        S_2&\in {\mathbb F}^{C_{\ell }^{k+\ell -3}-C_{\ell -2}^{k+\ell -3}}
\end{align*}
The solution  is $$\Psi_{k-1, \ell, \Psi_{k, \ell-1, \overline{0}}(S_1)}(S_2)=(P_{22}(S_2),  P_{21}(S_2), S_2)$$
Therefore
\begin{equation}\label{Tgr433}
\begin{aligned}
(\Psi_{k, \ell-1, \overline{0}}(S_1), \Psi_{k-1, \ell, \Psi_{k, \ell-1, \overline{0}}(S_1)}(S_2))&=(P_{12}(S_1),  P_{11}(S_1), S_1, P_{22}(S_2),  P_{21}(S_2), S_2 )\\                                                                                                          &=\Psi_{k, \ell, \overline{0}}((S_1, S_2))
\end{aligned}
\end{equation}
it is the required solution in ${\mathcal S}_{(k, \ell, \overline{0})}$ to $$S=(S_1, S_2)\in {\mathbb F}^{C_{\ell }^{k+\ell -2}-C_{\ell -2}^{k+\ell -2}},$$ 
such that
$$S_1\in {\mathbb F}^{C_{\ell -1}^{k+\ell -3}-C_{\ell -3}^{k+\ell -3}},$$ and 
$$S_2\in {\mathbb F}^{C_{\ell }^{k+\ell -3}-C_{\ell -2}^{k+\ell -3}}.$$ 
Then $$S=(S_1,S_2)\in {\mathbb F}^{C_{\ell -1}^{k+\ell -3}-C_{\ell -3}^{k+\ell -3}}\oplus {\mathbb F}^{C_{\ell }^{k+\ell -3}-C_{\ell -2}^{k+\ell -3}}={\mathbb F}^{C_{\ell }^{k+\ell -2}-C_{\ell -2}^{k+\ell -2}}$$.
\end{proof}
\begin{rem}\label{comp45565}
The computational complexity of $(\Psi_{k, \ell-1, \overline{0}}(S_1), \Psi_{k-1, \ell, \Psi_{k, \ell-1, \overline{0}}(S_1)}(S_2))$, is given by 
\begin{equation}\label{co45565}
\big[O(C_{\ell-2}^{k+\ell-2})+O((C_{\ell-3}^{k+\ell-2}))^2\big]+\big[O(C_{\ell-1}^{k+\ell-2})+O((C_{\ell-2}^{k+\ell-2}))^2\big].
\end{equation}
\end{rem}
\subsection{Decoding of ${\EuScript C}_k^{\ell}$}\label{231yytrfde}
LDPC decoding is an iterative process that uses graph-based algorithms to correct errors in data transmission.
A \textit{bipartite graph} is one in which the nodes can be partitioned into two classes and no edge connects two nodes from the same class.
\begin{lem}\label{Subm564544}
$H_k^2$ is the upper-left submatrix of $H_k^{\ell}$ for all $k, \ell \geq 2$
\end{lem}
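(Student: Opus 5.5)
We argue by induction on $\ell$, with $k\geq 2$ fixed, and use only the recursive block form \eqref{m4tr1x67532} of Definition \ref{def56430001}. Here ``upper-left submatrix'' means that $H_k^2$ occupies the first $C_1^{k+1}=k+1$ rows and the first $C_2^{k+1}$ columns of $H_k^{\ell}$.

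The base case $\ell=2$ is trivial, since $H_k^2$ is the whole matrix $H_k^2$.

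For the inductive step, let $\ell\geq 3$ and assume $H_k^2$ is the upper-left submatrix of $H_k^{\ell-1}$. By \eqref{m4tr1x67532},
$$H_k^{\ell}=\left[\begin{array}{c|c} H_k^{\ell-1} & 0\\ \hline I_{C_{\ell-1}^{k+\ell-2}} & H_{k-1}^{\ell}\end{array}\right],$$
so $H_k^{\ell-1}$ sits in the upper-left corner of $H_k^{\ell}$. In other words, its rows and columns are the first $C_{\ell-2}^{k+\ell-2}$ rows and first $C_{\ell-1}^{k+\ell-2}$ columns of $H_k^{\ell}$. Being an upper-left submatrix is transitive: an upper-left block of an upper-left block is again an upper-left block, with the same row and column indices. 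Hence $H_k^2\unlhd H_k^{\ell}$ in the required position, and the induction closes.

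The only point needing care is the size bookkeeping, to confirm that the corner block really has the stated dimensions. We must check that $H_k^{\ell-1}$ has $C_{\ell-2}^{k+\ell-2}$ rows and $C_{\ell-1}^{k+\ell-2}$ columns, which Definition \ref{def56430001} gives directly. We must also check that these numbers are at least $k+1$ and $C_2^{k+1}$ respectively, so the nesting is consistent. This follows from monotonicity of $C_{\ell-1}^{k+\ell-1}$ and $C_{\ell}^{k+\ell-1}$ in $\ell$, via Pascal's identity \eqref{HHYTR}. Alternatively, one can iterate the recursion $\ell-2$ times, as in the fragmentations of Example \ref{GTR543hu}, where $H_k^{\ell-r}$ always appears as the top-left block; taking $r=\ell-2$ exhibits $H_k^2$ in the corner at once.
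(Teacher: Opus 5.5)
Your proof is correct and is essentially the paper's argument: the recursive block form puts $H_k^{\ell-1}$ in the upper-left corner of $H_k^{\ell}$, and transitivity along $H_k^2\unlhd H_k^3\unlhd\cdots\unlhd H_k^{\ell}$ finishes it (the dimension check you mention follows automatically from this nesting and can be dropped). One small caveat concerns your closing alternative: the $r$-fragmentation $H_k^{\ell}(r)$ is only defined for $r\le\min\{k-1,\ell-1\}$, so taking $r=\ell-2$ works only when $\ell\le k+1$, whereas your main induction iterates only on the upper-left block and has no such restriction.
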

\begin{proof}
Let $k\geq 2$ be an integer. By Definition \ref{definit000001}, for every $\ell \geq 2$ the matrix $H_k^{\ell-1}$ is the upper-left submatrix of $H_k^{\ell}$ (see Remark \ref{obs36754}).
Similarly, by the recursive definition, $H_k^{\ell-2}$ is the upper-left submatrix of $H_k^{\ell-1}$ (see Remark \ref{obs36754}). Applying transitivity repeatedly
$$H_k^2\unlhd H_k^3\unlhd \cdots \unlhd H_k^{\ell-1} \unlhd H_k^{\ell}$$
Therefore, $H_k^2$ is the upper-left submatrix of $H_k^{\ell}$.
\end{proof}
\begin{lem}\label{maths1333432}
If $k, \ell\geq 2$  then  $(H_k^{\ell})^TH_k^{\ell}$ is a matrix whose diagonal entries are constant and whose off-diagonal entries are integers less than or equal to 1
\end{lem}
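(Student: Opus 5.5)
The plan is to read $(H_k^{\ell})^TH_k^{\ell}$ as an integer matrix. Its $(i,j)$ entry is the number of rows in which columns $i$ and $j$ of $H_k^{\ell}$ both have a $1$. So the statement says two things: every column has the same weight, and any two distinct columns share at most one row. Equivalently, the Tanner graph has no $4$-cycles.

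For the diagonal, Lemma \ref{regul44556} says $H_k^{\ell}$ is $(k,\ell)$-regular. Hence every column contains exactly $\ell$ ones, and every diagonal entry of $(H_k^{\ell})^TH_k^{\ell}$ equals $\ell$.

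For the off-diagonal entries I will prove a stronger statement by induction on $k+\ell$: for all $k,\ell\geq 1$, any two distinct columns of $H_k^{\ell}$ share at most one row. This range includes the degenerate matrices $H_k^1$ and $H_1^{\ell}$, which appear as blocks in the recursion \eqref{m4tr1x67532}.
\begin{itemize}
\item For $H_k^1=(1,\ldots,1)$, any two columns meet in exactly one row.
\item $H_1^{\ell}$ has only one column, so the claim is vacuous.
\end{itemize}
These are the base cases.

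For $k,\ell\geq 2$, I split the columns of $H_k^{\ell}$ according to the block decomposition \eqref{m4tr1x67532}. The \emph{left} columns are those of $\left[\begin{smallmatrix}H_k^{\ell-1}\\ I\end{smallmatrix}\right]$ and the \emph{right} columns are those of $\left[\begin{smallmatrix}0\\ H_{k-1}^{\ell}\end{smallmatrix}\right]$. There are three cases.
\begin{itemize}
\item \emph{Two distinct left columns.} Their supports in the identity block are distinct singletons, so they share no row there. Any common row therefore lies in $H_k^{\ell-1}$, and by induction there is at most one.
\item \emph{Two distinct right columns.} The top block is $0$, so the overlap is the overlap of two columns of $H_{k-1}^{\ell}$, which is at most $1$ by induction.
\item \emph{A left column $j$ and a right column $c$.} The right column vanishes on the top rows. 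The left column has exactly one $1$ in the bottom rows, namely in row $j$ of the identity. Hence the overlap equals the entry $(H_{k-1}^{\ell})_{j,c}\in\{0,1\}$.
\end{itemize}
This completes the induction.

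I expect the only real obstacle to be bookkeeping rather than mathematics. The induction hypothesis must cover the degenerate indices $\ell-1=1$ and $k-1=1$, where the blocks $H_k^1$ and $H_1^{\ell}$ do not fit the $(k,\ell)$-regular pattern for $k,\ell\geq 2$. In addition, the product must be computed over $\mathbb{Z}$ rather than over $\mathbb{F}$, so that entries count shared rows. Once these conventions are fixed, the three-case block argument above is routine.
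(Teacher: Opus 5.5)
Your proof is correct and follows essentially the paper's route. The paper computes $(H_k^{\ell})^TH_k^{\ell}$ blockwise from the recursion as $\left[\begin{smallmatrix}(H_k^{\ell-1})^TH_k^{\ell-1}+I & H_{k-1}^{\ell}\\ (H_{k-1}^{\ell})^T & (H_{k-1}^{\ell})^TH_{k-1}^{\ell}\end{smallmatrix}\right]$ and applies induction to the diagonal blocks; your left--left, right--right and left--right cases are exactly these three block types. Your version is also more careful. Running the induction over $k,\ell\geq 1$ with base cases $H_k^1$ and $H_1^{\ell}$ covers the degenerate blocks the paper's induction needs but omits, since it is based only at $H_3^2, H_2^3$ with hypotheses for indices $\geq 2$. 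You also state explicitly that the off-diagonal blocks $H_{k-1}^{\ell}$ have entries in $\{0,1\}$.
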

\begin{proof}
The proof is by induction in $k$ and $\ell$.\\
The basis of induction is when $k=3$,  $\ell=2$  and $k=2$,  $\ell=3$ Thus, by a direct calculation, using Example \ref{gssto432} as a reference,
\bigskip\\
$(H_3^2)^TH_3^2=\begin{bmatrix}
2 & 1 & 1 & 1 & 1 & 0 \\
1 & 2 & 1 & 1 & 0 & 1 \\
1 & 1 & 2 & 0 & 1 & 1 \\
1 & 1 & 0 & 2 & 1 & 1 \\
1 & 0 & 1 & 1 & 2 & 1 \\
0 & 1 & 1 & 1 & 1 & 2
\end{bmatrix}$, \quad \quad
$(H_2^3)^TH_2^3=\begin{bmatrix}
3 & 1 & 1 & 1  \\
1 & 3 & 1 & 1  \\
1 & 1 & 3 & 1  \\
1 & 1 & 1 & 3  
\end{bmatrix}$\\
\bigskip\\
Our induction hypothesis states that, for each $2\leq \ell^{\prime} < \ell$ and for each $2\leq k^{\prime} < k$, the matrix  $(H_k^{\ell^{\prime}})^TH_k^{\ell^{\prime}}$ has diagonal entries equal  $\ell^{\prime}$  and off-diagonal entries less than or equal to 1. Similarly, the matrix  $(H_{k^{\prime}}^{\ell})^TH_{k^{\prime}}^{\ell}$ has diagonal entries equal to $\ell$ and off-diagonal entries less than or equal to 1.  Then for all $k$ and $\ell$ we have
\begin{align*}
(H_k^{\ell})^TH_k^{\ell}&= \left[ \begin{tabular}{c|c}
$(H_k^{\ell-1})^T$ & $I_{C_{\ell-1}^{k+\ell-2}} $ \cr\hline $0$  & $(H_{k-1}^{\ell})^T$
\end{tabular} \right] 
 \left[ \begin{tabular}{c|c}
$H_k^{\ell-1}$ & $ 0 $ \cr\hline $I_{C_{\ell -1}^{k+\ell -2}}$  & $H_{k-1}^{\ell}$
\end{tabular} \right] \\
&= \left[ \begin{tabular}{c|c}
$(H_k^{\ell-1})^TH_k^{\ell-1}+I_{C_{\ell -1}^{k+\ell -2}}$ & $ H_{k-1}^{\ell} $ \cr\hline $(H_{k-1}^{\ell})^T$  & $(H_{k-1}^{\ell})^TH_{k-1}^{\ell}$
\end{tabular} \right]  
\end{align*}
Thus, by the induction hypothesis,
$
(H_k^{\ell-1})^T H_k^{\ell-1}
+ I_{C_{\ell-1}^{k+\ell-2}}
$
is a matrix whose diagonal entries are all equal to $\ell$, while its off-diagonal entries are less than or equal to $1$.
Similarly, by the induction hypothesis,
$(H_{k-1}^{\ell})^T H_{k-1}^{\ell}$
is a matrix whose diagonal entries are all equal to $\ell$,
while its off-diagonal entries are less than or equal to $1$.
This completes the proof for
$(H_k^{\ell})^T H_k^{\ell}$.
\end{proof}
\begin{lem}\label{lemR}
If $k, \ell\geq 2$,  then the matrix
$$ H_k^{\ell} \sim \left[ \begin{tabular}{c|c}
$I_{C_{\ell -1}^{k+\ell -2}}$ & $ H_{k-1}^{\ell} $ \cr\hline $0$  & $-H_k^{\ell-1}H_{k-1}^{\ell}$
\end{tabular} \right] $$ 
\end{lem}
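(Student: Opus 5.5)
The relation $\sim$ will be read as row equivalence, i.e.\ left multiplication by an invertible matrix, which is how the proof of Theorem \ref{DfDfFr43765} uses it. The plan is to realize the equivalence by one explicit invertible block matrix $Q$ and verify $QH_k^{\ell}$ equals the displayed matrix by block multiplication. Write $m_1=C_{\ell-2}^{k+\ell-2}$ for the number of rows of $H_k^{\ell-1}$ and $m_2=C_{\ell-1}^{k+\ell-2}$ for the number of rows of $H_{k-1}^{\ell}$. By Definition \ref{def56430001}(c),
$$H_k^{\ell}=\left[\begin{array}{c|c} H_k^{\ell-1} & 0\\ \hline I_{m_2} & H_{k-1}^{\ell}\end{array}\right].$$

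The key step takes the same transformation used in the proof of Theorem \ref{DfDfFr43765}:
$$Q=\begin{bmatrix} 0 & I_{m_2}\\ I_{m_1} & -H_k^{\ell-1}\end{bmatrix}.$$
Its block sizes are compatible because $H_k^{\ell-1}$ is $m_1\times m_2$; indeed its number of columns is $C_{\ell-1}^{k+\ell-2}=m_2$. Block multiplication gives the top block row $[\,I_{m_2}\mid H_{k-1}^{\ell}\,]$. The bottom block row is
$$[\,H_k^{\ell-1}-H_k^{\ell-1}I_{m_2}\mid 0-H_k^{\ell-1}H_{k-1}^{\ell}\,]=[\,0\mid -H_k^{\ell-1}H_{k-1}^{\ell}\,],$$
which is exactly the claimed form.

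It remains to check that $Q$ is invertible. Reordering the block rows, $Q$ becomes block lower triangular with identity diagonal blocks. Equivalently, $Q$ is the product of the block-row swap $\left[\begin{smallmatrix}0&I_{m_2}\\ I_{m_1}&0\end{smallmatrix}\right]$ and the elementary block matrix $\left[\begin{smallmatrix}I_{m_2}&0\\ -H_k^{\ell-1}&I_{m_1}\end{smallmatrix}\right]$. Both factors are products of elementary row operations: a permutation of rows, and the subtraction from the first $m_1$ rows of combinations of the last $m_2$ rows. The same observation gives an explicit inverse, $Q^{-1}=\left[\begin{smallmatrix}H_k^{\ell-1}& I_{m_1}\\ I_{m_2}&0\end{smallmatrix}\right]$, which can be confirmed by multiplying out. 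Hence $H_k^{\ell}$ and the displayed matrix have the same row space and the same kernel, which is what Proposition \ref{DfDfFr43} and the encoder need.

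The only real obstacle is bookkeeping. One must confirm the block dimensions from the binomial sizes in Definition \ref{def56430001}, so that $H_k^{\ell-1}H_{k-1}^{\ell}$ is defined and has size $m_1\times C_{\ell}^{k+\ell-2}$. One must also state clearly that $\sim$ means row equivalence, which includes row permutations, since the top and bottom block rows trade places. If the reader insists on no row swaps, the result should instead be stated with the blocks $[\,0\mid -H_k^{\ell-1}H_{k-1}^{\ell}\,]$ on top; that version is obtained from $\left[\begin{smallmatrix}I&-H_k^{\ell-1}\\0&I\end{smallmatrix}\right]$ alone.
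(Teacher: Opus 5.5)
Your proof is correct and is the paper's argument: the paper proves the lemma by the same left multiplication by $\left[\begin{smallmatrix}0 & I\\ I & -H_k^{\ell-1}\end{smallmatrix}\right]$ and the same block product. Your check that this matrix is invertible, with inverse $\left[\begin{smallmatrix}H_k^{\ell-1} & I\\ I & 0\end{smallmatrix}\right]$, supplies a detail the paper leaves implicit.
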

\begin{proof}
This follows from the matrix product below\\
$ \left[ \begin{tabular}{c|c}
$0$ & $I_{C_{\ell-1}^{k+\ell-2}} $ \cr\hline $I_{C_{\ell -2}^{k+\ell -2}}$  & $-H_k^{\ell-1}$
\end{tabular} \right] 
 \left[ \begin{tabular}{c|c}
$H_k^{\ell-1} $ & $ 0 $ \cr\hline $I_{C_{\ell -1}^{k+\ell -2}}$  & $H_{k-1}^{\ell}$
\end{tabular} \right] =
 \left[ \begin{tabular}{c|c}
$I_{C_{\ell -1}^{k+\ell -2}}$ & $ H_{k-1}^{\ell} $ \cr\hline $0$  & $-H_k^{\ell-1}H_{k-1}^{\ell}$
\end{tabular} \right] $ 
\end{proof}
\begin{cor}\label{wsewq435}
Assume that the field  ${\mathbb F}=GF(2^m)$.  Then the product $H_k^{\ell-1}H_{k-1}^{\ell}$ is the zero matrix, $k, \ell \geq 2$ 
\end{cor}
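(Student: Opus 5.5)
The plan is a direct induction on $k+\ell$, using the block recursion \eqref{m4tr1x67532} for both factors. Characteristic $2$ enters only through $2=0$ in $GF(2^m)$. In the base cases it kills sums of exactly two ones; in the inductive step it kills an off-diagonal block equal to twice a matrix.

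\textbf{Base cases.} The recursion for $H_k^{\ell-1}$ needs $\ell-1\geq 2$, and the recursion for $H_{k-1}^{\ell}$ needs $k-1\geq 2$. So the base cases are $\ell=2$ and $k=2$.

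For $\ell=2$, the product is $H_k^{1}H_{k-1}^{2}$. Here $H_k^1=(1,\ldots,1)$, and $H_{k-1}^2$ is $(k-1,2)$-regular by Lemma \ref{regul44556}, with the convention that $H_1^2$ is the $2\times1$ column of ones. Each entry of the product is therefore a column sum of $H_{k-1}^2$, which equals $2=0$.

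For $k=2$, the product is $H_2^{\ell-1}H_1^{\ell}$. Here $H_1^\ell$ is a column of ones, so each entry is a row sum of $H_2^{\ell-1}$. By Lemma \ref{regul44556} each row sum is $2=0$ (for $\ell-1=1$, $H_2^1=(1,1)$ directly).

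\textbf{Inductive step.} Let $k,\ell\geq 3$. Expand both factors by \eqref{m4tr1x67532}:
$$H_k^{\ell-1}=\left[\begin{array}{c|c} H_k^{\ell-2} & 0\\ \hline I & H_{k-1}^{\ell-1}\end{array}\right],\qquad H_{k-1}^{\ell}=\left[\begin{array}{c|c} H_{k-1}^{\ell-1} & 0\\ \hline I & H_{k-2}^{\ell}\end{array}\right].$$

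First, check that the blocks are conformable. The column split of the first factor is $C_{\ell-2}^{k+\ell-3}+C_{\ell-1}^{k+\ell-3}$. The row split of the second factor is $C_{\ell-2}^{k+\ell-3}$ (the rows of $H_{k-1}^{\ell-1}$) followed by $C_{\ell-1}^{k+\ell-3}$. These agree, so block multiplication is legitimate.

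Block multiplication then gives
$$H_k^{\ell-1}H_{k-1}^{\ell}=\left[\begin{array}{c|c} H_k^{\ell-2}H_{k-1}^{\ell-1} & 0\\ \hline H_{k-1}^{\ell-1}+H_{k-1}^{\ell-1} & H_{k-1}^{\ell-1}H_{k-2}^{\ell}\end{array}\right].$$
Each block vanishes:
\begin{itemize}
\item The lower-left block is $2H_{k-1}^{\ell-1}=0$ in characteristic $2$.
\item The upper-left block is the statement for $(k,\ell-1)$.
\item The lower-right block is the statement for $(k-1,\ell)$.
\end{itemize}
Both of these pairs have smaller index sum and still have both indices $\geq 2$, so the induction hypothesis applies.

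\textbf{Main obstacle.} The only delicate point is the bookkeeping: verifying that the identity blocks have matching sizes via Pascal's identity \eqref{HHYTR}, and correctly handling the degenerate matrices $H_k^1$ and $H_1^\ell$ in the base cases.

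As a conceptual check, $H_k^\ell$ is the inclusion matrix of $(\ell-1)$-subsets versus $\ell$-subsets of a $(k+\ell-1)$-set. The product then counts chains $T\subset U\subset V$ with $|V\setminus T|=2$, and there are $0$ or $2$ such chains for each pair $(T,V)$. This explains why the result holds precisely in characteristic $2$.
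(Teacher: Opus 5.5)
Your proof is correct and follows essentially the same route as the paper. Both arguments use base cases in which one factor is an all-ones row or column, so each entry is a column or row sum equal to $2=0$. Both then apply block multiplication through the recursion \eqref{m4tr1x67532}, which gives the blocks $H_k^{\ell-2}H_{k-1}^{\ell-1}$, $0$, $2H_{k-1}^{\ell-1}$ and $H_{k-1}^{\ell-1}H_{k-2}^{\ell}$, each of which vanishes. Your bookkeeping is cleaner than the paper's: the paper labels its base cases ``$\ell=1$'' and ``$k=1$'' and skips the conformability check and the degenerate matrices $H_2^1$, $H_1^2$, all of which you handle explicitly.
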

\begin{proof}
The proof proceeds by induction on $k$ and $\ell$.
If $\ell=1$, then $H_k^1=(1,\ldots,1)$, and $H_{k-1}^2$ has two ones in each column. Therefore,
$$H_k^1H_{k-1}^2=(1,\ldots,1)H_{k-1}^2=(2,\ldots,2)\cong(0,\ldots,0).$$
and if $k=1$, then $H_2^{\ell-1}H_1^{\ell}$ has two ones in each row and $H_1^{\ell}=\begin{bmatrix}
1 \\
\vdots \\
1 
\end{bmatrix}$.
 Therefore

$$ H_2^{\ell-1}H_1^{\ell} \sim \left[ \begin{tabular}{c|c}
$ H_2^{\ell-2}$ & $0$ \cr\hline $I_{C_{\ell-2}^{\ell-1}}$  & $H_1^{\ell-1}$
\end{tabular} \right] 
\begin{bmatrix}
1 \\
\vdots \\
1 
\end{bmatrix}
=\begin{bmatrix}
2 \\
\vdots \\
2 
\end{bmatrix}
\cong \begin{bmatrix}
0 \\
\vdots \\
0 
\end{bmatrix}$$
Our induction hypothesis is the following:\\
For all $2\leq k^{\prime}<k$ and $2\leq \ell^{\prime}<\ell$ we have,  $$H_k^{\ell^{\prime}-1}H_{k-1}^{\ell^{\prime}}\cong \overline{0}\mod 2$$
and $$ H_{k^{\prime}}^{\ell-1}H_{k^{\prime}-1}^{\ell}\cong \overline{0}\mod 2.$$ 
Applying the induction hypothesis, we have,\\
\begin{center}
$ H_k^{\ell-1}H_{k-1}^{\ell} = \left[ \begin{tabular}{c|c}
$H_k^{\ell-2}H_{k-1}^{\ell-1}$ & $0$ \cr\hline $2H_{k-1}^{\ell-1}$  & $H_{k-1}^{\ell-1}H_{k-2}^{\ell}$
\end{tabular} \right]\cong $
$\begin{bmatrix}
0 & 0 \\
0 & 0 
\end{bmatrix}$\\
\end{center}
\end{proof}
\begin{cor}\label{wsewq435777}
Assume that the field  ${\mathbb F}=GF(2^m)$. Then $$ H_k^{\ell} \sim \left[ \begin{tabular}{c|c}
$I_{C_{\ell -1}^{k+\ell -2}}$ & $ H_{k-1}^{\ell} $ \cr\hline $0$  & $0$
\end{tabular} \right] $$ 
\end{cor}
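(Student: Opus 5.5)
This corollary follows directly from Lemma \ref{lemR} combined with Corollary \ref{wsewq435}. Lemma \ref{lemR} gives the row equivalence
$$ H_k^{\ell} \sim \left[ \begin{tabular}{c|c}
$I_{C_{\ell -1}^{k+\ell -2}}$ & $ H_{k-1}^{\ell} $ \cr\hline $0$  & $-H_k^{\ell-1}H_{k-1}^{\ell}$
\end{tabular} \right]. $$
It is obtained by left multiplication by the block matrix
$\left[\begin{smallmatrix} 0 & I \\ I & -H_k^{\ell-1}\end{smallmatrix}\right]$.
This block matrix is invertible over any field: it is a block permutation composed with a block elementary (shear) operation. So the equivalence is a genuine row equivalence, and it preserves the kernel and the rank.

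Over ${\mathbb F}=GF(2^m)$ the characteristic is $2$, so $-1=1$. Since $H_k^{\ell-1}$ and $H_{k-1}^{\ell}$ are $(0,1)$-matrices, their product $H_k^{\ell-1}H_{k-1}^{\ell}$ has integer entries, and these entries are reduced mod $2$ when computed in ${\mathbb F}$. Corollary \ref{wsewq435} states that this product is the zero matrix over ${\mathbb F}$. Substituting this into the lower-right block turns $-H_k^{\ell-1}H_{k-1}^{\ell}$ into $0$, which is exactly the claimed form.

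The main point to check is that the block sizes in Lemma \ref{lemR} are consistent, so that the substitution is legitimate. The lower-right block has size $C_{\ell-2}^{k+\ell-2}\times C_{\ell}^{k+\ell-2}$, which is precisely the size of the product of the $C_{\ell-2}^{k+\ell-2}\times C_{\ell-1}^{k+\ell-2}$ matrix $H_k^{\ell-1}$ with the $C_{\ell-1}^{k+\ell-2}\times C_{\ell}^{k+\ell-2}$ matrix $H_{k-1}^{\ell}$.

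A second point needs care: Corollary \ref{wsewq435} is really a statement about integer entries being even, so I would note that its inductive proof works entrywise mod $2$. That makes it valid in any field of characteristic $2$, not only $GF(2)$. No further obstacle is expected.

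As a remark, the resulting form shows $\operatorname{rank}H_k^{\ell}=C_{\ell-1}^{k+\ell-2}$ over such fields, which is strictly less than the row count whenever $C_{\ell-2}^{k+\ell-2}>0$.
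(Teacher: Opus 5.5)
Your proposal is correct and follows the paper's own proof, which consists of one line: combine Lemma \ref{lemR} with Corollary \ref{wsewq435}, so the block $-H_k^{\ell-1}H_{k-1}^{\ell}$ vanishes over $GF(2^m)$. Your additional checks are details the paper leaves implicit, and they are sound: the left multiplier $\left[\begin{smallmatrix} 0 & I \\ I & -H_k^{\ell-1}\end{smallmatrix}\right]$ is invertible (its inverse is $\left[\begin{smallmatrix} H_k^{\ell-1} & I \\ I & 0\end{smallmatrix}\right]$), so the equivalence is a genuine row equivalence, and the block sizes are consistent.
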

\begin{proof}
The proof follows from the lemma \ref{lemR} and corollary \ref{wsewq435}.
\end{proof}
%

%

%
\begin{defn}\label{56543lkj}
Let $H=(h_{ij})$ be an $m\times n$ parity-check matrix of a linear code. The \textit{Tanner graph} associated with $H$ is the bipartite graph $G=(R, C, E)$, where 
$R=\{r_1, \ldots, r_n \}$ is the set of \textit{variable nodes}, $C=\{c_1, \ldots, c_m \}$ is the set of \textit{check nodes}, and an edge $(r_i, c_j)\in E$ exists if and only if $h_{ij}=1$\\
\end{defn}
\begin{defn}\label{gird4321}
A cycle of length $t$  in a Tanner graph is a path comprised of $t$ edges from a node back to the same node. Obviously, the shortest possible
 cycle in any Tanner graph is four. The length of the shortest cycles in an LDPC code ${\EuScript C}$ is called the \textit{ girth} and denote by
 $g({\EuScript C})$. \\
 \end{defn}
Cycles, especially short cycles, leave a bad effect on the performance of LDPC decoders. Because
they affect the independence of the extrinsic information which exchanged in iterative decoding.
\begin{prop}\label{6cicl35555}
The Tanner Graph $T_{{\EuScript C}_k^2}$ have 6-cycles
\end{prop}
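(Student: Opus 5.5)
To prove that $T_{{\EuScript C}_k^2}$ has $6$-cycles, I will exhibit one explicitly. The key observation is that the smallest Fractus matrix $H_2^2$ is already the incidence matrix of a triangle. By Example \ref{gssto432}, its rows are $(1,1,0)$, $(1,0,1)$, $(0,1,1)$.

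In the Tanner graph of $H_2^2$, label the check nodes $c_1,c_2,c_3$ and the variable nodes $r_1,r_2,r_3$. Then
$$r_1 - c_1 - r_2 - c_3 - r_3 - c_2 - r_1$$
is a closed walk with six edges. Each edge $(r_i,c_j)$ it uses corresponds to an entry equal to $1$ in $H_2^2$. All six nodes are distinct, so this walk is a genuine $6$-cycle.

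For general $k\geq 2$, I invoke Lemma \ref{Subm564544} in its companion form. Iterating the recursion $H_k^2=\left[\begin{smallmatrix} H_k^1 & 0\\ I_k & H_{k-1}^2\end{smallmatrix}\right]$ along $k$ shows that $H_2^2$ sits inside $H_k^2$ as a submatrix. A simpler route, which I will use, is to find a $6$-cycle directly in $H_k^2$ from its first-fragmentation form:

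\begin{itemize}
\item Row $1$ of $H_k^2$ is $H_k^1=(1,\ldots,1)$ on the first $k$ columns. Hence columns $1$ and $2$ both meet check node $1$.
\item Rows $2$ and $3$ begin with $I_k$, so row $2$ has a $1$ in column $1$ and row $3$ has a $1$ in column $2$.
\item The right part of these rows is $H_{k-1}^2$. Its first row is $H_{k-1}^1$, which is all ones on its first $k-1$ columns. Therefore rows $2$ and $3$ of $H_k^2$ (the first two rows of $H_{k-1}^2$) share a common column, namely the first column of $H_{k-1}^2$ block, for $k\geq 3$.
\end{itemize}

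I still need to justify that second-block claim. The rows of the block $[I_k\mid H_{k-1}^2]$ are rows $2,\ldots,k+1$ of $H_k^2$. Within $H_{k-1}^2$, the first row is $(1,\ldots,1,0,\ldots)$ with $k-1$ ones. The second row is $(1,0,\ldots,0,1,\ldots)$, coming from $I_{k-1}$ and $H_{k-2}^2$. So these two rows share the first column of that block.

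Concretely, for $k=3$ this gives rows $1,2,3$ and columns $1,2,4$ of the $H_3^2$ displayed in Example \ref{gssto432}. That $3\times 3$ submatrix is
$$\left[\begin{smallmatrix}1&1&0\\1&0&1\\0&1&1\end{smallmatrix}\right].$$
For $k=2$ the matrix is $H_2^2$ itself.

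Putting this together, the $3\times3$ submatrix on rows $1,2,3$ and columns $1,2,k+1$ of $H_k^2$ equals $H_2^2$. Its Tanner subgraph is the explicit $6$-cycle described above. Since the Tanner graph of a submatrix is a subgraph of the Tanner graph of $H_k^2$, the cycle lies in $T_{{\EuScript C}_k^2}$.

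The main obstacle is only index bookkeeping. I must verify from Definition \ref{def56430001} that column $k+1$ carries ones in both row $2$ and row $3$, and a zero in row $1$. Both facts follow from the zero block in the top-right of the recursion and the leading all-ones row of $H_{k-1}^2$.
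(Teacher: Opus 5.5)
Your proof is correct and takes the same approach as the paper, whose displayed matrices mark, with bullets, the same $6$-cycle on rows $1,2,3$ and columns $1,2,k+1$ of $H_k^2$. Your index check, using the zero top-right block and the fact that column $1$ of $H_{k-1}^2$ has ones in its first two rows, makes that pattern rigorous; the aside about a ``companion form'' of the submatrix lemma is not needed and can be dropped.
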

\begin{proof}
The proof proceeds by induction on  $k$. We show that the matrix $H_k^2$ contains a 6-cycle pattern as in \cite[page 930 fig. 2]{1.1}
$$H_2^2=\begin{bmatrix}
\bullet & \bullet & 0  \\
\bullet & 0 & \bullet \\
0 & \bullet & \bullet \\
\end{bmatrix}$$
$$H_3^2=\begin{bmatrix}
\bullet & \bullet & 1 & 0 & 0 & 0 \\
\bullet &     0     & 0 & \bullet & 1 &  0\\
0 & \bullet & 0 & \bullet & 0 &  1\\
0 & 0 & 1 & 0 & 1 &  1 \\
\end{bmatrix}$$
$$H_4^2=\begin{bmatrix}
\bullet & \bullet & 1 & 1 & 0 & 0 & 0 & 0 & 0 & 0\\
\bullet &    0     &  0 & 0 & \bullet & 1 &  1 & 0 & 0 & 0\\
        0 & \bullet & 0 & 0 & \bullet & 0 &  0 & 1 & 1 &  0\\
        0 & 0 & 1 & 0 & 0 & 1 &  0 & 1 & 0 &  1\\
0 & 0 & 0 & 1 & 0 & 0 &  1 & 0 & 1 &  1
\end{bmatrix}$$
$$ \vdots$$
$$H_k^2=\begin{bmatrix}
\bullet & \bullet & \cdots & 1        &     0    &    \cdots  &  0     &  \cdots & 0\\
\bullet &   0       & \cdots & 0        & \bullet &   \cdots  & 1     &   \cdots & 0\\
0         & \bullet & \cdots & 0        & \bullet &    \cdots & 0      &  \cdots & 0\\
\vdots & \vdots & \cdots & \vdots & \vdots &  \vdots &  \cdots & \vdots & \vdots\\
0 & 0 & \cdots  &   1       &   0      & \cdots  &     0     &  \cdots &     1
\end{bmatrix}$$\\
and so we have proven it inductively.
\end{proof}
It is well known the Tanner-Gallager criterion regarding 4-cycles, which states: If $H^T H$ has off-diagonal entries equal to 0 or 1, then the Tanner graph has no cycles of length 4 (see \cite{0.1}, \cite{3}).
\begin{example}
The Tanner graph  $T_{{\EuScript C}_3^2}$ 
where $H_3^2$ is the parity check matrix
$$H_3^2=\begin{bmatrix}
1 & 1 & 1 & 0 & 0 & 0 \\
1 & 0 & 0 & 1 & 1 &  0\\
0 & 1 & 0 & 1 & 0 &  1\\
0 & 0 & 1 & 0 & 1 &  1 \\
\end{bmatrix}$$\\
\begin{center}
\begin{tikzpicture}

\node[draw,rectangle] (r1) at (0, 2) {r1};
\node[draw,rectangle] (r2) at (2, 2) {r2};
\node[draw,rectangle] (r3) at (4, 2) {r3};
\node[draw,rectangle] (r4) at (6, 2) {r4};

\node[draw,circle] (c1) at (1, 0) {c1};
\node[draw,circle] (c2) at (3, 0) {c2};
\node[draw,circle] (c3) at (5, 0) {c3};
\node[draw,circle] (c4) at (7, 0) {c4};
\node[draw,circle] (c5) at (9, 0) {c5};
\node[draw,circle] (c6) at (11, 0) {c6};

\draw[red] (r1) -- (c1);
\draw[red] (r1) -- (c2);
\draw (r1) -- (c3);

\draw[red] (r2) -- (c1);
\draw[red] (r2) -- (c4);
\draw (r2) -- (c5);

\draw[red] (r3) -- (c2);
\draw[red] (r3) -- (c4);
\draw (r3) -- (c6);

\draw (r4) -- (c3);
\draw (r4) -- (c5);
\draw (r4) -- (c6);

\end{tikzpicture}
\end{center}
The graph shows a cycle of length 6 in red. \\ 
$$(H_3^2)^T H_3^2=\begin{bmatrix}
2 & 1 & 1 & 1 & 1 & 0 \\
1 & 2 & 1 & 1 & 0 &  1\\
1 & 1 & 2 & 0 & 1 &  1\\
1 & 1 & 0 & 2 & 1 &  1 \\
1 & 0 & 1 & 1 & 2 &  1\\
0 & 1 & 1 & 1 & 1 & 2 
\end{bmatrix}$$\\
By the Tanner–Gallager criterion, the Tanner graph in this example is free of 4-cycles and has girth 6.
\end{example}
\begin{thm}\label{gfbd3233}
If $2\leq \ell < k$,  then  Tanner graph $T_{{\EuScript C}_k^{\ell}}$ of ${\EuScript C}_k^{\ell}$ has girth 6.
\end{thm}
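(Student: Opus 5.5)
The plan is to split ``girth $6$'' into its two halves: the Tanner graph $T_{{\EuScript C}_k^{\ell}}$ contains no $4$-cycle, and it does contain a $6$-cycle. Since Tanner graphs are bipartite, every cycle has even length at least $4$, so these two facts together give $g({\EuScript C}_k^{\ell})=6$.

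\emph{Absence of $4$-cycles.} I will invoke Lemma \ref{maths1333432}. It states that $(H_k^{\ell})^TH_k^{\ell}$ has constant diagonal and off-diagonal entries at most $1$. These entries are inner products of $(0,1)$-columns computed over the integers, so they are nonnegative integers, hence lie in $\{0,1\}$. The $(i,j)$ off-diagonal entry counts the rows in which columns $i$ and $j$ both have a $1$. A $4$-cycle is exactly two variable nodes sharing two check nodes, i.e. an off-diagonal entry $\geq 2$. The Tanner--Gallager criterion quoted before the example then excludes $4$-cycles.

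Because the whole argument rests on Lemma \ref{maths1333432}, I would re-check its inductive step. In the block product $(H_k^{\ell})^TH_k^{\ell}$:
\begin{itemize}
\item the upper-left block $(H_k^{\ell-1})^TH_k^{\ell-1}+I$ changes only the diagonal, from $\ell-1$ to $\ell$, leaving off-diagonal entries unchanged;
\item the off-diagonal blocks are $H_{k-1}^{\ell}$ and its transpose, which are $(0,1)$-matrices;
\item the lower-right block is covered by the induction hypothesis.
\end{itemize}
So the bound $\leq 1$ propagates, and the base cases $H_k^1$ and $H_1^{\ell}$ are trivial. As an independent safeguard, I would note the combinatorial model. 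Columns of $H_k^{\ell}$ correspond to $\ell$-subsets and rows to $(\ell-1)$-subsets of a $(k+\ell-1)$-set, with incidence given by inclusion; the sizes $C_{\ell}^{k+\ell-1}$ and $C_{\ell-1}^{k+\ell-1}$ and Pascal's rule \eqref{HHYTR} match the recursion. In that model, two distinct $\ell$-subsets contain at most one common $(\ell-1)$-subset, namely their intersection. This gives the same conclusion directly.

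\emph{Existence of a $6$-cycle.} By Lemma \ref{Subm564544}, $H_k^2$ is the upper-left submatrix of $H_k^{\ell}$ for all $k,\ell\geq 2$. The Tanner graph of a submatrix, with its rows and columns as check and variable nodes, is a subgraph of $T_{{\EuScript C}_k^{\ell}}$, since edges are given by the same entries equal to $1$. Proposition \ref{6cicl35555} exhibits a $6$-cycle in $T_{{\EuScript C}_k^2}$: the entries in positions $(1,1),(1,2),(2,1),(2,k+1),(3,2),(3,k+1)$. This cycle therefore survives in $T_{{\EuScript C}_k^{\ell}}$. The hypothesis $2\leq\ell<k$ guarantees $k\geq 3$, so the explicit pattern from Proposition \ref{6cicl35555} (e.g. in $H_3^2$) is available. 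I expect the argument to work equally for $k=2$ via $H_2^2$.

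\emph{Main obstacle.} The delicate point is the rigor of Proposition \ref{6cicl35555}, which is argued by pictures ``inductively''. I would make it explicit. The first row of $H_k^2$ is $H_k^1=(1,\dots,1)$ on the first $k$ columns. Rows $2$ and $3$ are the first two rows of $[\,I_k\mid H_{k-1}^2\,]$. By Lemma \ref{Subm564544} applied to $H_{k-1}^2$, whose first row is $H_{k-1}^1$, rows $2$ and $3$ both have a $1$ in column $k+1$. This yields the cycle $c_1\!-\!r_1\!-\!c_2\!-\!r_{k+1}\!-\!c_3\!-\!r_2\!-\!c_1$ for every $k\geq 2$, completing the proof.
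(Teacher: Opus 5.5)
Your proposal is correct and is essentially the paper's proof: Lemma~\ref{maths1333432} with the Tanner--Gallager criterion excludes $4$-cycles, and Lemma~\ref{Subm564544} plus Proposition~\ref{6cicl35555} carry a $6$-cycle of the Tanner graph of $H_k^2$ into that of $H_k^{\ell}$. Your explicit cycle on rows $1,2,3$ and columns $1,2,k+1$ of $H_k^2$, together with the inclusion-matrix model ($(\ell-1)$-subsets versus $\ell$-subsets of a $(k+\ell-1)$-set), makes the paper's pictorial step rigorous and shows that the hypothesis $\ell<k$ is not actually needed.
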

\begin{proof}
The parity check matrix of ${\EuScript C}_k^{\ell}$ is $H_k^{\ell}$. By Lemma \ref{maths1333432}  the matrix $(H_k^{\ell})^TH_k^{\ell}$ has off-diagonal entries less than or equal to 1, so the Tanner graph $T_{{\EuScript C}_k^{\ell}}$ contains no 4-cycles.  By lemma \ref{Subm564544} $H_k^2$ is the upper-left submatrix of  $H_k^{\ell}$. It follows from Lemma \ref{6cicl35555} that the Tanner graph $T_{{\EuScript C}_k^{\ell}}$  has girth 6.
\end{proof}
\begin{cor}\label{HYt5654}
For each $k, \ell\geq 2$. Then  $d_{{\EuScript C}_k^{\ell}}\geq \ell+1$ 
\end{cor}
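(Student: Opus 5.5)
The plan is to use the classical Tanner--Gallager counting argument for codes whose parity-check matrix has constant column weight and no $4$-cycles. Two facts about $H_k^{\ell}$ established earlier are all that is needed. First, by Lemma \ref{regul44556}, every column of $H_k^{\ell}$ contains exactly $\ell$ ones. Second, Lemma \ref{maths1333432} gives that the off-diagonal entries of $(H_k^{\ell})^TH_k^{\ell}$ are at most $1$. I will read these entries as integer inner products of $(0,1)$-columns, before any reduction in $\mathbb F$. The $(i,j)$ entry then counts the rows in which columns $i$ and $j$ both have a $1$, so any two distinct columns of $H_k^{\ell}$ share at most one row.

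Let $X\neq 0$ be a codeword of ${\EuScript C}_k^{\ell}$ with support $W=\{j: x_j\neq 0\}$, and fix $j_0\in W$. Let $R_{j_0}$ be the set of the $\ell$ rows in which column $j_0$ has a $1$. For each $r\in R_{j_0}$, the parity check $\sum_j h_{rj}x_j=0$ holds. Since all entries of $H_k^{\ell}$ are $0$ or $1$, the left side equals $\sum_{j\in W,\,h_{rj}=1}x_j$. If $j_0$ were the only element of $W$ with $h_{rj}=1$, this sum would be $x_{j_0}\neq 0$, a contradiction. This step works over any field $\mathbb F$. Hence for each $r\in R_{j_0}$ we can choose $j_r\in W\setminus\{j_0\}$ with $h_{rj_r}=1$.

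Next I will show that the map $r\mapsto j_r$ is injective. Suppose $j_r=j_{r'}=j$ with $r\neq r'$. Then columns $j_0$ and $j$ both have ones in rows $r$ and $r'$, which contradicts the pairwise overlap bound from the first paragraph; equivalently, it would give a $4$-cycle in $T_{{\EuScript C}_k^{\ell}}$. So $W$ contains $j_0$ together with $\ell$ further distinct indices, giving $|W|\geq \ell+1$. Since $X$ was an arbitrary nonzero codeword, $d_{{\EuScript C}_k^{\ell}}\geq \ell+1$.

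The main obstacle is the second fact: making sure Lemma \ref{maths1333432} really yields the integer statement that distinct columns share at most one row, for every $k,\ell\geq 2$. In particular, it must hold uniformly and not only under the hypothesis $\ell<k$ of Theorem \ref{gfbd3233}. I expect to handle this by rechecking the induction in Lemma \ref{maths1333432} in the block form given there. The overlaps between columns are the off-diagonal entries of $(H_k^{\ell-1})^TH_k^{\ell-1}+I$, of $(H_{k-1}^{\ell})^TH_{k-1}^{\ell}$, and of the cross blocks $H_{k-1}^{\ell}$ and $(H_{k-1}^{\ell})^T$, which are $(0,1)$-matrices. All of these are at most $1$, so the bound holds as integers, with base cases $H_k^1$ and $H_1^{\ell}$ trivial.
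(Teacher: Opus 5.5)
Your argument is correct and is essentially the paper's proof: column weight $\ell$ plus the absence of $4$-cycles, fed into Tanner's tree bound. The differences are that you prove the depth-one tree bound by hand, which also makes explicit that it holds over an arbitrary field $\mathbb F$, and that you rely only on Lemma \ref{regul44556} and Lemma \ref{maths1333432}, both valid for all $k,\ell\geq 2$ (your recheck of the induction with base cases $H_k^1$ and $H_1^{\ell}$ is right). This makes your route slightly cleaner than the paper's, which cites Theorem \ref{gfbd3233}: that theorem's hypothesis $2\leq \ell<k$ formally leaves the case $k\leq \ell$ uncovered, and its conclusion (girth exactly $6$) is more than the bound needs.
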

\begin{proof}
By theorem \ref{gfbd3233} the girth of Tanner graph of ${\EuScript C}_k^{\ell}$ is $g=6$ and by Tanners Tree Bound $d_{{\EuScript C}_k^{\ell}}\geq \ell+1$
\end{proof}
\subsection{$(k,\ell)$-\textit{Fractus Codes over the finite field } $GF(2^m)$} 
In this part, we assume that ${\mathbb F}=GF(2^m)$ as base field. By Corollary \ref{wsewq435777}, we have
$$H_{k+1}^{\ell}\sim
\left[I_{C_{\ell-1}^{k+\ell-1}} \mid H_k^{\ell}\right].$$
\begin{defn}\label{Akl4327432}
Let $k,\ell\geq 2$ be integers. We define ${\EuScript C}_{(I_{C_{\ell-1}^{k+\ell-1}} |H_k^{\ell})}$ the \textit{truncated Fractus code} over $GF(2^m)$ as the LDPC code whose parity-check matrix is
$H=\left[I_{C_{\ell-1}^{k+\ell-1}}\mid H_k^{\ell}\right].$
\end{defn}
The code has 
\begin{enumerate}
\item length  $N=C_{\ell-1}^{k+\ell-1}+C_{\ell}^{k+\ell-1}=C_{\ell}^{k+\ell}$,
\item dimension $K=N-rank H_k^{\ell}=C_{\ell}^{k+\ell}-C_{\ell-1}^{k+\ell-1}=C_{\ell-1}^{k+\ell-1}$
\item rate  $r=\frac{C_{\ell-1}^{k+\ell-1}}{C_{\ell}^{k+\ell}}=\frac{\ell}{k+\ell}$.
\end{enumerate}
The following systematic encoding for codes with parity-check matrix of the form
$H=\left(I_{C_{\ell-1}^{k+\ell-1}}\mid H_k^{\ell}\right)$
is standard. We include it here for completeness.
\begin{center}
\hrulefill\\
{\bf Algorithm 2}\label{31Algrt4443}\\
\hrulefill\\
\end{center}
 {\bf Input}: $S\in {\mathbb F}^{C_{\ell}^{k+\ell-1}}$ \text{arbitrary vector}.\\
 {\bf Output}: $x=(S\mid S(H_k^{\ell})^T)$ a word of ${\EuScript C}_{(I_{C_{\ell-1}^{k+\ell-1}} \mid H_k^{\ell})}$\\ 
a) Estimate $H_k^{\ell}S^T$,\\
b) Return $x(S)=S
\begin{bmatrix}
I_{C_{\ell-1}^{k+\ell-1}} \\
H_k^{\ell} 
\end{bmatrix}^T.$\\

\hrulefill\\
\begin{rem}\label{DErf443}
Note that the computational complexity of previous Algorithm is dominated by the computation of $S(H_k^{\ell})^T$, a sparse matrix-vector product with computational complexity $O(C_{\ell-1}^{k+\ell-1})$.
\end{rem}
\begin{lem}\label{nheduc543}
The Tanner Graph of ${\EuScript C}_{(I_{C_{\ell-1}^{k+\ell-2}} \mid H_k^{\ell})}$ has girth is 6.                                                                               
\end{lem}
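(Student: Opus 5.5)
The plan is to verify the two halves of "girth $6$" separately: first that the Tanner graph has no $4$-cycles, then that it contains a $6$-cycle. Throughout I read the identity block as $I_{C_{\ell-1}^{k+\ell-1}}$, as in Definition \ref{Akl4327432}; this is forced, since $H_k^{\ell}$ has $C_{\ell-1}^{k+\ell-1}$ rows. So the parity-check matrix is $H=\left[I_{C_{\ell-1}^{k+\ell-1}}\mid H_k^{\ell}\right]$. All incidence computations are done over the integers, because the Tanner--Gallager criterion counts common neighbours. Reducing modulo $2$, as in the $GF(2^m)$ setting, would hide $4$-cycles.

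\emph{Step 1 (no $4$-cycles).} I would compute the Gram matrix of the columns of $H$ in block form:
$$H^TH=\left[\begin{array}{c|c} I_{C_{\ell-1}^{k+\ell-1}} & H_k^{\ell}\\ \hline (H_k^{\ell})^T & (H_k^{\ell})^TH_k^{\ell}\end{array}\right].$$
I then check the off-diagonal entries block by block.
\begin{itemize}
\item The upper-left block is an identity, so its off-diagonal entries are $0$.
\item The two off-diagonal blocks are $H_k^{\ell}$ and its transpose. Since $H_k^{\ell}$ is a $(0,1)$-matrix, these entries are $0$ or $1$.
\item For the lower-right block, Lemma \ref{maths1333432} gives off-diagonal entries at most $1$.
\end{itemize}
Hence every pair of variable nodes shares at most one check node. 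By the Tanner--Gallager criterion quoted before Theorem \ref{gfbd3233}, the Tanner graph has no $4$-cycles.

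\emph{Step 2 (a $6$-cycle exists).} The Tanner graph $T_{H_k^{\ell}}$ is the subgraph of $T_H$ obtained by deleting the variable nodes of the identity columns. It is therefore enough to exhibit a $6$-cycle in $T_{H_k^{\ell}}$. By Lemma \ref{Subm564544}, $H_k^2$ is the upper-left submatrix of $H_k^{\ell}$. By Proposition \ref{6cicl35555}, $H_k^2$ contains the $6$-cycle pattern on its first two columns, first three rows and the column in position $k+1$. This pattern survives as a $6$-cycle in $H_k^{\ell}$, and hence in $H$.

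As a backup, there is also a direct $6$-cycle through an identity column. Take a row $i$ of $H_k^{\ell}$ with ones in columns $a$ and $b$; these exist since $k\ge2$. Take a second row $j$ containing column $a$; this exists since $\ell\ge2$. Then the path $e_i$--row $i$--$b$ closes via row $j$ whenever column $b$ and row $j$ share a neighbour. I would only use this if the submatrix argument looked shaky.

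Combining Steps 1 and 2 gives girth exactly $6$. The main obstacle is Step 2. Proposition \ref{6cicl35555} is argued only by picture, so I would make explicit the three rows and three columns of $H_k^2$ forming the cycle. Take columns $1$, $2$ and $k+1$ and rows $1$, $2$, $3$, and check from the recursion that $(H_k^2)_{2,k+1}=(H_k^2)_{3,k+1}=1$. This holds because the first row of $H_{k-1}^2$ is $(1,\ldots,1,0,\ldots)$ and its second row starts with $1$. Step 1 is routine once Lemma \ref{maths1333432} is accepted.
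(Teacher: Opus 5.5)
Your proof is correct. Half of it coincides with the paper's argument; the other half takes a different route.

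The paper uses a sandwich of submatrices, $H_k^2\unlhd (I\mid H_k^{\ell})\unlhd H_{k+1}^{\ell}$. The Tanner graph of a submatrix is a subgraph, so girth can only drop when passing to the larger matrix. Applying Theorem \ref{gfbd3233} at both ends then gives $6\le g\le 6$. Your Step 2 is exactly the upper half of this sandwich, with the welcome addition of an explicit $6$-cycle (columns $1,2,k+1$ and rows $1,2,3$ of $H_k^2$) in place of the picture in Proposition \ref{6cicl35555}.

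For the lower bound the routes differ. The paper recognizes $(I\mid H_k^{\ell})$ as the bottom block row of $H_{k+1}^{\ell}$ and inherits freedom from $4$-cycles from the next Fractus matrix. You instead compute the Gram matrix of $(I\mid H_k^{\ell})$ directly, and need Lemma \ref{maths1333432} only for $H_k^{\ell}$ itself.

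The paper's version is shorter given the theorem, and the embedding into $H_{k+1}^{\ell}$ confirms your reading that the identity must be $I_{C_{\ell-1}^{k+\ell-1}}$. As written, however, it inherits the hypothesis $2\le\ell<k$ of Theorem \ref{gfbd3233}. It therefore formally covers only $\ell\le k$, and $k\ge 3$ at the $H_k^2$ end. Your computation works for all $k,\ell\ge 2$.

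Drop your backup argument. An identity column has a single nonzero entry, so its variable node is a leaf of the Tanner graph and lies on no cycle. No $6$-cycle can pass through it.

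That same observation gives the quickest proof of all. The cycles of the Tanner graph of $(I\mid H_k^{\ell})$ are exactly those of $H_k^{\ell}$, so the two girths agree. It also makes the identity block and the off-diagonal blocks in your Step 1 automatic.
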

\begin{proof}
Consider the following sequence of submatrices $H_k^2\unlhd (I_{C_{\ell-1}^{k+\ell-2}}\mid H_k^{\ell})\unlhd H_{k+1}^{\ell}$
then by theorem \ref{gfbd3233} we have  $6=g(H_{k+1}^{\ell})\leq g((I_{C_{\ell-1}^{k+\ell-2}}| H_k^{\ell}))\leq g(H_k^2)=6$  hence
$g({\EuScript C}_{(I_{C_{\ell-1}^{k+\ell-2}} \mid H_k^{\ell})})=6$.
\end{proof}
\begin{prop}\label{Dm5654c3}
The  LDPC-code  ${\EuScript C}_{(I_{C_{\ell-1}^{k+\ell-1}} |H_k^{\ell})}$ satisfies $d_{\min}({\EuScript C}_{(I_{C_{\ell-1}^{k+\ell-1}} \mid H_k^{\ell})})=\ell+1$
\end{prop}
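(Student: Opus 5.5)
The plan is to prove the two inequalities $d_{\min}\le \ell+1$ and $d_{\min}\ge \ell+1$ directly from the shape of the parity-check matrix $H=\left[I_{C_{\ell-1}^{k+\ell-1}}\mid H_k^{\ell}\right]$. I will not use the Tanner tree bound (Corollary \ref{HYt5654}), because the identity columns have weight $1$ rather than $\ell$.

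\textbf{Setup.} Over $GF(2^m)$ we have $-1=1$. Writing a word as $x=(u,S)$ with $u\in{\mathbb F}^{C_{\ell-1}^{k+\ell-1}}$ and $S\in{\mathbb F}^{C_{\ell}^{k+\ell-1}}$, the condition $Hx^T=0$ is equivalent to $u^T=H_k^{\ell}S^T$. So every codeword is $(H_k^{\ell}S^T,S)$, as in Algorithm 2, and
$$\operatorname{wt}(x)=\operatorname{wt}(S)+\operatorname{wt}(H_k^{\ell}S^T).$$
A nonzero codeword must have $S\neq 0$.

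\textbf{Upper bound.} Take $S$ to be a unit vector $e_j$. Then $H_k^{\ell}S^T$ is the $j$-th column of $H_k^{\ell}$. By Lemma \ref{regul44556} this column has exactly $\ell$ ones, so the codeword has weight $\ell+1$.

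\textbf{Lower bound.} Let $s=\operatorname{wt}(S)\ge 1$. If $s\ge \ell+1$ we are already done, so assume $1\le s\le \ell$.

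1. For each row $r$ of $H_k^{\ell}$, let $t_r$ be the number of columns in the support of $S$ that have a $1$ in row $r$. Column regularity (Lemma \ref{regul44556}) gives $\sum_r t_r=s\ell$.
2. By Lemma \ref{maths1333432}, the off-diagonal entries of $(H_k^{\ell})^TH_k^{\ell}$ are at most $1$, so two distinct columns share at most one row. Counting pairs of support columns through each row gives $\sum_r \binom{t_r}{2}\le \binom{s}{2}$.
3. A row with $t_r=1$ gives a nonzero entry of $H_k^{\ell}S^T$, whatever the nonzero coefficients of $S$ are, since that entry equals the single coefficient involved.
4. Since $t\le 2\binom{t}{2}$ for $t\ge 2$, the number of rows with $t_r=1$ is at least
$$s\ell-\sum_{t_r\ge2}t_r\;\ge\; s\ell-2\binom{s}{2}\;=\;s\ell-s(s-1).$$
5. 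Hence $\operatorname{wt}(x)\ge s+s\ell-s(s-1)=s(\ell+2-s)$. This function of $s$ is concave and equals $\ell+1$ at both $s=1$ and $s=\ell+1$. Therefore it is at least $\ell+1$ throughout $1\le s\le \ell+1$.

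Combining the two cases gives $d_{\min}\ge\ell+1$, and with the upper bound, $d_{\min}=\ell+1$.

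\textbf{Main obstacle.} The delicate step is step 4: it must hold over $GF(2^m)$ with arbitrary nonzero coefficients, not just over $GF(2)$. This is why I count only rows covered exactly once, and why the at-most-one-common-row property from Lemma \ref{maths1333432} is essential.
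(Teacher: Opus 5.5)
Your proof is correct. The upper bound (take $S=e_j$ and use that column $j$ of $H_k^{\ell}$ has exactly $\ell$ ones) is the same as the paper's.

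The lower bound takes a genuinely different route. The paper invokes Lemma \ref{nheduc543} to get girth $6$ for the Tanner graph of $[I\mid H_k^{\ell}]$, by sandwiching it between $H_k^2$ and $H_{k+1}^{\ell}$. It then cites Tanner's tree bound to conclude $d_{\min}\ge \ell+1$. You instead double count over the rows of $H_k^{\ell}$, using only column regularity and the fact from Lemma \ref{maths1333432} that two columns share at most one row.

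Your objection to the tree bound is well taken. The bound $1+\ell$ presumes every variable node has degree $\ell$, which fails for the identity columns of degree $1$. As written, the paper's argument therefore silently needs two extra points: a nonzero codeword always has $S\neq 0$, and the tree must be rooted at a support position inside $H_k^{\ell}$.

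Once that is said, the repair is shorter than your counting. Suppose column $c$ of $H_k^{\ell}$ lies in the support. Each of its $\ell$ rows must contain a second support position, since otherwise that syndrome entry equals the single nonzero coefficient $x_c$. These $\ell$ positions are distinct, because no two columns of $[I\mid H_k^{\ell}]$ share two rows. This gives weight at least $\ell+1$ at once.

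Your version costs a little more but buys several things. It is fully self-contained and works verbatim over any field with arbitrary nonzero coefficients. It needs only the absence of $4$-cycles, not the full girth-$6$ statement (which the paper ultimately derives from Theorem \ref{gfbd3233}, stated only for $2\le\ell<k$). And it yields the stronger estimate $\operatorname{wt}(x)\ge s(\ell+2-s)$ in terms of $s=\operatorname{wt}(S)$.
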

\begin{proof}
By lemma \ref{nheduc543} and Tanner´s Tree Bound we have $\ell+1\leq d_{ \min}({\EuScript C}_{(I_{C_{\ell-1}^{k+\ell-1}} \mid H_k^{\ell})})$.\\
For any codeword, we can parametrize it in the form  $$x(S)=(S\mid S(H_k^{\ell})^T)$$ for $S\in {\mathbb F}^{C_{\ell}^{k+\ell-1}}$
and its weight is $$d_{ \min}({\EuScript C}_{(I_{C_{\ell-1}^{k+\ell-1}} \mid H_k^{\ell})})=\min_{v\neq0}(wt(S)+wt(S(H_k^{\ell})^T))$$
If $wt(S)=1$ then $S=e_j$, the $j$-th canonical vector, and $S(H_k^{\ell})^T$ is equal to the row $j$ of $(H_k^{\ell})^T$, each row has weight $\ell$. This implies that 
$$wt(x)=1+\ell$$ and this means that $d_{ \min}({\EuScript C}_{(I_{C_{\ell-1}^{k+\ell-1}} \mid H_k^{\ell})})\leq \ell+1$
and $\ell+1\leq d_{ \min}({\EuScript C}_{(I_{C_{\ell-1}^{k+\ell-1}} \mid H_k^{\ell})})\leq \ell+1$
\end{proof}
\begin{example}
Consider the matrix $(I_{10}|A^3_3)$  where
$$A^3_3= \begin{bmatrix}
1 & 1 & 1 & 0 & 0 & 0 & 0 & 0 & 0& 0 \\
1 & 0 & 0 & 1 & 1 & 0 & 0 & 0 & 0& 0 \\
0 & 1 & 0 & 1 & 0 & 1 & 0 & 0 & 0& 0\\
0 & 0 & 1 & 0 & 1 & 1 & 0 & 0 & 0& 0\\
0 & 0 & 0 & 0 & 0 & 0 & 1 & 1 & 0& 0\\
0 & 1 & 0 & 0 & 0 & 0 & 1 & 0 & 1& 0\\
0 & 0 & 1 & 0 & 0 & 0 & 0 & 1 & 1& 0\\
0 & 0 & 0 & 1 & 0 & 0 & 1 & 0 & 0 & 1\\
0 & 0 & 0 & 0 & 1 & 0 & 0 & 1 & 0 & 1\\
0 & 0 & 0 & 0 & 0 & 1 & 0 & 0 & 1 & 1\\
\end{bmatrix}$$
then  ${\EuScript C}_{(I_{10}|A^3_3)}$ is a $(20, 10, 4)$-LDPC code.
\end{example}
\begin{lem} \label{Hjkuy543} 
The Tanner graph of ${\EuScript C}_k^{\ell}$ satisfies \[ T_{{\EuScript C}_k^{\ell}} = T_{{\EuScript C}_k^{\ell-1}} \sqcup T_{{\EuScript C}_{(I_{C_{\ell-1}^{k+\ell-2}} \mid H_{k-1}^{\ell})}}. \] \end{lem}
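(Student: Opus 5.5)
The plan is to read the statement directly off the block form of $H_k^{\ell}$ in Definition \ref{def56430001}, i.e.\ the $1$-fragmentation of Definition \ref{obsyyt65}. Here $\sqcup$ should be understood as a gluing of two Tanner graphs along a common set of nodes, not a disjoint union of vertex sets. Write
$$H_k^{\ell}=\left[\begin{array}{c|c} H_k^{\ell-1} & 0 \\ \hline I_{C_{\ell-1}^{k+\ell-2}} & H_{k-1}^{\ell}\end{array}\right].$$
Partition the check nodes (rows) into $R_1$, the first $C_{\ell-2}^{k+\ell-2}$ rows, and $R_2$, the remaining $C_{\ell-1}^{k+\ell-2}$ rows. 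Partition the variable nodes (columns) into $V_1$, the first $C_{\ell-1}^{k+\ell-2}$ columns, and $V_2$, the last $C_{\ell}^{k+\ell-2}$ columns. By Definition \ref{56543lkj}, the edge set of $T_{{\EuScript C}_k^{\ell}}$ is the union of the edges coming from the four blocks.

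\textbf{Step 1.} The block $(R_1,V_1)$ is exactly $H_k^{\ell-1}$, the parity-check matrix of ${\EuScript C}_k^{\ell-1}$. So the subgraph induced on $R_1\cup V_1$ is $T_{{\EuScript C}_k^{\ell-1}}$.

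\textbf{Step 2.} The block $(R_1,V_2)$ is zero, so no edge joins $R_1$ to $V_2$.

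\textbf{Step 3.} The rows in $R_2$ form the matrix $\left[I_{C_{\ell-1}^{k+\ell-2}}\mid H_{k-1}^{\ell}\right]$. This is precisely the parity-check matrix of the truncated code in Definition \ref{Akl4327432}, with $k$ replaced by $k-1$; note that $C_{\ell-1}^{(k-1)+\ell-1}=C_{\ell-1}^{k+\ell-2}$, so the sizes agree. Hence the subgraph on $R_2\cup V_1\cup V_2$ is $T_{{\EuScript C}_{(I_{C_{\ell-1}^{k+\ell-2}}\mid H_{k-1}^{\ell})}}$.

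\textbf{Step 4.} Every edge of $T_{{\EuScript C}_k^{\ell}}$ has its check endpoint in $R_1$ or in $R_2$. By Steps 1 and 2 the $R_1$-edges all lie in the first graph, and by Step 3 the $R_2$-edges all lie in the second. The two graphs have disjoint check-node sets and disjoint edge sets. Their vertex sets overlap only in $V_1$, which the identity block matches bijectively to the check nodes $R_2$. The union of the two graphs is therefore exactly $T_{{\EuScript C}_k^{\ell}}$, which is the claimed decomposition.

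The only real obstacle is interpretive: the symbol $\sqcup$ has to be made precise. I will state explicitly that it means an edge-disjoint union of the two graphs with the variable nodes $V_1$ identified, in the same spirit as the concatenation notation $B\sqcup A$ used in \eqref{ABmatrix323}. I will also check the index bookkeeping in Step 3, since this is where a size mismatch could hide.
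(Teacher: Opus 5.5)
Your proposal is correct and follows the paper's proof. Both split the rows of the $1$-fragmentation into the complementary blocks $(H_k^{\ell-1}\mid 0)$ and $(I_{C_{\ell-1}^{k+\ell-2}}\mid H_{k-1}^{\ell})$, so that check nodes and edges split disjointly while variable nodes are shared. The only cosmetic difference is that the paper identifies the Tanner graph of $H_k^{\ell-1}$ with that of $(H_k^{\ell-1}\mid 0)$, so both pieces carry all $C_{\ell}^{k+\ell-1}$ variable nodes, whereas you let the two vertex sets overlap only in $V_1$; either way, your explicit reading of $\sqcup$ makes precise what the paper leaves implicit.
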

\begin{proof}
Using the notation introduced in Definition \ref{56543lkj} for variable nodes, check nodes and edges,  we have
$$T_{{\EuScript C}_k^{\ell}}=(R_{{\EuScript C}_k^{\ell}}, C_{{\EuScript C}_k^{\ell}}, E_{{\EuScript C}_k^{\ell}})$$
$$T_{{\EuScript C}_k^{\ell-1}}=(R_{{\EuScript C}_k^{\ell-1}}, C_{{\EuScript C}_k^{\ell-1}}, E_{{\EuScript C}_k^{\ell-1}})$$ 
$$T_{{\EuScript C}_{(I_{C_{\ell-1}^{k+\ell-2}} \mid H_{k-1}^{\ell})}}=(R_{(I_{C_{\ell-1}^{k+\ell-2}} \mid H_{k-1}^{\ell})}, 
C_{(I_{C_{\ell-1}^{k+\ell-2}} \mid H_{k-1}^{\ell})}, E_{(I_{C_{\ell-1}^{k+\ell-2}} \mid H_{k-1}^{\ell})})$$
are the Tanner graphs respectively. 
By \ref{m4tr1x67532}, it is easy to see that the submatrices $(H_k^{\ell-1} \mid 0)$ and $(I_{C_{\ell-1}^{k+\ell-2}} \mid H_{k-1}^{\ell})$ have sizes $C_{\ell-2}^{k+\ell-2}\times C_{\ell}^{k+\ell-1}$ and $C_{\ell-1}^{k+\ell-2}\times C_{\ell}^{k+\ell-1}$, respectively. They are disjoint and complementary submatrices of $H_k^{\ell}$.
 Then we can conclude the following
$$R_{{\EuScript C}_k^{\ell}}= R_{{\EuScript C}_k^{\ell-1}}\cup R_{(I_{C_{\ell-1}^{k+\ell-2}} \mid H_{k-1}^{\ell})}$$
and $$E_{{\EuScript C}_k^{\ell}}=E_{{\EuScript C}_k^{\ell-1}}\cup E_{(I_{C_{\ell-1}^{k+\ell-2}} \mid H_{k-1}^{\ell})}$$
Also, given that $T_{{\EuScript C}_k^{\ell-1}}=T_{{\EuScript C}_(H_k^{\ell-1}|0)}$ and $|C_{{\EuScript C}_k^{\ell}}|=|C_{{\EuScript C}_k^{\ell-1}}|=|C_{(I_{C_{\ell-1}^{k+\ell-2}} \mid H_{k-1}^{\ell})}|=C_{\ell}^{k+\ell-1}$
we have to
$$C_{{\EuScript C}_k^{\ell}}=C_{{\EuScript C}_k^{\ell-1}}=C_{(I_{C_{\ell-1}^{k+\ell-2}} \mid H_{k-1}^{\ell})}$$
 Therefore, 
$T_{{\EuScript C}_k^{\ell}}
=
T_{{\EuScript C}_k^{\ell-1}}
\sqcup
T_{{\EuScript C}_{(
I_{C_{\ell-1}^{k+\ell-2}}
\mid H_{k-1}^{\ell}
)}}.$
\end{proof}
\begin{prop}\label{65tfr2BBBII}
Let $k, \ell \geq 2$. Then the Tanner graph $T_{{\EuScript C}_k^{\ell}}$ of fractus codes satisfies 
\[
T_{{\EuScript C}_k^{\ell}}
=
T_{{\EuScript C}_{{\EuScript C}_k^1}}
\sqcup
T_{{\EuScript C}_{(I_{C_1^{k}} \mid H_{k-1}^2)}}
\sqcup
\ldots
\sqcup
T_{{\EuScript C}_{(I_{C_{\ell-2}^{k+\ell-3}} \mid H_{k-1}^{\ell-1})}}
\sqcup
T_{{\EuScript C}_{(I_{C_{\ell-1}^{k+\ell-2}} \mid H_{k-1}^{\ell})}}
\]\end{prop}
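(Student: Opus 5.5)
The plan is to argue by induction on $\ell$, applying Lemma \ref{Hjkuy543} repeatedly to peel off the upper-left block of the recursive definition \eqref{m4tr1x67532}.

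\textbf{Base case.} For $\ell=2$, Lemma \ref{Hjkuy543} gives
\[
T_{{\EuScript C}_k^{2}}=T_{{\EuScript C}_k^{1}}\sqcup T_{{\EuScript C}_{(I_{C_1^{k}}\mid H_{k-1}^2)}}.
\]
Here $T_{{\EuScript C}_k^1}$ means the Tanner graph of the one-row matrix $(H_k^1\mid 0)$. This is exactly the statement for $\ell=2$, with the middle terms absent.

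\textbf{Inductive step.} Assume the decomposition holds for $\ell-1$:
\[
T_{{\EuScript C}_k^{\ell-1}}=T_{{\EuScript C}_k^1}\sqcup T_{{\EuScript C}_{(I_{C_1^{k}}\mid H_{k-1}^2)}}\sqcup\cdots\sqcup T_{{\EuScript C}_{(I_{C_{\ell-2}^{k+\ell-3}}\mid H_{k-1}^{\ell-1})}}.
\]
Lemma \ref{Hjkuy543} gives $T_{{\EuScript C}_k^{\ell}}=T_{{\EuScript C}_k^{\ell-1}}\sqcup T_{{\EuScript C}_{(I_{C_{\ell-1}^{k+\ell-2}}\mid H_{k-1}^{\ell})}}$. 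Substituting the inductive hypothesis into this identity yields the claimed formula.

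\textbf{Associativity of $\sqcup$.} This requires $\sqcup$ to be associative and the pieces to be compatible. I will make $\sqcup$ precise in the sense used in Lemma \ref{Hjkuy543}. Each term corresponds to a horizontal strip of rows of $H_k^{\ell}$, padded with zero columns to the full width $C_\ell^{k+\ell-1}$. The union is then a disjoint union of edge sets over a common set of length-$C_\ell^{k+\ell-1}$ nodes. With this convention, associativity is just associativity of disjoint unions of edge and node sets.

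\textbf{Disjointness of the strips.} Unwinding the recursion shows that $H_k^{\ell}$ is block lower triangular. Its row strips have heights $C_0^{k-1}=1, C_1^{k}, \dots, C_{\ell-1}^{k+\ell-2}$. The $j$-th strip is $(I_{C_{j-1}^{k+j-2}}\mid H_{k-1}^{j})$, placed in the columns indexed by the upper-left corner $H_k^{j}$ and padded with zeros. These strips partition the rows, and their heights sum to $C_{\ell-1}^{k+\ell-1}$ by repeated use of Pascal's identity \eqref{HHYTR}. So the check nodes are partitioned, and the edge sets are disjoint.

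\textbf{Main obstacle.} The main difficulty is bookkeeping rather than ideas. Lemma \ref{Hjkuy543} identifies each piece with the Tanner graph of a smaller matrix. In the recursion, however, the pieces live inside $H_k^{\ell-1}$, which occupies only the first $C_{\ell-1}^{k+\ell-2}$ columns of $H_k^\ell$. One must therefore check that embedding $H_k^{\ell-1}$ as the upper-left block (Lemma \ref{Subm564544}) carries each strip of the decomposition of $T_{{\EuScript C}_k^{\ell-1}}$ to the corresponding strip of $H_k^\ell$, adding only isolated column nodes. I would state this as an explicit remark: padding a matrix with zero columns adds isolated nodes and does not change edges. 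Then $T_{(H\mid 0)}$ and $T_H$ agree up to isolated nodes, and the identification in Lemma \ref{Hjkuy543} is understood modulo such nodes.
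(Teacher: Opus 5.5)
Your proposal is correct and is the paper's argument: the paper also iterates Lemma \ref{Hjkuy543}, peeling off $T_{{\EuScript C}_k^{\ell-1}}$, then $T_{{\EuScript C}_k^{\ell-2}}$, and so on down to $T_{{\EuScript C}_k^1}$, which is your induction on $\ell$. Your extra bookkeeping makes precise what the paper leaves implicit: you read $\sqcup$ as a union over the common $C_\ell^{k+\ell-1}$ column nodes with each row strip padded by zero columns, you check that the strip heights $C_{j-1}^{k+j-2}$ sum to $C_{\ell-1}^{k+\ell-1}$, and you note that $T_{{\EuScript C}_k^{\ell-1}}$ and $T_{(H_k^{\ell-1}\mid 0)}$ agree only up to isolated nodes.
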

\begin{proof}
Of lemma \ref{Hjkuy543} we have that 
\[
T_{{\EuScript C}_k^{\ell}}
=
T_{{\EuScript C}_k^{\ell-1}}
\sqcup
T_{{\EuScript C}_{(I_{C_{\ell-1}^{k+\ell-2}}\mid H_{k-1}^{\ell})}}
\]
and $T_{{\EuScript C}_k^{\ell-1}}
=
T_{{\EuScript C}_k^{\ell-2}}
\sqcup
T_{{\EuScript C}_{(I_{C_{\ell-2}^{k+\ell-3}} \mid H_{k-1}^{\ell-1})}}$
we easily conclude 
$$
T_{{\EuScript C}_k^{\ell}}
=
T_{{\EuScript C}_k^{\ell-2}}
\sqcup
T_{{\EuScript C}_{(I_{C_{\ell-2}^{k+\ell-3}} \mid H_{k-1}^{\ell-1})}}
\sqcup
T_{{\EuScript C}_{(I_{C_{\ell-1}^{k+\ell-2}} \mid H_{k-1}^{\ell})}}
$$
$$
T_{{\EuScript C}_k^{\ell}}
=
T_{{\EuScript C}_k^{\ell-3}}
\sqcup
T_{{\EuScript C}_{(I_{C_{\ell-3}^{k+\ell-4}} \mid H_{k-1}^{\ell-2})}}
\sqcup
T_{{\EuScript C}_{(I_{C_{\ell-2}^{k+\ell-3}} \mid H_{k-1}^{\ell-1})}}
\sqcup
T_{{\EuScript C}_{(I_{C_{\ell-1}^{k+\ell-2}} \mid H_{k-1}^{\ell})}}
$$
If we continue in this way, we can conclude that
\[
T_{{\EuScript C}_k^{\ell}}
=
T_{{\EuScript C}_{{\EuScript C}_k^1}}
\sqcup
T_{{\EuScript C}_{(I_{C_1^{k}} \mid H_{k-1}^2)}}
\sqcup
\ldots
\sqcup
T_{{\EuScript C}_{(I_{C_{\ell-2}^{k+\ell-3}} \mid H_{k-1}^{\ell-1})}}
\sqcup
T_{{\EuScript C}_{(I_{C_{\ell-1}^{k+\ell-2}} \mid H_{k-1}^{\ell})}}
\]
\end{proof}
\begin{example}
Let us consider the $(3,6)$-fractus code ${\EuScript C}_6^3$ associated with the fractus matrix $A_6^3$. This code is a $[35, 3, 6]$ code with rate $0.5$, and its encoding complexity is $O(56)+O(49)$. 

\begin{center}
Table 1: Parameters of example codes with rate=1/2.\\
\bigskip
\begin{tabular}{l c c c c c}
\hline
Code & $k$  & $\ell$ & $N$ &  $r$ &  $g$\\
\hline
${\EuScript C}_6^3$ & 6 & 3 & 56   & 1/2  &    6 \\
Mac Kay's & 6 & 3 & 1008   & 1/2  &    6 \\
Tanner QC & 6 & 3 & 1002  & 1/2  &    8 \\
${\EuScript C}_{2\ell}^{\ell}$ & $2\ell$ & $\ell$ & $C_{\ell-1}^{k+\ell-2}$ & 1/2 & 6\\

\hline
\end{tabular}
\end{center}
\bigskip
To demonstrate the error-correction performance, we constructed the rate-$1/2$ Fractus code $C$. For comparison, we also constructed two classes of codes: Tanner's QC codes and MacKay's random codes. The MacKay random codes and Tanner's QC codes were obtained from \cite{2.0}. Both the MacKay random codes and Tanner's QC codes used for comparison are high-performance codes. The parameters of the selected codes are presented in Table 1.
Table 1 lists the typical values of the row weight $k$, column weight $j$, code length $N$, code rate $R$, and girth $g$ for the three classes of codes.
\end{example}
\begin{example}
Let ${\mathbb F}=GF(q)$ and suppose that $q > \ell$. We define a family $\{C_{2\ell}^{\ell} : \ell\geq 2\}$ of  $(2\ell, \ell)$-Fractus codes with parameters $[C_{\ell}^{3\ell-1},\ell,d]$ and  
constant rate $r=1-\frac{\ell}{2\ell}=1/2$. By corollary \ref{HYt5654} the minimum distance $d$ satisfies $ d>\ell+1$. These codes have constant girth equal to 6, 
$\upsilon=\text{number of 1´s in the matrix} \; A_{2\ell}^{\ell}$

and density of ones given by $\delta:=\frac{2\ell}{C_{\ell}^{3\ell-1}}$
\begin{center}
Table 2: Parameters of family  with rate=1/2.\\
\bigskip
\begin{tabular}{l c c c c c}
\hline
$\ell$ & $2\ell$  & $\ell+1< d$ & $N$ &  $\upsilon$ & $\delta$ \\
\hline
2 & 10 & 3 & 10   &     20 & 0.40   \\
3& 6 & 4 & 56   &    168  &  0.1071  \\
4& 8 & 5 & 330  &   1320 &0.0242   \\
5& 10 & 6 & 1002  &  10,010  & $4.9\times 10^{-3}$  \\
\hline
\end{tabular}
\end{center}
\bigskip
\end{example}
\section{Fragmentation of codewords}
As stated in Definition~\ref{obsyyt65}, the $r$-fragmentation $H_k^{\ell}(r)$ is a block representation of the same matrix $H_k^{\ell}$, for $0 \leq r \leq \min\{k-1,\ell-1\}$. Thus, we have
$$H_k^{\ell}=H_k^{\ell}(0)=H_k^{\ell}(1)=\cdots =H_k^{\ell}(r-1)=H_k^{\ell}(\min\{k-1, \ell-1 \}) $$
\begin{prop}\label{TrePP3d}
Let   $k,\ell \geq 2$ be integers, $0\leq r \leq \min\{k-1, \ell-1 \} $ and let $H_k^{\ell}(r)$ be the $r$-fragmentation  of $H_k^{\ell}$. Then $H_k^{\ell}(r)$ admits a block lower triangular representation such that each block on the main diagonal is a $(k^{\prime},\ell^{\prime})$-fractus submatrix arising from the recursive contruction of  $H_k^{\ell}$.  There are exactly $2^r$ fractus submatrices on the main diagonal, and every other nonzero block is an identity matrix.
\end{prop}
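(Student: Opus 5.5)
I will argue by induction on $r$, proving a slightly stronger statement that keeps track of the shape of the diagonal blocks. The stronger statement is this: $H_k^{\ell}(r)$ is a $2^r\times 2^r$ block matrix whose blocks are indexed by binary strings $\sigma\in\{0,1\}^r$, ordered lexicographically. The diagonal block at position $\sigma$ is $H_{k-|\sigma|}^{\ell-r+|\sigma|}$, where $|\sigma|$ is the number of ones in $\sigma$. An off-diagonal block at $(\sigma,\tau)$ is an identity matrix when $\tau$ is obtained from $\sigma$ by changing exactly one $1$ into a $0$; all other off-diagonal blocks are zero. Since $\tau<\sigma$ lexicographically in that case, this description already gives block lower triangularity. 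It matches the displays of $H_k^{\ell}(2)$ and $H_k^{\ell}(3)$ in Example \ref{GTR543hu}, and it makes the count of $2^r$ diagonal fractus blocks immediate.

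The base cases are direct. For $r=0$ there is a single block $H_k^{\ell}$. For $r=1$ the claim is exactly Definition \ref{obsyyt65} together with the size bookkeeping in Definition \ref{def56430001}.

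For the inductive step, assume the statement for $r-1$ and for all admissible parameters. Apply it to $H_k^{\ell-1}(r-1)$ and to $H_{k-1}^{\ell}(r-1)$. The bound $r\le\min\{k-1,\ell-1\}$ guarantees $r-1\le\min\{k-1,\ell-2\}$ and $r-1\le\min\{k-2,\ell-1\}$, so every parameter stays at least $1$, and the degenerate cases $H^1$ and $H_1$ of Definition \ref{def56430001} are allowed as fractus blocks. Substituting both into the $2\times2$ recursion of Definition \ref{obsyyt65} gives the following.
\begin{enumerate}
\item The upper-left quadrant contributes the strings $0\sigma'$, with diagonal blocks $H_{k-|\sigma'|}^{(\ell-1)-(r-1)+|\sigma'|}$.
\item The lower-right quadrant contributes the strings $1\sigma'$, with diagonal blocks $H_{(k-1)-|\sigma'|}^{\ell-(r-1)+|\sigma'|}$.
\item Both agree with the formula $H_{k-|\sigma|}^{\ell-r+|\sigma|}$.
\item The upper-right quadrant is zero.
\end{enumerate}

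The main obstacle is the lower-left identity $I_{C_{\ell-1}^{k+\ell-2}}$. I must show that, when it is cut along the row partition of $H_{k-1}^{\ell}(r-1)$ and the column partition of $H_k^{\ell-1}(r-1)$, it becomes a block-diagonal matrix of identities linking $1\sigma'$ to $0\sigma'$. This requires the two partitions to have the same sequence of block sizes. The row size of the $\sigma'$ block of $H_{k-1}^{\ell}(r-1)$ is $C_{\ell-r+|\sigma'|}^{k+\ell-r-1}$. The column size of the $\sigma'$ block of $H_k^{\ell-1}(r-1)$ is $C_{\ell-r+|\sigma'|}^{k+\ell-r-1}$ as well, so the sizes agree. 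I will verify this by tracking sizes through the induction using the dimension formula $C_{\ell'-1}^{k'+\ell'-1}\times C_{\ell'}^{k'+\ell'-1}$. Once the sizes match, a square identity split conformally along matching partitions is block diagonal with identity blocks. These identity blocks are exactly the new off-diagonal entries predicted by the strengthened statement.

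Finally, the nonzero off-diagonal blocks inherited from the two quadrants are identities by the induction hypothesis, and the newly created ones are identities by the step above. This completes the induction and hence the proposition.
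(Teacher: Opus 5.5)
Your proof is correct and follows the same induction on $r$ as the paper: substitute the $(r-1)$-fragmentations of $H_k^{\ell-1}$ and $H_{k-1}^{\ell}$ into the $2\times 2$ recursion and count $2^{r-1}+2^{r-1}$ diagonal blocks. Where you go further is the strengthened, binary-string-indexed hypothesis (diagonal blocks $H_{k-|\sigma|}^{\ell-r+|\sigma|}$) together with the size match $C_{\ell-r+|\sigma'|}^{k+\ell-r-1}$, which shows that $I_{C_{\ell-1}^{k+\ell-2}}$ really splits into identity blocks linking $1\sigma'$ to $0\sigma'$ and zeros elsewhere on the refined $2^r\times 2^r$ grid; the paper just lists that identity as one more nonzero block without checking this compatibility, so your version is the more complete one, and it also gives the mosaic and its binomial weights directly.
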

\begin{proof}
We proceed by induction on $r$.
For $r=0$, by definition,
$H_k^\ell(0)=H_k^\ell.$
Thus, $H_k^\ell(0)$ itself is a $(k,\ell)$-fractus matrix and constitutes the unique block on the main diagonal. Hence, there is exactly $2^0=1$
fractus submatrix on the main diagonal, and there are no other nonzero blocks. Therefore, the statement holds for $r=0$.

Now suppose that the statement holds for $r-1$, where
$1\leq r\leq \min\{k-1,\ell-1\}.$
By the recursive definition of the $r$-fragmentation,
$$H_k^{\ell}(r)=\left[
\begin{tabular}{c|c}
$H_k^{\ell-1}(r-1)$ & $0$ \cr\hline
$I_{C_{\ell-1}^{k+\ell-2}}$ & $H_{k-1}^{\ell}(r-1)$
\end{tabular}
\right].$$
By the induction hypothesis, $H_k^{\ell-1}(r-1)$ admits a block lower triangular representation whose main diagonal consists of exactly $2^{r-1}$ fractus submatrices of $H_k^{\ell-1}$, and whose other nonzero blocks are identity matrices. Likewise, $H_{k-1}^{\ell}(r-1)$ admits a block lower triangular representation whose main diagonal consists of exactly $2^{r-1}$ fractus submatrices of $H_{k-1}^{\ell}$, and whose other nonzero blocks are identity matrices.

Substituting these two block representations into the recursive definition of $H_k^\ell(r)$, we obtain a block lower triangular representation of $H_k^\ell(r)$. Its main diagonal is obtained by concatenating the main diagonals of $H_k^{\ell-1}(r-1)$ and $H_{k-1}^{\ell}(r-1)$. Consequently, the number of fractus submatrices on the main diagonal is $2^{r-1}+2^{r-1}=2^r.$
Moreover, every fractus submatrix occurring on the main diagonal is a $(k^{\prime}, \ell^{\prime})$-fractus submatrix of $H_k^\ell$. The only additional nonzero blocks introduced by the recursive construction are identity matrices, namely
$I_{C_{\ell-1}^{k+\ell-2}},$ together with the identity blocks inherited from the two $(r-1)$-fragmented forms. Hence, every nonzero block outside the main diagonal is an identity matrix. Therefore, $H_k^\ell(r)$ admits a block lower triangular representation satisfying all the required properties. This completes the induction.
\end{proof}
\begin{defn}\label{obs36754} 
Let $k,\ell\geq 2$ be integers, let $0\leq r\leq\min\{k-1,\ell-1\}$, and let $H_k^\ell$ be a $(k,\ell)$-fractus matrix. The main diagonal of $H_k^\ell(r)$, the $r$-fragmentation of $H_k^\ell$, viewed as a vector of length $2^r$ and ordered according to the $r$-fragmentation, is called the $(k,\ell,r)$-\textit{mosaic} and is denoted by ${\mathfrak M}(k,\ell,r)$. The elements of ${\mathfrak M}(k,\ell,r)$ are referred to as \textit{tessellations}. We say that a tessellation has \textit{weight} $p$ if it occurs $p$ times in the mosaic ${\mathfrak M}(k,\ell,r)$. We denote by ${\mathfrak A}(k,\ell,r)=\{ H_k^{\ell-r}, H_{k-1}^{\ell-(r+1)},\ldots,  H_{k-(r+1)}^{\ell-1}, H_{k-r}^{\ell}\}$ 
the support of the mosaic ${\mathfrak M}(k,\ell,r)$ and refer to it as the \textit{Atlas of tessellations}.
The \textit{mosaic weight vector}, denoted by $wt({\mathfrak A}(k,\ell,r))$, is a vector whose entries correspond to the weights of the elements of the Atlas of tessellations, ordered according to the decreasing sequence of indices
$k>k-1>\cdots>k-r.$
\end{defn}
\begin{lem}
The weights of the tessellations in the mosaic ${\mathfrak M}(k,\ell,r)$ associated with the $r$-fragmentation $H_k^\ell(r)$ follow a pattern analogous to that of Newton's binomial
coefficients. In particular, the sequence appears to satisfy Pascal's recurrence and therefore exhibits a binomial-type combinatorial structure.
\end{lem}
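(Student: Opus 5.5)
Unwinding the recursion for $H_k^{\ell}(r)$ in Definition~\ref{obsyyt65} shows that each step replaces a diagonal tessellation $H_a^b$ by the two tessellations $H_a^{b-1}$ and $H_{a-1}^{b}$. The indices $(k,\ell)$ in the recursion are mere labels, so the diagonal of $H_k^{\ell}(r)$ is the diagonal of $H_k^{\ell-1}(r-1)$ followed by the diagonal of $H_{k-1}^{\ell}(r-1)$. This is exactly the block structure in Proposition~\ref{TrePP3d}. So I would prove the precise statement: for $0\le j\le r$, the tessellation $H_{k-j}^{\ell-r+j}$ occurs exactly $C_j^r$ times in ${\mathfrak M}(k,\ell,r)$, and no other tessellation occurs. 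Consequently
$$wt({\mathfrak A}(k,\ell,r))=(C_0^r,C_1^r,\ldots,C_r^r),$$
ordered by the decreasing first index $k>k-1>\cdots>k-r$. Summing gives $\sum_j C_j^r=2^r$, in agreement with Proposition~\ref{TrePP3d}. The lower index of each tessellation is forced: a diagonal block obtained after $j$ steps of type $k\mapsto k-1$ and $r-j$ steps of type $\ell\mapsto\ell-1$ is $H_{k-j}^{\ell-(r-j)}$. (The Atlas listed in Definition~\ref{obs36754} has an index typo in its middle terms; I would work with this corrected form.)

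The proof proceeds by induction on $r$, for all admissible $(k,\ell)$ simultaneously. For $r=0$ the mosaic is $(H_k^{\ell})$, with weight vector $(1)=(C_0^0)$. For the inductive step, write $m_{k,\ell,r}(j)$ for the number of occurrences of $H_{k-j}^{\ell-r+j}$ in ${\mathfrak M}(k,\ell,r)$. Because the mosaic of $H_k^\ell(r)$ is the concatenation of ${\mathfrak M}(k,\ell-1,r-1)$ and ${\mathfrak M}(k-1,\ell,r-1)$, we have
$$m_{k,\ell,r}(j)=\#\{H_{k-j}^{\ell-r+j}\text{ in }{\mathfrak M}(k,\ell-1,r-1)\}+\#\{H_{k-j}^{\ell-r+j}\text{ in }{\mathfrak M}(k-1,\ell,r-1)\}.$$
In the first mosaic, $H_{k-j}^{(\ell-1)-(r-1)+j}$ is the term with index $j$, so it contributes $C_j^{r-1}$. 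In the second mosaic, $H_{(k-1)-(j-1)}^{\ell-(r-1)+(j-1)}$ is the term with index $j-1$, so it contributes $C_{j-1}^{r-1}$. Here we use the convention $C_{-1}^{r-1}=C_r^{r-1}=0$. Pascal's identity \eqref{HHYTR} then gives $m_{k,\ell,r}(j)=C_j^{r-1}+C_{j-1}^{r-1}=C_j^r$, which closes the induction.

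The main obstacle is the bookkeeping, not the algebra. First, the condition $r\le\min\{k-1,\ell-1\}$ must be shown to keep every intermediate index $\ge1$ along the recursion, so that each tessellation really is a Fractus matrix in the sense of Definition~\ref{def56430001}. Second, I would check that the two submosaics never produce the same label at different $j$. This holds because the sum of the two indices is constant ($k+\ell-r$), and the first index alone determines $j$. Hence the counts add cleanly, and the weights are exactly the binomial coefficients.
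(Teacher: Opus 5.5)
Your proof is correct and follows the paper's route: induction on $r$ via the concatenation ${\mathfrak M}(k,\ell,r)=({\mathfrak M}(k,\ell-1,r-1)\mid{\mathfrak M}(k-1,\ell,r-1))$ combined with Pascal's identity \eqref{HHYTR}. Your entry-wise bookkeeping makes precise what the paper's computation expresses only through summed weights: you show that each $H_{k-j}^{\ell-r+j}$ occurs exactly $C_j^r$ times, and you correct the index typo in the Atlas.
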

\begin{proof}
The proof proceeds by induction on $r$, where $0 \leq r \leq \min\{k-1,\ell-1\}$. To illustrate the behavior of the weight distribution of the tessellations, we use a broader base case than is required for the induction.
Based on example \ref{GTR543hu}, we have that the mosaics are given by:
\begin{align*}
{\mathfrak M}{(k, \ell, 0)}&=(H_k^{\ell})\\
{\mathfrak M}{(k, \ell, 1)}&=(H_k^{\ell-1},  H_{k-1}^{\ell})\\
{\mathfrak M}{(k, \ell, 2)}&=(H_k^{\ell-2}, H_{k-1}^{\ell-1}, H_{k-1}^{\ell-1}, H_{k-2}^{\ell})\\
{\mathfrak M}{(k, \ell, 3)}&=(H_k^{\ell-3}, H_{k-1}^{\ell-2}, H_{k-1}^{\ell-2}, H_{k-2}^{\ell-1}, H_{k-1}^{\ell-2}, H_{k-2}^{\ell-1}, H_{k-2}^{\ell-1}, H_{k-3}^{\ell})
\end{align*}
The mosaic weight vector,
\begin{table}[h]
  \centering
  \begin{tabular}{c|c}
    r &  ${wt(\mathfrak A}(k,\ell,r))$ \\ \hline
   0 &  1  \\ \hline
   1 &  (1,1) \\ \hline
   2 & (1,2,1) \\ \hline
   3 &  (1,3,3,1) \\ \hline
   4 & (1,4,6,4,1) \\ \hline
  \end{tabular}
  \caption{mosaic weight vector}
  \label{tab:mi_tabla}
\end{table}
By definition of mosaic (see,  \ref{obs36754}) $${\mathfrak M}{(k, \ell, r)}=({\mathfrak M}{(k, \ell-1, r-1)}\mid {\mathfrak M}{(k-1, \ell, r-1)})$$
then our induction base is given by a pattern analogous to that of Newton's binomial coefficients
so $wt({\mathfrak M}{(k, \ell-1, r-1)})$ and 
$wt( {\mathfrak M}{(k-1, \ell, r-1)})=\sum_{i=0}^{r-1}C_{i=0}^{r-1}$
\begin{align*}
wt({\mathfrak M}{(k, \ell-1, r-1)})+wt( {\mathfrak M}{(k-1, \ell, r-1)})&=\sum_{i=1}^rC_{i-1}^{r-1}+\sum_{i=1}^{r-1}C_i^{r-1}\\
                                                                                                       &=C_0^{r-1}+\big[\sum_{i=2}^{r-1}C_{i-1}^{r-1}+\sum_{i=1}^{r-2}C_i^{r-1}\big]+C_{r-1}^{r-1}\\    
                                                                                                       &=1+\big[\sum_{i=2}^{r-2}C_{i-1}^{r-1}+C_i^{r-1}\big]+1\\ 
                                                                                                       &=C_0^r+\big[\sum_{i=1}^{r-1}C_i^r\big]+C_r^r\\ 
                                                                                                       &=\sum_{i=0}^rC_i^r\\
                                                                                                       &=wt({\mathfrak M}{(k, \ell, r)})                                       
\end{align*}
\end{proof}
\begin{figure}[H]
\centering
\begin{tikzpicture}[
    >=Stealth,
    node distance=1.8cm and 2.5cm
]
\node (A) at (0,6) {$ 0\cdot H_k^{\ell}$};

\node (B1) at (-2,4) {$1\cdot H_k^{\ell-1}$};
\node (B2) at (2,4) {$1\cdot H_{k-1}^{\ell}$};

\node (C1) at (-4,2) {$1\cdot H_k^{\ell-2}$};
\node (C2) at (0,2) {$2\cdot H_{k-1}^{\ell-1}$};
\node (C3) at (4,2) {$1\cdot H_{k-2}^{\ell}$};

\node (D1) at (-6,0) {$1\cdot H_k^{\ell-3}$};
\node (D2) at (-2,0) {$3\cdot H_{k-1}^{\ell-2}$};
\node (D3) at (2,0) {$3\cdot H_{k-2}^{\ell-1}$};
\node (D4) at (6,0) {$1\cdot  H_{k-3}^{\ell}$};

\draw[->] (B1) -- (A);
\draw[->] (B2) -- (A);

\draw[->] (C1) -- (B1);
\draw[->] (C2) -- (B1);
\draw[->] (C2) -- (B2);
\draw[->] (C3) -- (B2);

\draw[->] (D1) -- (C1);
\draw[->] (D2) -- (C1);
\draw[->] (D2) -- (C2);
\draw[->] (D3) -- (C2);
\draw[->] (D3) -- (C3);
\draw[->] (D4) -- (C3);
\end{tikzpicture}
\caption{The tree rooted at $H_k^{\ell}$ illustrates the distribution of the weights of $H_k^{\ell}(r)$ for $r=0,1,2,3$. Where $c\cdot A_{\alpha}^{\beta}$ means that
$A_{\alpha}^{\beta}$ appears $c$-times in the mosaic ${\mathfrak M}(k,\ell,r)$.}
\label{fig:tree}
\end{figure}
\begin{example}\label{tygffd3u76}
Let $k,\ell\geq 2$ be integers. The codewords of the Fractus code ${\EuScript C}_k^{\ell}$ can be obtained simultaneously by applying the first, second, third, and subsequent fragmentations.
\begin{enumerate}
\item {\bf Initial form or zero-fragmentation:}
Let $X\in {\mathbb F}^{C_{\ell}^{k+\ell-1}}$ a code word. Then  $H_k^{\ell}X=0$.
\item {\bf First fragmentation:} 
Suppose that $X=(X_1, X_2)\in {\mathbb F}^{C_{\ell}^{k+\ell-1}}$ a code word such that  
$X_1\in {\mathbb F}^{C_{\ell-1}^{k+\ell-2}}$ and
$X_2\in {\mathbb F}^{C_{\ell}^{k+\ell-2}}$. 
Then,
$$H_{k}^{\ell}X^T= \left[ \begin{tabular}{c|c}
 $H_k^{\ell-1}$ & $ 0 $ \cr\hline $I_{C_{\ell-1}^{k+\ell-2}}$  & $H_{k-1}^{\ell}$
\end{tabular} \right]
 \begin{bmatrix} X_1^T\\ X_2^T \end{bmatrix}=
\begin{bmatrix} 0^T\\ 0^T \end{bmatrix}$$ 
if and only if
$$\begin{cases}
H_k^{\ell-1}X_1^T&=0^T\\
H_{k-1}^{\ell}X_2^T&=-X_1^T
\end{cases}$$
\item {\bf Second fragmentation:} Suppose that
$X=(X_{11}, X_{12}, X_{21}, X_{22})\in {\mathbb F}^{C_{\ell}^{k+\ell-1}}$ a code word such that  $X_{11}\in {\mathbb F}^{C_{\ell-2}^{k+\ell-3}}$, $X_{12}\in {\mathbb F}^{C_{\ell-1}^{k+\ell-3}}$, $X_{21}\in {\mathbb F}^{C_{\ell-1}^{k+\ell-3}}$, $X_{22}\in {\mathbb F}^{C_{\ell}^{k+\ell-3}}$
\begin{center}
$H_{k}^{\ell}(2)X^T=\left[ \begin{tabular}{c|c|c|c}
 $H_k^{\ell-2}$ & $0$ & $0$ & $0$ 
\cr\hline $I_{C_{\ell-2}^{k+\ell-3}}$ & $H_{k-1}^{\ell-1}$ & $0$ & $0$
 \cr\hline $I_{C_{\ell-2}^{k+\ell-3}}$ & $0$ & $H_{k-1}^{\ell-1}$ & $0$ 
\cr\hline $0$ & $I_{C_{\ell-1}^{k+\ell-3}}$ & $I_{C_{\ell-1}^{k+\ell-3}}$ & $H_{k-2}^{\ell}$
\end{tabular} \right]
\left[\begin{smallmatrix}
X_{11}^T \\
X_{12}^T\\
X_{21}^T \\
X_{22}^T  
\end{smallmatrix}\right]=\left[\begin{smallmatrix}
0^T \\
0^T\\
0^T \\
0^T  
\end{smallmatrix}\right]$
\end{center}
if and only if 
$$\begin{cases}
H_k^{\ell-2}X_{11}^T &= 0^T\\
H_{k-1}^{\ell-1}X_{12}^T &= -X_{11}^T\\
H_{k-1}^{\ell-1}X_{21}^T &= -X_{11}^T\\
H_{k-2}^{\ell}X_{22}^T &= -X_{12}^T-X_{21}^T.
\end{cases}$$
\end{enumerate}
\end{example}
\subsection{Fractus Maximum-Likelihood Decoder}
Suppose that the codeword $X=(X_1,X_2)$ of the Fractus code ${\EuScript C}_k^\ell$ is transmitted in two blocks. Due to channel noise, the received vector is $Y=(Y_1,Y_2)$, with syndrome $S=(S_1,S_2)$. The received syndrome is given by
$$H_{k}^{\ell}Y^T= \left[ \begin{tabular}{c|c}
 $H_k^{\ell-1}$ & $ 0 $ \cr\hline 
 $I_{C_{\ell-1}^{k+\ell-2}}$  & $H_{k-1}^{\ell}$
\end{tabular} \right]
 \begin{bmatrix} Y_1^T\\ Y_2^T \end{bmatrix}=
\begin{bmatrix} S_1^T\\ S_2^T \end{bmatrix},$$ 
  if and only if
 \begin{equation}\label{6treEWW}
\begin{cases}
H_k^{\ell-1}Y_1^T&=S_1^T\\
H_{k-1}^{\ell}Y_2^T&=S_2-Y_1^T.
\end{cases}
\end{equation}
Since $X$ belongs to the Fractus code, $$H_{k}^{\ell}X^T=0^T. $$ Therefore,
$$H_{k}^{\ell}X^T= \left[ \begin{tabular}{c|c}
 $H_k^{\ell-1}$ & $ 0 $ \cr\hline $I_{C_{\ell-1}^{k+\ell-2}}$  & $H_{k-1}^{\ell}$
\end{tabular} \right]
 \begin{bmatrix} X_1^T\\ X_2^T \end{bmatrix}=
\begin{bmatrix} 0^T\\ 0^T \end{bmatrix}$$ 
 lo cual if and only if
\begin{equation}\label{TRFd43}
\begin{cases}
H_k^{\ell-1}X_1^T&=0^T\\
H_{k-1}^{\ell}X_2^T&=-X_1^T
\end{cases}
\end{equation}
For a BSC (Binary Symmetric Channel) with crossover probability $p<1/2$, the maximum-likelihood (ML) decoder seeks
$$\widehat{X}=\arg\max_{X\in{\EuScript C}_k^{\ell}} P(Y\mid X).$$
This is equivalent to
$$\widehat{X}=\arg\min_{X\in{\EuScript C}_k^{\ell}} d_H(X,Y).$$
Since
$$X=(X_1,X_2),\qquad Y=(Y_1,Y_2),$$
we have
$$d_H(X_1,Y_1)+d_H(X_2,Y_2).$$
Therefore, the ML decoding problem can be formulated as
$$\widehat{X}=\arg\min_{(X_1,X_2)}\{d_H(X_1,Y_1)+d_H(X_2,Y_2)\}$$
subject to
$$\begin{cases}
H_k^{\ell-1}X_1^T=0,\\
X_1^T+H_{k-1}^{\ell}X_2^T=0.
\end{cases}$$
Candidates for $X_1$:\\
As $$X_1\in{\EuScript C}_k^{\ell-1} $$
and $$H_k^{\ell-1}Y_1^T=S_1^T ,$$
the first block of errors must satisfy  $$H_k^{\ell-1}E_1^T=S_1^T ,$$
where $$E_1=Y_1-X_1.$$
Therefore, we can find the ML candidates for $X_1$ using the problem
$$\widehat{X}=\arg\min_{X_1\in {\EuScript C}_k^{\ell-1} }d_H(X_1,Y_1)$$
But there is an important subtlety here: it is not necessarily sufficient to choose only the $X_1$ closest to $Y_1$.
Why? Because $X_1$ also determines which $X_2$ can be paired with it.
In other words, $X_1$ determines the problem of finding $X_2$. Once a candidate $X_1$ has been selected, we must find an $X_2$ such that
$H_{k-1}^{\ell}X_2^T=-X_1^T.$\\
 Besides, $$H_{k-1}^{\ell}Y_2^T=S_2^T-Y_1^T.$$ Define $$E_2=Y_2-X_2.$$  Then
$$H_{k-1}^{\ell}E_2^T
=
H_{k-1}^{\ell}Y_2^T
-
H_{k-1}^{\ell}X_2^T.$$  Substituting: 
$$H_{k-1}^{\ell}E_2^T
=
(S_2^T-Y_1^T)+X_1^T.$$
Therefore, $$H_{k-1}^{\ell}E_2^T=S_2^T-Y_1^T+X_1^T.$$
Then
For each candidate $X_1$, we obtain a different syndrome for the second block:
$$S_2(X_1)=S_2-Y_1+X_1.$$
Then we can recursively decode the second block.\\
This naturally leads us to the following procedure.
Fractus Maximum-Likelihood Decoder\\

\begin{center}
\hrulefill\\
{\bf Algorithm 3}\label{GHTAlg6764}\\
\hrulefill\\
\end{center}
 {\bf Input}: $Y=(Y_1,Y_2)$ and $S=(S_1,S_2).$\\
 {\bf Output}: $\widehat {X}\in {\EuScript C}_k^{\ell}$\\
{\bf Step 1:}\\
Calculate
$$S_1=H_k^{\ell-1}Y_1^T.$$
Search for candidates $$X_1\in{\EuScript C}_k^{\ell-1}$$ consistent with the syndrome:
$$H_k^{\ell-1}(Y_1-X_1)^T=S_1.$$
For each candidate  $X_1$, calculate $$S_2(X_1)=S_2-Y_1+X_1.$$
{\bf Step 2:}\\
For all  $X_1$, solve the ML problem corresponding to the second block:
$$\widehat X_2(X_1)
=
\arg\min_{X_2}
d_H(X_2,Y_2)$$ subject to $$H_{k-1}^{\ell}X_2^T=-X_1^T.$$
Equivalently, find

$$E_2(X_1)=\arg\min_E \operatorname{wt}(E)$$
subject to
$$H_{k-1}^{\ell}E^T
=
S_2-Y_1+X_1.$$
Then
\[
\widehat X_2(X_1)=Y_2-E_2(X_1).
\]
{\bf Step 3:}\\
For each candidate $X_1$, we calculate the total cost.
$$D(X_1)
=
d_H(X_1,Y_1)
+
d_H(\widehat X_2(X_1),Y_2).$$
Finally, we select
$$\widehat X_1
=
\arg\min_{X_1}D(X_1)$$ and
$$\widehat {X}=
\left(
\widehat{ X}_1,
\widehat {X}_2(\widehat {X}_1)
\right).
$$
\hrulefill\\
\begin{example}\label{decodingH32GF5}

Consider the matrix
$$H_3^2=
\left[
\begin{array}{ccc|ccc}
1&1&1&0&0&0\cr\hline
1&0&0&1&1&0\\
0&1&0&1&0&1\\
0&0&1&0&1&1
\end{array}
\right].$$
The matrix $H_3^2$ has the block structure
$$H_3^2=
\left[
\begin{array}{c|c}
H_3^1&0\\ \hline
I_3&H_2^2
\end{array}
\right],$$
where
$$H_3^1=
\begin{pmatrix}
1&1&1
\end{pmatrix},
\qquad
H_2^2=
\begin{pmatrix}
1&1&0\\
1&0&1\\
0&1&1
\end{pmatrix}.$$
Let ${\mathbb F}=GF(5)$. We divide a codeword of length $6$ into two blocks of length $3$:
$$X=(X_1,X_2).$$
Consider
$$X_1=
\begin{pmatrix}
1\\
4\\
0
\end{pmatrix},
\qquad
X_2=
\begin{pmatrix}
0\\
4\\
1
\end{pmatrix}.$$
We first verify that $X=(X_1,X_2)$ is a codeword. We have
$$H_3^1X_1^T
=
\begin{pmatrix}
1&1&1
\end{pmatrix}
\begin{pmatrix}
1\\
4\\
0
\end{pmatrix}
=1+4+0=5
\equiv0\pmod5.$$
Moreover,
$$H_2^2X_2^T
=
\begin{pmatrix}
1&1&0\\
1&0&1\\
0&1&1
\end{pmatrix}
\begin{pmatrix}
0\\
4\\
1
\end{pmatrix}
=
\begin{pmatrix}
4\\
1\\
5
\end{pmatrix}
\equiv
\begin{pmatrix}
4\\
1\\
0
\end{pmatrix}
\pmod5.$$
Consequently,
$$X_1^T+H_2^2X_2^T
=
\begin{pmatrix}
1\\
4\\
0
\end{pmatrix}
+
\begin{pmatrix}
4\\
1\\
0
\end{pmatrix}
=
\begin{pmatrix}
5\\
5\\
0
\end{pmatrix}
\equiv
\begin{pmatrix}
0\\
0\\
0
\end{pmatrix}
\pmod5.$$
Thus,
$$X=(1,4,0\,;\,0,4,1)$$
is a codeword.
Now suppose that two errors occur during transmission:
$$E_1=(0,2,0),
\qquad
E_2=(0,0,1).$$
Thus,
$$E=(E_1,E_2)
=(0,2,0\,;\,0,0,1).$$
The received vector is
$$Y=X+E.$$
For the first block,
$$Y_1=(1,4,0)+(0,2,0)
=(1,6,0)
\equiv(1,1,0)\pmod5.$$
For the second block,
$$Y_2=(0,4,1)+(0,0,1)
=(0,4,2).$$
Therefore,
$$Y=(1,1,0\,;\,0,4,2).$$
The syndrome is
$$S=H_3^2Y^T.$$
Using the block structure of $H_3^2$, we write
$$S=
\begin{pmatrix}
S_1\\
S_2
\end{pmatrix},$$
where
$$S_1=H_3^1Y_1^T$$
and
$$S_2=Y_1^T+H_2^2Y_2^T.$$
First,
$$S_1=
\begin{pmatrix}
1&1&1
\end{pmatrix}
\begin{pmatrix}
1\\
1\\
0
\end{pmatrix}
=2.$$
Furthermore,
$$H_2^2Y_2^T
=
\begin{pmatrix}
1&1&0\\
1&0&1\\
0&1&1
\end{pmatrix}
\begin{pmatrix}
0\\
4\\
2
\end{pmatrix}
=
\begin{pmatrix}
4\\
2\\
6
\end{pmatrix}
\equiv
\begin{pmatrix}
4\\
2\\
1
\end{pmatrix}
\pmod5.$$
Therefore,
$$S_2=
\begin{pmatrix}
1\\
1\\
0
\end{pmatrix}
+
\begin{pmatrix}
4\\
2\\
1
\end{pmatrix}
=
\begin{pmatrix}
5\\
3\\
1
\end{pmatrix}
\equiv
\begin{pmatrix}
0\\
3\\
1
\end{pmatrix}
\pmod5.$$
Hence,
$$S=
\begin{pmatrix}
2\\
0\\
3\\
1
\end{pmatrix}.$$
Since $X$ is a codeword,
$$S=H_3^2Y^T
=H_3^2(X+E)^T
=H_3^2E^T.$$
In particular,
$$S_1=H_3^1E_1^T$$
and
$$S_2=E_1^T+H_2^2E_2^T.$$
We now decode the first block.
The code associated with $H_3^1$ is
$${\mathcal} C_3^1
=
\left\{
(a,b,c)\in GF(5)^3:
a+b+c=0
\right\}.$$
The received block is
$$Y_1=(1,1,0).$$
We seek a codeword minimizing the Hamming distance:
$$\widehat X_1
=
\underset{Z\in{\mathcal C}_3^1}{\operatorname{arg\,min}}
d_H(Y_1,Z).$$
Since
$$1+1+0=2\not\equiv0\pmod5,$$
we have
$$Y_1\notin{\mathcal C}_3^1.$$
Changing the second coordinate from $1$ to $4$ gives
$$
1+4+0=5\equiv0\pmod5.
$$
Therefore,
$$\widehat X_1=(1,4,0).$$
The corresponding error estimate is
$$\widehat E_1
=
Y_1-\widehat X_1.$$
Thus,
$$\widehat E_1
=
(1,1,0)-(1,4,0)
=
(0,-3,0)
\equiv
(0,2,0)
\pmod5.$$
Hence,
$$\widehat E_1=(0,2,0)=E_1.$$
We have therefore correctly recovered the first error.
We now use the block structure of the Fractus matrix to decode the second block.
Since every codeword satisfies
$$X_1^T+H_2^2X_2^T=0,$$
we have
$$H_2^2X_2^T=-X_1^T.$$
On the other hand,
$$S_2=Y_1^T+H_2^2Y_2^T.$$
Since
$$Y_2=X_2+E_2,$$
we obtain
$$H_2^2Y_2^T
=
H_2^2X_2^T+H_2^2E_2^T.$$
Consequently,
$$S_2
=
Y_1^T-X_1^T+H_2^2E_2^T,$$
and hence
$$H_2^2E_2^T
=
S_2-Y_1^T+X_1^T.$$
Since $\widehat X_1=X_1$, we can compute
$$H_2^2E_2^T
=
S_2-Y_1^T+\widehat X_1^T.$$
In our example,
$$S_2=
\begin{pmatrix}
0\\
3\\
1
\end{pmatrix},
\qquad
Y_1=
\begin{pmatrix}
1\\
1\\
0
\end{pmatrix},
\qquad
\widehat X_1=
\begin{pmatrix}
1\\
4\\
0
\end{pmatrix}.$$
Therefore,
$$H_2^2E_2^T
=
\begin{pmatrix}
0\\
3\\
1
\end{pmatrix}
-
\begin{pmatrix}
1\\
1\\
0
\end{pmatrix}
+
\begin{pmatrix}
1\\
4\\
0
\end{pmatrix}
=
\begin{pmatrix}
0\\
6\\
1
\end{pmatrix}
\equiv
\begin{pmatrix}
0\\
1\\
1
\end{pmatrix}
\pmod5.$$
Now observe that
$$H_2^2
\begin{pmatrix}
0\\
0\\
1
\end{pmatrix}
=
\begin{pmatrix}
0\\
1\\
1
\end{pmatrix}.$$
Therefore,
$$\widehat E_2=(0,0,1).$$
This agrees with the actual error introduced during transmission.
Finally,
$$\widehat X_2
=
Y_2-\widehat E_2
=
(0,4,2)-(0,0,1)
=
(0,4,1).$$
Thus,
$$\widehat X_2=(0,4,1)=X_2.$$
Consequently, the decoded codeword is
$$\widehat {X}
=
(\widehat {X}_1,\widehat {X}_2)
=
(1,4,0\,;\,0,4,1).$$
Hence,
$$\widehat {X}=X.$$

This example illustrates how the recursive block structure of the Fractus matrix can be exploited in the decoding process. First, the block associated with $H_3^1$ is decoded. The recovered estimate $\widehat X_1$ is then used to construct a corrected syndrome for the second block. Finally, the second error is recovered by solving the resulting minimum-weight syndrome equation involving $H_2^2$.
In particular, the decoding procedure exploits the recursive structure
$$H_3^2=
\left[
\begin{array}{c|c}
H_3^1&0\cr\hline
I_3&H_2^2
\end{array}
\right],$$
rather than treating $H_3^2$ as an unstructured parity-check matrix.
\end{example}

\section{Lattices Associated with Matrices and Codes}
\begin{defn}\label{t543ede}
Let ${\mathbb N}_0=\{0,1,2,\ldots\}$. We define the set of fractus matrices as
$${\mathscr FM}=\{ H_k^{\ell} : (k, \ell)\in {\mathbb N}_0\times {\mathbb N}_0\} $$
 The order in the set  ${\mathscr FM}$ is given by $H_k^{\ell}\leq H_{k^{\prime}}^{\ell^{\prime}}$ if and only if $k\leq k^{\prime}$ and 
$\ell\leq \ell^{\prime}.$\\
For each $H_k^{\ell}$ and  $H_{k^{\prime}}^{\ell^{\prime}}$ in ${\mathscr FM}$. We define 
$$H_k^{\ell}\vee H_{k^{\prime}}^{\ell^{\prime}}=H_{\max\{k, k^{\prime}\}}^{\max\{\ell, \ell^{\prime}\}}$$ and
$$H_k^{\ell}\wedge H_{k^{\prime}}^{\ell^{\prime}}=H_{\min\{k, k^{\prime}\}}^{\min\{\ell, \ell^{\prime}\}}$$  
\end{defn}
%

%
\begin{prop}\label{join-meet}
$({\mathscr FM}, \vee, \wedge )$ is a distributive lattice with a minimum element and no maximum element.
\end{prop}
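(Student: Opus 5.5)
The plan is to transport the lattice structure from the index set. Consider the map
$$\iota:{\mathbb N}_0\times{\mathbb N}_0\longrightarrow {\mathscr FM},\qquad (k,\ell)\longmapsto H_k^{\ell}.$$
I will first argue that $\iota$ is a bijection.

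Surjectivity is immediate from Definition \ref{t543ede}. For injectivity I would compare sizes. For $k,\ell\geq 2$, the matrix $H_k^{\ell}$ has size $C_{\ell-1}^{k+\ell-1}\times C_{\ell}^{k+\ell-1}$. From the row and column counts one recovers the pair:
$$\frac{C_{\ell}^{k+\ell-1}}{C_{\ell-1}^{k+\ell-1}}=\frac{k}{\ell},$$
and the column count $C_{\ell}^{k+\ell-1}$ then pins down $(k,\ell)$. For small indices one also uses the row weight $k$ and column weight $\ell$ of a $(k,\ell)$-regular matrix (Lemma \ref{regul44556}).

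The degenerate cases $H_k^0=H_0^{\ell}=[1]$ are a genuine obstacle: as matrices these coincide. The honest reading, which I will state explicitly, is that ${\mathscr FM}$ is indexed by ${\mathbb N}_0\times{\mathbb N}_0$. In other words, $H_k^{\ell}$ is regarded as the labelled object $\Psi(k,\ell)$, and the order of Definition \ref{t543ede} is then well defined. This is the main point needing care; everything else is formal.

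With $\iota$ an order isomorphism onto $({\mathbb N}_0\times{\mathbb N}_0,\leq_{\text{prod}})$, I will verify the lattice axioms directly. First, $H_{\max\{k,k'\}}^{\max\{\ell,\ell'\}}$ is an upper bound of both elements. It is also least: any upper bound $H_a^b$ satisfies $a\geq k,k'$ and $b\geq\ell,\ell'$, hence $a\geq \max\{k,k'\}$ and $b\geq\max\{\ell,\ell'\}$. Dually, $H_{\min\{k,k'\}}^{\min\{\ell,\ell'\}}$ is the greatest lower bound. Commutativity, associativity, idempotence and absorption then follow coordinatewise from the corresponding identities for $\max$ and $\min$ on ${\mathbb N}_0$.

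For distributivity, I will reduce to the identity
$$\min\{a,\max\{b,c\}\}=\max\{\min\{a,b\},\min\{a,c\}\},$$
which holds in any chain; checking the cases $b\leq c$ and $c\leq b$ suffices. Applying it in each coordinate gives
$$H\wedge(H'\vee H'')=(H\wedge H')\vee(H\wedge H'').$$
Equivalently, a product of chains is distributive.

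Finally, $H_0^0=[0]$ satisfies $H_0^0\leq H_k^{\ell}$ for all $(k,\ell)$, so it is the minimum element. Suppose a maximum $H_K^{L}$ existed. Then $H_{K+1}^{L}\in{\mathscr FM}$ would have to satisfy $H_{K+1}^L\leq H_K^L$, which forces $K+1\leq K$, a contradiction. Hence there is no maximum element.
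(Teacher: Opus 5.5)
Your proposal is correct and follows essentially the same route as the paper: join and meet are the coordinatewise $\max$ and $\min$ of the indices, distributivity and absorption reduce to the corresponding $\max$/$\min$ identities on ${\mathbb N}_0$, $H_0^0$ is the minimum, and there is no maximum because an index can always be increased (the paper uses $H_k^\ell<H_{k+1}^{\ell+1}$). Your explicit reading of ${\mathscr FM}$ as a family indexed by ${\mathbb N}_0\times{\mathbb N}_0$ improves on the paper, which never notes that $H_k^0$, $H_0^\ell$ and also $H_1^1$ all equal the matrix $[1]$; without that reading the order would be ill defined.
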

\begin{proof}
First  $({\mathscr FM}, \leq)$ is a partially ordered set. $H_0^0$ is the minimum element, since $H_0^0\wedge H_k^{\ell}=H_0^0$ and  $H_0^0\vee H_k^{\ell}=H_k^{\ell}$.\\
For each $H_k^{\ell}$ and  $H_{k^{\prime}}^{\ell^{\prime}}$ in ${\mathscr FM}$ , it is easy to see that
 $H_k^{\ell}\wedge H_{k^{\prime}}^{\ell^{\prime}}=H_{\min\{k, k^{\prime}\}}^{\min\{\ell, \ell^{\prime}\}}$ is a meet and
 $H_k^{\ell}\vee H_{k^{\prime}}^{\ell^{\prime}}=H_{\max\{k, k^{\prime}\}}^{\max\{\ell, \ell^{\prime}\}}$ is a join.  
 Proving commutativity, associativity, and idempotency is left as an easy exercise for the reader. Here we prove distributive laws and absorption.
$$\begin{aligned}
H_{k_1}^{\ell_1}\wedge(H_{k_2}^{\ell_2}\vee H_{k_3}^{\ell_3})
&=
H_{\max\{\min(k_1,k_2),\min(k_1,k_3)\}}
  ^{\max\{\min(\ell_1,\ell_2),\min(\ell_1,\ell_3)\}}\\
&=
(H_{k_1}^{\ell_1}\wedge H_{k_2}^{\ell_2})
\vee
(H_{k_1}^{\ell_1}\wedge H_{k_3}^{\ell_3}).
\end{aligned}$$

Similarly, identity 
$$H_{k_1}^{\ell_1}\vee(H_{k_2}^{\ell_2}\wedge H_{k_3}^{\ell_3})=(H_{k_1}^{\ell_1}\vee H_{k_2}^{\ell_2})\wedge (H_{k_1}^{\ell_1}\vee H_{k_3}^{\ell_3})$$
can be proven.\\
The absorption law,
$$H_k^\ell\vee(H_k^\ell\wedge H_{k'}^{\ell'})
=
H_{\max\{k,\min(k,k')\}}^{\max\{\ell,\min(\ell,\ell')\}}
=
H_k^\ell.$$
In the same way we demonstrated $H_k^{\ell}\wedge (H_k^{\ell}\vee H_{k^{\prime}}^{\ell^{\prime}})=H_k^{\ell}$\\
In $({\mathscr FM}, \vee, \wedge)$, there is no maximum element, since for every $H_k^\ell \in {\mathscr FM}$,
$$H_k^\ell < H_{k+1}^{\ell+1}.$$
\end{proof}
So far, we have shown that $({\mathscr FM}, \vee, \wedge)$ is a lattice.
We now introduce the following notation. For $n\in\mathbb{N}$, define
\begin{equation}\label{Not54rde}
nH_k^\ell=
\underbrace{H_k^\ell\vee H_k^\ell\vee\cdots\vee H_k^\ell}_{n\text{ times}}.
\end{equation}
Since $\vee$ is idempotent, it follows that
\begin{equation}\label{TTy4rde67}
nH_k^\ell=H_k^\ell.
\end{equation}
\begin{defn}
For $k,\ell\geq 2$ and $r\in{\mathbb N}_0$ such that
$0\leq r\leq \min\{k-1,\ell-1\}$, the \textit{$r$-expansion} of
$H_k^\ell$, denoted by $\operatorname{Exp}_r(H_k^\ell)$, is
recursively defined by
\begin{align*}
\operatorname{Exp}_0(H_k^\ell)&=H_k^\ell,\\
\operatorname{Exp}_r(H_k^\ell)
&=\operatorname{Exp}_{r-1}(H_k^{\ell-1})
  \vee \operatorname{Exp}_{r-1}(H_{k-1}^\ell),
  \qquad r\geq 1.
\end{align*}
\end{defn}

\begin{example}\label{GYTR453}
We compute the $3$-expansion of $A_8^7$. By the recursive definition,
\begin{align*}
\operatorname{Exp}_3(H_8^7)
&=\operatorname{Exp}_2(H_8^6)
  \vee \operatorname{Exp}_2(H_7^7)\\
&=\bigl(\operatorname{Exp}_1(H_8^5)
  \vee \operatorname{Exp}_1(H_7^6)\bigr)
  \vee
  \bigl(\operatorname{Exp}_1(H_7^6)
  \vee \operatorname{Exp}_1(H_6^7)\bigr)\\
&=\bigl((H_8^4\vee H_7^5)
  \vee(H_7^5\vee H_6^6)\bigr)
  \vee
  \bigl((H_7^5\vee H_6^6)
  \vee(H_6^6\vee H_5^7)\bigr).
\end{align*}
Using associativity, commutativity, and idempotence of $\vee$, we obtain
$$\operatorname{Exp}_3(H_8^7)
=H_8^4\vee H_7^5\vee H_6^6\vee H_5^7.$$
\end{example}
We now derive an explicit formula for the $r$-expansion in terms of the standard notation for binomial coefficients.
\begin{prop}\label{dfrim43443}
Let $k,\ell \geq 2$ and $0 \leq r \leq \min\{k-1,\ell-1\}$. Then the $r$-expansion of $H_k^\ell$ is given by
$$\operatorname{Exp}_r(H_k^\ell)
=\bigvee_{i=0}^{r}
\binom{r}{i}\, 
H_{k-i}^{\ell-(r-i)}.$$
Moreover, the total number of terms, counted with multiplicity, is
$$\sum_{i=0}^{r}\binom{r}{i}=2^r.$$
\end{prop}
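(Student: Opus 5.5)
We argue by induction on $r$, treating the expansion as a formal join in which repeated terms are recorded with multiplicity through the notation $nH_k^\ell$ of \eqref{Not54rde}. Only at the end do we invoke idempotence \eqref{TTy4rde67} to note that, as an element of the lattice $({\mathscr FM},\vee,\wedge)$, the multiplicities collapse. The first identity is therefore proved in two senses at once. As a formal join with multiplicities, it expresses the same Pascal-type bookkeeping already seen for the mosaic weight vectors in Definition \ref{obs36754} and Figure \ref{fig:tree}. In the lattice, where $\binom{r}{i}H=H$, it reduces to the statement of Example \ref{GYTR453}.

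The base case $r=0$ is immediate, since $\operatorname{Exp}_0(H_k^\ell)=H_k^\ell=\binom{0}{0}H_{k}^{\ell}$. For the inductive step, assume the formula holds at level $r-1$ for every pair $(k',\ell')$ with $r-1\le\min\{k'-1,\ell'-1\}$. This applies to both $(k,\ell-1)$ and $(k-1,\ell)$, because $r\le\min\{k-1,\ell-1\}$. By definition,
$$\operatorname{Exp}_r(H_k^\ell)=\bigvee_{i=0}^{r-1}\binom{r-1}{i}H_{k-i}^{\ell-1-(r-1-i)}\ \vee\ \bigvee_{j=0}^{r-1}\binom{r-1}{j}H_{k-1-j}^{\ell-(r-1-j)}.$$
In the first join the term is $H_{k-i}^{\ell-(r-i)}$. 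In the second we reindex with $i=j+1$, so the term becomes $H_{k-i}^{\ell-(r-i)}$ with coefficient $\binom{r-1}{i-1}$ for $1\le i\le r$.

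Using commutativity and associativity of $\vee$ (Proposition \ref{join-meet}), we group the terms with equal index $i$. The coefficient of $H_{k-i}^{\ell-(r-i)}$ is then $\binom{r-1}{i}+\binom{r-1}{i-1}=\binom{r}{i}$ by Pascal's identity \eqref{HHYTR}, with the conventions $\binom{r-1}{-1}=\binom{r-1}{r}=0$ handling the endpoints $i=0$ and $i=r$. This gives the displayed formula. The count of terms follows from the binomial theorem, $\sum_i\binom{r}{i}=(1+1)^r=2^r$. Alternatively, it follows directly from the recursion, since each step joins two expansions of $2^{r-1}$ terms each.

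The main obstacle is interpretive rather than computational. Because $\vee$ is idempotent, multiplicities carry no information inside the lattice. The proof must therefore state explicitly that the coefficients refer to a formal (multiset) join, that is, the unsimplified expression produced by the recursion. The index bookkeeping also needs care: all matrices appearing must have nonnegative indices, which holds because $i\le r\le k-1$ and $r-i\le r\le\ell-1$. With these points made, the computation is a routine induction.
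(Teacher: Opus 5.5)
Your proof is correct and follows essentially the same route as the paper. Both argue by induction on $r$: apply the level-$(r-1)$ formula to $H_k^{\ell-1}$ and $H_{k-1}^{\ell}$, shift the index in the second join, merge coefficients via Pascal's identity, and count $2^{r-1}+2^{r-1}=2^r$. The differences are cosmetic: the paper isolates the endpoint terms $H_k^{\ell-r}$ and $H_{k-r}^{\ell}$ rather than using the conventions $\binom{r-1}{-1}=\binom{r-1}{r}=0$, and it lists redundant base cases $r=1,2$. Your explicit remark that the coefficients only make sense for the formal multiset join, because idempotence of $\vee$ collapses them in the lattice, is a clarification the paper leaves implicit.
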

\begin{proof}
The proof proceeds by induction on $r$. The case $r=0$ is immediate. For $r=1$, the definition gives
$$\operatorname{Exp}_1(H_k^\ell)
=
H_k^{\ell-1}\vee H_{k-1}^\ell.$$
For $r=2$, we obtain
\begin{align*}
\operatorname{Exp}_2(H_k^\ell)
&=
\operatorname{Exp}_1(H_k^{\ell-1})
\vee
\operatorname{Exp}_1(H_{k-1}^\ell)\\
&=
\bigl[H_k^{\ell-2}\vee H_{k-1}^{\ell-1}\bigr]
\vee
\bigl[H_{k-1}^{\ell-1}\vee H_{k-2}^\ell\bigr]\\
&=
\binom{2}{0}H_k^{\ell-2}
\vee
\binom{2}{1}H_{k-1}^{\ell-1}
\vee
\binom{2}{2}H_{k-2}^\ell.
\end{align*}
Moreover,
$$\binom{2}{0}+\binom{2}{1}+\binom{2}{2}=2^2.$$

Now, assume as the induction hypothesis that
\begin{equation}\label{YTrf543}
\begin{aligned}
\operatorname{Exp}_{r-1}(H_k^{\ell-1})
&=
H_k^{\ell-r}
\vee
\bigvee_{i=1}^{r-1}
\binom{r-1}{i}H_{k-i}^{\ell-(r-i)},
\quad
\sum_{i=0}^{r-1}\binom{r-1}{i}=2^{r-1},\\
\operatorname{Exp}_{r-1}(H_{k-1}^\ell)
&=
\bigvee_{i=1}^{r-1}
\binom{r-1}{i-1}H_{k-i}^{\ell-(r-i)}
\vee
H_{k-r}^\ell,
\quad
\sum_{i=1}^{r}\binom{r-1}{i-1}=2^{r-1}.
\end{aligned}
\end{equation}
By the definition of the expansion operator and the induction hypothesis in \eqref{YTrf543}, we obtain
\begin{align*}
\operatorname{Exp}_r(H_k^\ell)
&=
\operatorname{Exp}_{r-1}(H_k^{\ell-1})
\vee
\operatorname{Exp}_{r-1}(H_{k-1}^\ell)\\
&=
\left[
H_k^{\ell-r}
\vee
\bigvee_{i=1}^{r-1}
\binom{r-1}{i}H_{k-i}^{\ell-(r-i)}
\right]
\vee
\left[
\bigvee_{i=1}^{r-1}
\binom{r-1}{i-1}H_{k-i}^{\ell-(r-i)}
\vee
H_{k-r}^\ell
\right]\\
&=
H_k^{\ell-r}
\vee
\bigvee_{i=1}^{r-1}
\left[
\binom{r-1}{i}
+
\binom{r-1}{i-1}
\right]
H_{k-i}^{\ell-(r-i)}
\vee
H_{k-r}^\ell\\
&=
H_k^{\ell-r}
\vee
\bigvee_{i=1}^{r-1}
\binom{r}{i}H_{k-i}^{\ell-(r-i)}
\vee
H_{k-r}^\ell\\
&=
\bigvee_{i=0}^{r}
\binom{r}{i}H_{k-i}^{\ell-(r-i)}.
\end{align*}

Moreover, the total number of terms, counted with multiplicity, is
\begin{align*}
\sum_{i=0}^{r-1}\binom{r-1}{i}
+
\sum_{i=1}^{r}\binom{r-1}{i-1}
&=
2^{r-1}+2^{r-1}\\
&=
2^r.
\end{align*}

Therefore,
$$
\operatorname{Exp}_r(H_k^\ell)
=
\bigvee_{i=0}^{r}
\binom{r}{i}H_{k-i}^{\ell-(r-i)},
$$
and
$$
\sum_{i=0}^{r}\binom{r}{i}=2^r.
$$
\end{proof}
%



\begin{defn}\label{gtfre33321}
Let $H_k^\ell \in \mathscr{FM}$ be arbitrary. Define
\[
\Omega_{H_k^\ell}
=
\left\{
H_\alpha^\beta \in \mathscr{FM}
\;\middle|\;
H_\alpha^\beta \leq H_k^\ell
\right\}.
\]
\end{defn}

\begin{lem}\label{yytred345}
\begin{enumerate}
    \item $H_\alpha^\beta$ is a submatrix of $H_k^\ell$ if and only if
    $H_\alpha^\beta \in \Omega_{H_k^\ell}$.
    
    \item The set $\Omega_{H_k^\ell}$ is bounded with respect to the order
    $\leq$.
\end{enumerate}
\end{lem}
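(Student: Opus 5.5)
The plan is to identify $\Omega_{H_k^\ell}$ with the set of index pairs $(\alpha,\beta)$ satisfying $0\le\alpha\le k$ and $0\le\beta\le\ell$. Item (1) then splits into two directions. For the \emph{if} direction (order implies submatrix) I will use the recursive block structure of Definition~\ref{def56430001}. For the \emph{only if} direction (submatrix implies order) I will use the $(k,\ell)$-regularity of Lemma~\ref{regul44556}. Item (2) will then follow almost formally from the lattice structure of Proposition~\ref{join-meet}.

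\textbf{Order implies submatrix.} By \eqref{m4tr1x67532}, for $k,\ell\ge 2$ the matrix $H_k^{\ell-1}$ is the upper-left block of $H_k^\ell$ and $H_{k-1}^\ell$ is its lower-right block. The submatrix relation $\unlhd$ is transitive, as already used in Lemma~\ref{Subm564544}. Suppose $H_\alpha^\beta\le H_k^\ell$ with $\alpha,\beta\ge 1$. I will walk down the chain
$$H_k^\ell \unrhd H_k^{\ell-1}\unrhd\cdots\unrhd H_k^{\beta} \unrhd H_{k-1}^{\beta}\unrhd\cdots\unrhd H_\alpha^\beta ,$$
which decreases $\ell$ first and then $k$. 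This is a straightforward induction on $(k-\alpha)+(\ell-\beta)$.

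The only care needed is at the boundary of the recursion, where the recursive formula no longer applies:
\begin{itemize}
\item $H_\alpha^1=(1,\dots,1)$ is the first row of $H_\alpha^2$, by case b) of the definition.
\item $H_1^\beta$ is the last column of $H_2^\beta$.
\item $H_\alpha^0=H_0^\beta=[1]$ is any nonzero entry of $H_k^\ell$.
\item $H_0^0=[0]$ is any zero entry of $H_k^\ell$, which exists by sparsity (Lemma~\ref{sp4r539999}).
\end{itemize}
These cases will be listed explicitly.

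\textbf{Submatrix implies order.} Suppose $H_\alpha^\beta\unlhd H_k^\ell$. Every row of a submatrix is a subvector of a row of $H_k^\ell$, so it contains at most $k$ ones. Likewise, every column of a submatrix contains at most $\ell$ ones. For $\alpha,\beta\ge 2$, Lemma~\ref{regul44556} says each row of $H_\alpha^\beta$ has exactly $\alpha$ ones and each column exactly $\beta$ ones, which forces $\alpha\le k$ and $\beta\le\ell$. The degenerate cases $\alpha\le 1$ or $\beta\le 1$ are checked directly from parts a) and b) of the definition, with the row and column weights read off there.

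I expect this direction to be the main obstacle. The difficulty is conceptual rather than computational: one must fix what ``submatrix'' means. If it means an arbitrary selection of rows and columns, the weight argument above is what makes the converse true. If instead it means a block produced by the fragmentations $H_k^\ell(r)$, I would argue via Proposition~\ref{TrePP3d} and Definition~\ref{obs36754}. In that reading, every diagonal block lies in an Atlas ${\mathfrak A}(k,\ell,r)$, whose indices are at most $(k,\ell)$ componentwise. I will adopt the weight argument as the primary proof and mention the fragmentation viewpoint as a consistency check.

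\textbf{Item (2).} By Proposition~\ref{join-meet}, $H_0^0$ is a lower bound of all of $\mathscr{FM}$, hence of $\Omega_{H_k^\ell}$. The matrix $H_k^\ell$ itself is an upper bound of $\Omega_{H_k^\ell}$ by definition, and it belongs to the set. Therefore $\Omega_{H_k^\ell}$ is bounded, with minimum $H_0^0$ and maximum $H_k^\ell$. Indeed it is the finite interval $[H_0^0,H_k^\ell]$, with $(k+1)(\ell+1)$ elements.
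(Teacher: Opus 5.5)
\textbf{Where your argument works.} For $k,\ell\ge 2$ and $\alpha,\beta\ge 1$ (together with $(\alpha,\beta)=(0,0)$), your argument is sound. The weight count is the paper's one-line regularity argument written out carefully. Your nesting chain $H_k^\ell\unrhd H_k^{\ell-1}\unrhd\cdots\unrhd H_\alpha^\beta$ proves the direction ``order $\Rightarrow$ submatrix'', which the paper only asserts; regularity alone cannot produce that embedding. Item (2) matches the paper.

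A small imprecision: $H_\alpha^1$ is the upper-left $1\times\alpha$ block of $H_\alpha^2$, not its whole first row. Likewise, $H_1^\beta$ is the lower $\beta$ entries of the last column of $H_2^\beta$, not the whole column.

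\textbf{The step that fails.} You claim the degenerate cases can be ``checked directly'' from parts a) and b) of the definition. They cannot, because the statement is false there. The map $(k,\ell)\mapsto H_k^\ell$ is not injective: $H_\alpha^0=H_0^\beta=H_1^1=[1]$ for all $\alpha,\beta\ge 1$. The submatrix relation sees only the matrix, not its index. Concretely, for any $k,\ell\ge 2$, the matrix $H_0^{\ell+1}=[1]$ is a submatrix of $H_k^\ell$, yet $(0,\ell+1)\not\le(k,\ell)$. Reading off the row and column weights of $[1]$ gives only $1\le k$ and $1\le \ell$, never $\beta\le\ell$. The same happens with $H_{k+1}^0$.

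The converse direction also breaks when $\min\{k,\ell\}\le 1$. For example, $H_0^0\in\Omega_{H_1^1}$, but $[0]$ is not a submatrix of the all-ones matrices $H_1^1$, $H_k^1$ or $H_1^\ell$. Your appeal to a zero entry only works for $k,\ell\ge 2$. Even there, the zero entry comes from the zero block in (\ref{m4tr1x67532}), not from the asymptotic density statement of Lemma~\ref{sp4r539999}.

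\textbf{The fix.} State explicitly that you prove the lemma for $k,\ell\ge 2$ and $\alpha,\beta\ge 1$ (plus $(0,0)$). Alternatively, treat $\mathscr{FM}$ as an index-labelled family and read ``submatrix'' as ``labelled block of the recursion''. What you should not do is promise a verification that cannot be carried out. The paper's proof has the same defect but passes over it silently.
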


\begin{proof}
For (1), $H_\alpha^\beta$ is an $(\alpha,\beta)$-regular submatrix of the
$(k,\ell)$-regular matrix $H_k^\ell$ if and only if
$\alpha \leq k$ and $\beta \leq \ell$. Equivalently,
\[
H_\alpha^\beta \leq H_k^\ell,
\]
which is equivalent to $H_\alpha^\beta \in \Omega_{H_k^\ell}$.

For (2), $H_0^0 = 0$ is a lower bound for $\Omega_{H_k^\ell}$, while
$H_k^\ell$ is an upper bound. Therefore,
$\Omega_{H_k^\ell}$ is bounded.
\end{proof}
\begin{rem}
We have 
$$H_k^{\ell}=  \left[ \begin{tabular}{c|c}
 $H_k^{\ell-1}$ & $ 0 $ \cr\hline $I_{C_{\ell-1}^{k+\ell -2}}$  & $H_{k-1}^{\ell}$
\end{tabular} \right] $$
Hence  $H_{k-1}^{\ell-1}\unlhd  H_k^{\ell-1}\unlhd  H_k^{\ell}$ and  $H_{k-1}^{\ell-1}\unlhd  H_{k-1}^{\ell}\unlhd  H_k^{\ell}$ \\
%
\begin{figure}[htbp]
\centering
\begin{tikzpicture}[>=stealth,scale=0.8]
\draw[->,thick] (0,0) -- (8,0) node[right] {$k$};
\draw[->,thick] (0,0) -- (0,5) node[above] {$\ell$};

\coordinate (A) at (5,4);
\coordinate (B) at (6,4);
\coordinate (C) at (6,3);
\coordinate (D) at (5,3);
\coordinate (D1) at (3,4);
\coordinate (E) at (6,2);
\coordinate (F) at (4,4);
\coordinate (X) at (0,0);
\coordinate (Y) at (1,1);
\coordinate (Z) at (2,2);
\coordinate (M) at (3,3);

\fill (A) circle (1.5pt);
\fill (B) circle (1.5pt);
\fill (C) circle (1.5pt);
\fill (D) circle (1.5pt);
\fill (D1) circle (1.5pt);
\fill (E) circle (1.5pt);
\fill (F) circle (1.5pt);
\fill (X) circle (1.5pt);
\fill (Y) circle (1.5pt);
\fill (Z) circle (1.5pt);
\fill (M) circle (1.5pt);

\draw[->,thick] (A) -- (B);
\draw[->,thick] (C) -- (B);
\draw[->,thick] (D) -- (A);
\draw[->,thick] (D) -- (C);
\draw[->,thick] (E) -- (C);
\draw[->,thick] (F) -- (A);
\draw[->,thick] (X) -- (Y);
\draw[->,thick] (Y) -- (Z);
\draw[->,thick] (Z) -- (M);
\draw[->,thick] (D1) -- (F);

\node[above] at (A) {$H_{k-1}^{\ell}$};
\node[right] at (B) {$H_k^\ell = H_k^{\ell-1} \vee H_{k-1}^\ell$};
\node[right] at (C) {$H_k^{\ell-1}$};
\node[left] at (D) {$H_{k-1}^{\ell-1}$};
\node[above] at (Y) {$H_1^1$};
\node[above] at (Z) {$H_2^2$};
\end{tikzpicture}
\caption{The figure represents $\Omega_{H_k^\ell}$, where the arrows indicate the
submatrix relation.}
\label{fig:local-lattice-structure}
\end{figure}%
\end{rem}
%
\begin{prop}\label{id3al8766}
The set $\Omega_{H_k^\ell}$ is the principal ideal generated by
$H_k^\ell$. Moreover,
$$\lvert \Omega_{H_k^\ell} \rvert = (k+1)(\ell+1).$$
\end{prop}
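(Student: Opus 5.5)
The plan is to work entirely at the level of indices. By Definition \ref{t543ede}, the order on $\mathscr{FM}$ is the product order on $(k,\ell)\in\mathbb N_0\times\mathbb N_0$. The assignment $(k,\ell)\mapsto H_k^\ell$ is therefore an order-isomorphism. It also carries $\max/\min$ componentwise to $\vee/\wedge$, provided we regard elements of $\mathscr{FM}$ as labelled by their indices, which is how the paper treats them. With that identification, $\Omega_{H_k^\ell}=\{H_\alpha^\beta : 0\le\alpha\le k,\ 0\le\beta\le\ell\}$, and both claims reduce to statements about the box $[0,k]\times[0,\ell]\subset\mathbb N_0^2$.

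For the ideal claim, I will check the two defining properties of a lattice ideal and then principality.

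\textbf{Down-closure.} If $H_\alpha^\beta\in\Omega_{H_k^\ell}$ and $H_\gamma^\delta\le H_\alpha^\beta$, then $\gamma\le\alpha\le k$ and $\delta\le\beta\le\ell$, so $H_\gamma^\delta\in\Omega_{H_k^\ell}$.

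\textbf{Closure under joins.} If $H_\alpha^\beta,H_\gamma^\delta\in\Omega_{H_k^\ell}$, then by Definition \ref{t543ede} their join is $H_{\max\{\alpha,\gamma\}}^{\max\{\beta,\delta\}}$. Since $\max\{\alpha,\gamma\}\le k$ and $\max\{\beta,\delta\}\le\ell$, the join lies in $\Omega_{H_k^\ell}$.

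\textbf{Principality.} The principal ideal generated by $a$ in a lattice is ${\downarrow}a=\{x : x\le a\}$, which is literally Definition \ref{gtfre33321}. Moreover, $H_k^\ell$ belongs to $\Omega_{H_k^\ell}$ by reflexivity, and it is its maximum. Any ideal containing $H_k^\ell$ contains all of $\Omega_{H_k^\ell}$ by down-closure, so $\Omega_{H_k^\ell}$ is the smallest ideal containing $H_k^\ell$. Lemma \ref{yytred345}(2) already records the bounds $H_0^0\le\cdot\le H_k^\ell$, which I will cite for nonemptiness.

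For the cardinality, the map $(\alpha,\beta)\mapsto H_\alpha^\beta$ restricted to $\{0,\dots,k\}\times\{0,\dots,\ell\}$ is onto $\Omega_{H_k^\ell}$ by the description above. Counting then gives $(k+1)(\ell+1)$.

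The one genuine obstacle is injectivity, since as literal matrices several Fractus matrices coincide. By Definition \ref{def56430001}(a), $H_k^0=H_0^\ell=[1]$ for all $k,\ell\ge1$, and $H_1^1$ is also $[1]$. So if $\mathscr{FM}$ is read as a set of matrices, the count would collapse. I will resolve this by stating explicitly that $\mathscr{FM}$ is indexed by $\mathbb N_0\times\mathbb N_0$. This is forced anyway by Definition \ref{t543ede}, since the order and operations are defined through the indices and would be ill-defined otherwise; for instance, $H_1^0=H_0^1$ but $H_1^0\vee H_2^2\ne H_0^1\vee H_2^2$ is not an issue only under indexing. Under this convention the elements $H_\alpha^\beta$ are distinct for distinct $(\alpha,\beta)$, and the count $(k+1)(\ell+1)$ follows immediately.
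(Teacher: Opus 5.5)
Your argument is correct and follows the paper's proof. Like the paper, you check down-closure and closure under $\vee$ via the componentwise $\max$, identify $\Omega_{H_k^\ell}$ with the index box $\{0,\dots,k\}\times\{0,\dots,\ell\}$, and count; your extra remark that this count, and even the well-definedness of $\leq,\vee,\wedge$, requires reading $\mathscr{FM}$ as a family indexed by $\mathbb N_0\times\mathbb N_0$ (since $H_k^0=H_0^\ell=H_1^1=[1]$ as matrices) makes explicit something the paper leaves implicit. One slip: your illustrative example does not show ill-definedness, because $H_1^0\vee H_2^2=H_2^2=H_0^1\vee H_2^2$; a working example is $H_1^0\vee H_0^2=H_1^2$, a $2\times 1$ column, versus $H_0^1\vee H_0^2=H_0^2=[1]$.
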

\begin{proof}
By definition,
$$\Omega_{H_k^\ell}
=
\left\{
H_\alpha^\beta \in \mathscr{FM}
\;\middle|\;
H_\alpha^\beta \leq H_k^\ell
\right\}$$
Thus, $\Omega_{H_k^\ell}$ is the principal ideal generated by
$H_k^\ell$.

Indeed, if $H_{\alpha_1}^{\beta_1},H_{\alpha_2}^{\beta_2}
\in \Omega_{H_k^\ell}$, then
$$H_{\alpha_1}^{\beta_1}\vee H_{\alpha_2}^{\beta_2}
=
H_{\max\{\alpha_1,\alpha_2\}}^{\max\{\beta_1,\beta_2\}}
\leq H_k^\ell.$$

Therefore,
$$H_{\alpha_1}^{\beta_1}\vee H_{\alpha_2}^{\beta_2}
\in \Omega_{H_k^\ell}.$$  Also, if
$H_{\alpha'}^{\beta'}\leq H_\alpha^\beta$ and
$H_\alpha^\beta\in\Omega_{H_k^\ell}$, then
$H_{\alpha'}^{\beta'}\leq H_k^\ell$.\\
 Hence,
$$H_{\alpha'}^{\beta'}\in\Omega_{H_k^\ell}$$.

Since $H_\alpha^\beta\leq H_k^\ell$ if and only if
$\alpha\leq k$ and $\beta\leq\ell$, we have
\[
\Omega_{H_k^\ell}
=
\left\{
H_\alpha^\beta
\;\middle|\;
(\alpha,\beta)\in
\{0,\ldots,k\}\times\{0,\ldots,\ell\}
\right\}.
\]
Thus,
\[
\lvert\Omega_{H_k^\ell}\rvert=(k+1)(\ell+1).
\]
\end{proof}%
\begin{lem}
Let $k,\ell\geq 0$.
\begin{enumerate}
\item If $k=k_0$ is fixed, then
$$\Omega_{H_{k_0}^0}\subseteq \Omega_{H_{k_0}^1}\subseteq
\Omega_{H_{k_0}^2}\subseteq\cdots\subseteq
\Omega_{H_{k_0}^{\ell-1}}\subseteq
\Omega_{H_{k_0}^{\ell}}\subseteq\cdots.$$
\item If $\ell=\ell_0$ is fixed, then
$$\Omega_{H_0^{\ell_0}}\subseteq \Omega_{H_1^{\ell_0}}\subseteq
\Omega_{H_2^{\ell_0}}\subseteq\cdots\subseteq
\Omega_{H_{k-1}^{\ell_0}}\subseteq
\Omega_{H_k^{\ell_0}}\subseteq\cdots.$$
\end{enumerate}
\end{lem}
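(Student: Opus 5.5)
The plan is to reduce both chains to the order description of the principal ideals established in Proposition \ref{id3al8766}. There, $\Omega_{H_k^\ell}$ was identified explicitly with the set of all $H_\alpha^\beta$ having $(\alpha,\beta)\in\{0,\ldots,k\}\times\{0,\ldots,\ell\}$. Membership in $\Omega_{H_k^\ell}$ is therefore decided purely by the index pair, through the relation $H_\alpha^\beta\leq H_k^\ell$ of Definition \ref{t543ede}.

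For item (1), I fix $k_0$ and show the single inclusion $\Omega_{H_{k_0}^{\ell-1}}\subseteq\Omega_{H_{k_0}^{\ell}}$ for every $\ell\geq 1$; the whole chain then follows by iterating. I take $H_\alpha^\beta\in\Omega_{H_{k_0}^{\ell-1}}$, so that $\alpha\leq k_0$ and $\beta\leq \ell-1<\ell$. This gives $H_\alpha^\beta\leq H_{k_0}^{\ell}$, that is, $H_\alpha^\beta\in\Omega_{H_{k_0}^\ell}$.

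A slicker lattice-theoretic route is also available. Since $H_{k_0}^{\ell-1}\leq H_{k_0}^{\ell}$, transitivity of $\leq$ in the poset $(\mathscr{FM},\leq)$ (Proposition \ref{join-meet}) shows that any $H_\alpha^\beta\leq H_{k_0}^{\ell-1}$ also lies below $H_{k_0}^\ell$. Equivalently, a principal ideal generated by a smaller element is contained in the one generated by a larger element.

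Item (2) is identical with the roles of the two indices exchanged. I fix $\ell_0$, use $H_{k-1}^{\ell_0}\leq H_k^{\ell_0}$, and apply transitivity again.

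I do not expect any real obstacle. The one point to be careful about is that the order on $\mathscr{FM}$ is defined through the indices and not through the matrices themselves. One might worry that distinct index pairs give equal matrices, since $H_k^0=H_0^\ell=[1]$, which would make $\leq$ fail antisymmetry. The argument, however, uses only reflexivity and transitivity of the index-wise comparison. Those are inherited from the product order on $\mathbb N_0\times\mathbb N_0$, so the inclusions hold whenever elements are treated as indexed by their parameters, as the paper does throughout.

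As an optional remark, the strictness of each inclusion can be read off from the cardinality formula $|\Omega_{H_k^\ell}|=(k+1)(\ell+1)$ of Proposition \ref{id3al8766}, although the statement itself only asks for $\subseteq$.
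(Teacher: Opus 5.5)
Your proof is correct, but it reaches the inclusions by a different route than the paper.

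\textbf{The paper's route.} The paper does not use the index description of $\Omega_{H_k^\ell}$ directly. It takes the nested-block relations $H_{k_0}^j\unlhd H_{k_0}^{j+1}$ and $H_j^{\ell_0}\unlhd H_{j+1}^{\ell_0}$ supplied by the recursive construction (upper-left block $H_k^{\ell-1}$, lower-right block $H_{k-1}^{\ell}$). It then converts these into inclusions of the sets $\Omega$ through Lemma~\ref{yytred345}(1), which equates membership in $\Omega_{H_k^\ell}$ with being a submatrix of $H_k^\ell$.

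\textbf{What each route buys.} The paper's route gives a structural picture: each step of the chain of ideals is realized by an actual submatrix embedding coming from the recursion. But it rests on the submatrix dictionary, which breaks in the degenerate cases. First, $H_0^0=[0]$ is not a submatrix of $H_1^0=[1]$, so the relation $H_j^{\ell_0}\unlhd H_{j+1}^{\ell_0}$ already fails for $\ell_0=j=0$. Second, $H_k^0=[1]$ is a submatrix of $H_2^2$ for every $k$, although $H_k^0\not\leq H_2^2$ once $k>2$. Your argument needs only the definition of $\leq$ on $\mathscr{FM}$ and transitivity of the product order on $\mathbb N_0\times\mathbb N_0$. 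It is therefore shorter and immune to these problems. Your remark on antisymmetry is the right way to handle the coincidences $H_k^0=H_0^\ell=[1]$.

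\textbf{One qualification.} Your optional strictness remark, like the count $(k+1)(\ell+1)$ it relies on, holds only under the index-pair reading. As sets of matrices, $\Omega_{H_0^2}$ and $\Omega_{H_0^3}$ coincide, since $H_0^2=H_0^3=[1]$.
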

\begin{proof}
By Lemma \ref{yytred345}, the relations
\[
H_{k_0}^j\unlhd H_{k_0}^{j+1}
\quad\text{and}\quad
H_j^{\ell_0}\unlhd H_{j+1}^{\ell_0},
\]
for every $j\geq 0$, imply the corresponding inclusions between
the sets $\Omega_{H_{k_0}^j}$ and $\Omega_{H_j^{\ell_0}}$,
respectively.
\end{proof}
 \begin{lem}\label{FM5643}
Let $k,\ell\geq 1$ be integers, and consider Definition \ref{fed789865r}. Then $(H_k^{\ell})^F=H_{\ell}^k$
\end{lem}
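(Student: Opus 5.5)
The plan is a double induction on $k+\ell$, driven by a block formula for the flip-transpose. The one new ingredient is that formula. If
$$M=\left[\begin{array}{c|c} P & 0\\ \hline R & Q\end{array}\right],$$
with $P$ of size $a\times b$, $Q$ of size $c\times d$, and $R$ of size $c\times b$, then
$$M^F=\left[\begin{array}{c|c} Q^F & 0\\ \hline R^F & P^F\end{array}\right].$$
In words: the flip-transpose swaps the two diagonal blocks, flip-transposes every block, and keeps the zero block in the upper right. I would prove this in one of two ways. The first is directly from Definition \ref{fed789865r}, since $(M^F)_{ij}=m_{n-j+1,\,m-i+1}$ sends the lower-right corner of $M$ to the upper-left corner of $M^F$. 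The second uses property (1) of \eqref{rules00001}, $M^F=J_nM^TJ_m$, with $J_{b+d}$ and $J_{a+c}$ written as anti-diagonal block matrices with blocks $J_d,J_b$ and $J_c,J_a$. In particular $I_c^F=I_c$ by \eqref{rules000010101}. I expect this index bookkeeping to be the only delicate point: one must check that the zero block really lands in the upper-right position, of size $d\times a$, and not in the lower-left.

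The base cases are those with $\min\{k,\ell\}=1$. By Definition \ref{def56430001}(b), $H_k^1$ is the $1\times k$ all-ones row, and its flip-transpose is the $k\times 1$ all-ones column, which is $H_1^k$. Symmetrically, $(H_1^\ell)^F=H_\ell^1$, and $H_1^1=[1]$ is fixed.

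For the inductive step take $k,\ell\ge 2$ and assume the statement for all pairs with smaller index sum. Apply the block formula to the recursion \eqref{m4tr1x67532} with $P=H_k^{\ell-1}$, $Q=H_{k-1}^{\ell}$, and $R=I_{C_{\ell-1}^{k+\ell-2}}$. By the induction hypothesis $Q^F=H_\ell^{k-1}$ and $P^F=H_{\ell-1}^k$, so
$$(H_k^\ell)^F=\left[\begin{array}{c|c} H_\ell^{k-1} & 0\\ \hline I_{C_{\ell-1}^{k+\ell-2}} & H_{\ell-1}^{k}\end{array}\right].$$

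It remains to compare this with the recursive definition of $H_\ell^k$ itself. That definition is the same block matrix, with first block $H_\ell^{k-1}$, last block $H_{\ell-1}^k$, and identity of order $C_{k-1}^{k+\ell-2}$. Since $C_{k-1}^{k+\ell-2}=C_{\ell-1}^{k+\ell-2}$ by the symmetry of binomial coefficients, the identity blocks agree.

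A final sanity check on the overall sizes uses the same symmetry. $(H_k^\ell)^F$ has size $C_\ell^{k+\ell-1}\times C_{\ell-1}^{k+\ell-1}$, while $H_\ell^k$ has size $C_{k-1}^{k+\ell-1}\times C_k^{k+\ell-1}$, and these agree because $C_{k-1}^{k+\ell-1}=C_\ell^{k+\ell-1}$ and $C_k^{k+\ell-1}=C_{\ell-1}^{k+\ell-1}$. This closes the induction.
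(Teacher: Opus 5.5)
Your proof is correct and follows the paper's argument: induction with base cases $\min\{k,\ell\}=1$, flip-transposing the recursive block form so that the diagonal blocks swap and the identity block is fixed, then matching the result with the recursion for $H_\ell^k$ via $C_{\ell-1}^{k+\ell-2}=C_{k-1}^{k+\ell-2}$. Your explicit block formula $M^F=\left[\begin{smallmatrix} Q^F & 0\\ R^F & P^F\end{smallmatrix}\right]$ fills in a step the paper uses silently, but justify it through $M^F=J_nM^TJ_m$ rather than the entry formula, because the formula printed in Definition \ref{fed789865r} has $m$ and $n$ interchanged (for an $m\times n$ matrix it should read $(A^F)_{ij}=a_{m-j+1,\,n-i+1}$).
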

\begin{proof} 
The proof is by induction over $k\geq 1$ and $\ell\geq 1$, so clealy $$(H_k^1)^F=(1,1, \ldots, 1)^F= \begin{bmatrix} 1\\ 1\\ \vdots\\ 1 \end{bmatrix}= H_1^k$$
Analogaus for  $(H_1^{\ell})^F=H_{\ell}^1$.
Hence, for all $k^{\prime}\geq 2$ and for each $\ell^{\prime}\geq 2$ we can assume $(H_{{k^{\prime}}}^2)^F=H_2^{k^{\prime}}$ and
$(H_2^{\ell^{\prime}})^F=H_{\ell^{\prime}}^2$. Let $k\geq k^{\prime}\geq 2$ y $\ell\geq \ell^{\prime}\geq 2$. Then 
$$(H_k^2)^F= \left[ \begin{tabular}{c|c}
 $(H_{k-1}^2)^F$ &  0  \cr\hline $I_k$  & $(H_k^1)^F$
\end{tabular} \right]
=\left[ \begin{tabular}{c|c}
 $H_2^{k-1}$ &  0  \cr\hline $I_k$  & $H_1^k$
\end{tabular} \right]=H_2^k$$
Analogously, we can show that  $(H_2^{\ell})^F=H_{\ell}^2$.\\
Now suppose, as the induction hypothesis, that for every $2\leq k^{\prime} \leq k$ 
and every $2\leq \ell^{\prime} \leq \ell$ we have $(H_k^{ \ell^{\prime}})^F=H_{ \ell^{\prime}}^k$
and $(H_{\ell}^{k^{\prime}})^F=H_{ k^{\prime}}^{\ell}$. As a consequence of this fact, we obtain
$$(H_k^{\ell})^F= \left[ \begin{tabular}{c|c}
 $(H_{k-1}^{\ell})^F$ & $ 0 $ \cr\hline $I_{C_{\ell-1}^{k+\ell-2}}$  & $(H_k^{\ell-1})^F$
\end{tabular} \right]
=\left[ \begin{tabular}{c|c}
 $H_{\ell}^{k-1}$ & $ 0 $ \cr\hline $I_{C_{k-1}^{k+\ell-2}}$  & $H_{\ell-1}^k$
\end{tabular} \right]=H_{\ell}^k$$
\end{proof}
\begin{defn}\label{Tgyh543}
Let $\mathscr{FM}=(\mathscr{FM},\vee,\wedge)$ be the associative algebra.
By Lemma \ref{FM5643}, the map
\begin{equation}\label{hhgttr432}
\begin{aligned}
()^F:\mathscr{FM} &\longrightarrow \mathscr{FM},\\
H_k^\ell &\longmapsto H_\ell^k.
\end{aligned}
\end{equation}
is well defined.
\end{defn}
\begin{lem}\label{7uy6TTT}
With the notation of Definition \ref{fed789865r}, the map $F$ defined in Definition \ref{Tgyh543} satisfies the following properties:
\begin{description}
\item [ a)] $((H_k^{\ell})^F)^F=H_k^{\ell}$ \\
\item [ b)] If $H_k^{\ell}\leq H_{k^{\prime}}^{\ell^{\prime}}$ then $(H_k^{\ell})^F\leq (H_{k^{\prime}}^{\ell^{\prime}})^F$ \\
\item [ c)] $(H_k^{\ell}\vee H_{k^{\prime}}^{\ell^{\prime}})^F=(H_k^{\ell})^F\vee (H_{k^{\prime}}^{\ell^{\prime}})^F$ \\
\item [ d)] $(H_k^{\ell}\wedge H_{k^{\prime}}^{\ell^{\prime}})^F=(H_k^{\ell})^F\wedge (H_{k^{\prime}}^{\ell^{\prime}})^F$ \\
\item [ e)] $(H_k^{\ell}\vee[H_{k^{\prime}}^{\ell^{\prime}} \wedge H_{k^{\prime\prime}}^{\ell^{\prime\prime}}])^F=(H_k^{\ell})^F\vee[(H_{k^{\prime}}^{\ell^{\prime}})^F \wedge (H_{k^{\prime\prime}}^{\ell^{\prime\prime}})^F]$\\
\item [ f)] $(H_k^{\ell}\wedge[H_{k^{\prime}}^{\ell^{\prime}}\vee H_{k^{\prime\prime}}^{\ell^{\prime\prime}}])^F=(H_k^{\ell})^F\wedge[(H_{k^{\prime}}^{\ell^{\prime}})^F\vee
(H_{k^{\prime\prime}}^{\ell^{\prime\prime}})^F]$\\
\item [ g)] If $H_k^{\ell}=\bigvee_{i=0}^r\binom{r}{i} H_{k-i}^{\ell-(r-i)}$, then 
$$(H_k^{\ell})^F=\bigvee_{i=0}^r\binom{r}{i} H_{\ell-(r-i)}^{k-i}$$
\end{description}
\end{lem}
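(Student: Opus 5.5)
The whole lemma reduces to one observation. By Lemma \ref{FM5643} and Definition \ref{Tgyh543}, the map $()^F$ acts on $\mathscr{FM}$ purely on indices, sending $H_k^\ell$ to $H_\ell^k$. It is therefore the swap $(k,\ell)\mapsto(\ell,k)$ of $\mathbb N_0\times\mathbb N_0$, transported to $\mathscr{FM}$. By Definition \ref{t543ede}, order, join and meet are all computed coordinatewise by $\le$, $\max$ and $\min$. So I would first record the identity
$$(H_k^\ell)^F = H_\ell^k,$$
and then verify each item by computing indices on both sides.

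For a), apply Lemma \ref{FM5643} twice: $((H_k^\ell)^F)^F=(H_\ell^k)^F=H_k^\ell$. This is also property (2) of \eqref{rules00001}.

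For b), the hypothesis $H_k^\ell\le H_{k'}^{\ell'}$ means $k\le k'$ and $\ell\le\ell'$. The two inequalities together give $H_\ell^k\le H_{\ell'}^{k'}$, because the definition of the order treats both coordinates symmetrically.

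For c) and d), compute directly:
$$(H_k^\ell\vee H_{k'}^{\ell'})^F=(H_{\max\{k,k'\}}^{\max\{\ell,\ell'\}})^F=H_{\max\{\ell,\ell'\}}^{\max\{k,k'\}}=H_\ell^k\vee H_{\ell'}^{k'}.$$
The meet is identical with $\min$ in place of $\max$.

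Items e) and f) then follow formally. Apply c) to the outer operation and d) to the inner one for e), and the reverse for f). Alternatively, expand both sides into $\max/\min$ expressions in the indices and observe that they coincide.

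For g), use c) repeatedly, by induction on the number of joinands, to push $F$ through a finite join. By \eqref{TTy4rde67}, the multiplicity $\binom{r}{i}H$ is just an $\binom{r}{i}$-fold join of $H$ with itself. Hence
$$\Bigl(\binom{r}{i}H_{k-i}^{\ell-(r-i)}\Bigr)^F=\binom{r}{i}\bigl(H_{k-i}^{\ell-(r-i)}\bigr)^F=\binom{r}{i}H_{\ell-(r-i)}^{k-i},$$
where the last step uses Lemma \ref{FM5643}. Combining these gives the stated formula.

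The only point needing care is the interpretation in g). The equality $H_k^\ell=\bigvee\cdots$ has to be read as an identity in the lattice $(\mathscr{FM},\vee,\wedge)$, as in Proposition \ref{dfrim43443}, and not as a matrix identity. I would make this explicit, so that applying $F$ to both sides of the hypothesis is legitimate. One also needs the index pairs $(\ell-(r-i),\,k-i)$ to lie in $\mathbb N_0\times\mathbb N_0$, which the range $0\le r\le\min\{k-1,\ell-1\}$ guarantees. Apart from this, the argument is routine bookkeeping and I expect no real obstacle.
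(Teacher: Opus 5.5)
Your proposal is correct and follows the paper's proof. The paper likewise treats $()^F$ as the index swap $H_k^\ell\mapsto H_\ell^k$, proves (c) by the same $\max$ computation with (d) analogous, calls (a), (b), (e), (f) easy, and derives (g) from (c). Your extra remarks, reading (g) as an identity in the index lattice and checking that the swapped indices stay in $\mathbb N_0\times\mathbb N_0$, make explicit points the paper leaves implicit.
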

\begin{proof}
Items (a) and (b)  are easily verifiable.\\
For item (c) we have
 \begin{align*}
     (H_k^{\ell}\vee H_{k^{\prime}}^{\ell^{\prime}})^F&=(H_{\max\{k,  k^{\prime}\}}^{\max\{\ell,  \ell^{\prime}\}})^F\\
                                                                               &=H_{\max\{\ell,  \ell^{\prime}\}}^{\max\{k,  k^{\prime}\}}\\
                                                                               &=H_{\ell}^k\vee H_{\ell^{\prime}}^{k^{\prime}}\\
                                                                               &=(H_k^{\ell})^F\vee (H_{k^{\prime}}^{\ell^{\prime}})^F\\
    \end{align*}
 The proof of item (d)  is the same as in (c).\\
The proof of (e) and (f) is easy and the proof of item (g) following item (c).
\end{proof}
\begin{lem}\label{GTr43deF}
The $(k,\ell)$-Fractus matrix $H_k^{\ell}$ is a square matrix if and only if $k=\ell$.
\end{lem}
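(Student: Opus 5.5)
The plan is to reduce the statement to a comparison of two binomial coefficients. By Definition \ref{def56430001}, for $k,\ell\geq 2$ the matrix $H_k^\ell$ has size $C_{\ell-1}^{k+\ell-1}\times C_{\ell}^{k+\ell-1}$. So $H_k^\ell$ is square if and only if
$$C_{\ell-1}^{k+\ell-1}=C_{\ell}^{k+\ell-1}.$$
Write $m=k+\ell-1$. The ratio of consecutive binomial coefficients is
$$\frac{C_{\ell}^{m}}{C_{\ell-1}^{m}}=\frac{m-\ell+1}{\ell}=\frac{k}{\ell}.$$
Both coefficients are positive, since $1\leq \ell-1<\ell\leq m$. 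Hence they are equal exactly when $k=\ell$. This settles the generic case in both directions at once.

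Next come the degenerate parameters from items a) and b) of the definition, which I would check directly. The matrices $H_0^0$, $H_k^0$ and $H_0^\ell$ are all $1\times 1$, hence square. Among these, only $H_0^0$ has $k=\ell$, so the statement must be read with $k,\ell\geq 1$ (or $k,\ell\geq 2$, matching the code definition). I will state this restriction explicitly.

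For $H_k^1$ (size $1\times k$) and $H_1^\ell$ (size $\ell\times 1$), the matrix is square iff $k=1$, respectively $\ell=1$, which agrees with the claim. The formula $C_{\ell-1}^{k+\ell-1}\times C_\ell^{k+\ell-1}$ also reproduces these sizes, so the ratio argument in fact covers all $k,\ell\geq 1$ uniformly.

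The main obstacle is not the computation but the edge cases. One must justify that the size formula holds for all $k,\ell\geq 1$, which I would note follows from Pascal's identity \eqref{HHYTR} applied inductively to the block sizes in \eqref{m4tr1x67532}. One must also exclude the $(k,0)$ and $(0,\ell)$ cases, where the statement is false as written.

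As a cross-check, the flip-transpose symmetry of Lemma \ref{FM5643} gives $(H_k^k)^F=H_k^k$, so the dimensions of $H_k^k$ are swapped into themselves, consistent with squareness.
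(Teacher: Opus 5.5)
Your proof is correct and follows the same route as the paper, which reads off the size $C_{\ell-1}^{k+\ell-1}\times C_{\ell}^{k+\ell-1}$ and asserts without further argument that the two binomial coefficients coincide exactly when $k=\ell$. Your ratio computation $C_{\ell}^{m}/C_{\ell-1}^{m}=k/\ell$ supplies the justification the paper omits. Your remark is also right that $H_k^0$ and $H_0^{\ell}$ are $1\times 1$, so the statement needs $k,\ell\geq 1$; this is a legitimate restriction of scope that the paper's one-line proof passes over.
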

\begin{proof}
The $(k,\ell)$-Fractus matrix $H_k^{\ell}$ has dimensions $C_{\ell-1}^{k+\ell-1}\times C_{\ell}^{k+\ell-1}$. Hence,
$$C_{\ell-1}^{k+\ell-1}=C_{\ell}^{k+\ell-1},$$
which holds if and only if $$k=\ell$$.
\end{proof}
\begin{prop}\label{gro765gf}
The map $()^F$ is an automorphism of the lattice algebra
$\mathscr{FM}$. Moreover, its set of fixed points consists precisely of the family of square matrices.
$$\{H_k^k : k\geq 0\}.$$
\end{prop}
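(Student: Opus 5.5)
The plan is to work entirely at the level of indices, using Lemma \ref{FM5643}: on $\mathscr{FM}$ the map $()^F$ acts by $H_k^\ell\mapsto H_\ell^k$, i.e.\ it is induced by the swap $\sigma(k,\ell)=(\ell,k)$ of $\mathbb N_0\times\mathbb N_0$. Since $H_k^\ell$ is determined by its index pair, the map $(k,\ell)\mapsto H_k^\ell$ identifies $\mathscr{FM}$, with the order of Definition \ref{t543ede}, with the product order on $\mathbb N_0\times\mathbb N_0$. Under this identification $\vee$ and $\wedge$ are componentwise $\max$ and $\min$. (For the degenerate indices $k=0$ or $\ell=0$ the definitions give $1\times1$ matrices, which I would treat as formally distinct labelled elements, as the lattice definition implicitly does. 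Lemma \ref{FM5643} covers $k,\ell\ge1$; the cases $H_k^0\mapsto H_0^k$ and $H_0^0\mapsto H_0^0$ then follow directly from the definition of $()^F$ in Definition \ref{Tgyh543}.)

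\textbf{Automorphism.} First, $()^F$ is a bijection, because by Lemma \ref{7uy6TTT}(a) it is an involution. Second, it preserves joins and meets by Lemma \ref{7uy6TTT}(c),(d). Concretely, $(H_k^\ell\vee H_{k'}^{\ell'})^F=H^{\max(k,k')}_{\max(\ell,\ell')}=H_\ell^k\vee H_{\ell'}^{k'}$, and the same computation works with $\min$. Third, Lemma \ref{7uy6TTT}(b) shows it is order-preserving, and so is its inverse, which is $()^F$ itself. Hence $()^F$ is a lattice automorphism. It also preserves the minimum $H_0^0$. The distributive identities (e),(f) come for free from (c),(d).

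\textbf{Fixed points.} $H_k^\ell$ is fixed exactly when $H_\ell^k=H_k^\ell$, i.e.\ when $(\ell,k)=(k,\ell)$, i.e.\ when $k=\ell$. For the link with square matrices I would invoke Lemma \ref{GTr43deF}: for $k,\ell\ge2$, $H_k^\ell$ is square iff $C_{\ell-1}^{k+\ell-1}=C_\ell^{k+\ell-1}$. By symmetry of binomial coefficients this holds iff $\ell-1+\ell=k+\ell-1$, i.e.\ iff $k=\ell$. So the fixed-point set is $\{H_k^k:k\ge0\}$, and for $k\ge2$ these are precisely the square Fractus matrices.

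\textbf{Main obstacle.} The difficulty lies in the small-index cases rather than the argument itself. The matrices $H_k^0$, $H_0^\ell$, $H_0^0$ are all $1\times1$, so ``square'' does not single out $k=\ell$ there, and $H_k^1$ is square only for $k=1$. I would therefore phrase the fixed-point claim as ``fixed points are exactly the diagonal elements $H_k^k$'', which is purely index-theoretic. The identification with square matrices would then be stated via Lemma \ref{GTr43deF} for $k,\ell\ge2$ (and checked by hand for $\ell=1$), noting explicitly that elements of $\mathscr{FM}$ are indexed objects.
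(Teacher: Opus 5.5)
Your proposal is correct and follows essentially the same route as the paper. Bijectivity comes from the involution property in Lemma \ref{7uy6TTT}(a), preservation of $\vee$ and $\wedge$ from Lemma \ref{7uy6TTT}(c),(d), and the fixed points from comparing indices, $H_\ell^k=H_k^\ell \iff k=\ell$. Your extra caution about small indices is justified and goes beyond the paper's proof, which silently treats elements of $\mathscr{FM}$ as labelled by $(k,\ell)$: under that reading the $1\times 1$ elements $H_k^0$, $H_0^\ell$ with $k,\ell\ge 1$ are square but not fixed, so identifying the fixed points with the square matrices (via Lemma \ref{GTr43deF}) is accurate only for $k,\ell\ge 1$.
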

\begin{proof}
By Lemma \ref{7uy6TTT}(a), the map $()^F$ is an involution. Hence, it is
bijective. Furthermore, Lemma \ref{7uy6TTT}(c) and (d) show that
$$(A\vee B)^F=A^F\vee B^F
\quad\text{and}\quad
(A\wedge B)^F=A^F\wedge B^F$$
for all $A,B\in\mathscr{FM}$. Therefore, $()^F$ is an automorphism of
the lattice algebra $\mathscr{FM}$.
Now, $H_k^\ell$ is fixed by $F$ if and only if
$$(H_k^\ell)^F=H_k^\ell,$$
which is equivalent to
$$H_\ell^k=H_k^\ell.$$
Since $H_k^\ell=H_\ell^k$ if and only if $k=\ell$, the fixed points of
$()^F$ are precisely the elements $H_k^k$, with $k\geq 0$.
\end{proof}
\subsection{Lattice ${\mathscr H}$} Next, we give a lattice induced by  ${\mathscr FM}$. 
\begin{defn}\label{2t543ede}
For all $k, \ell\geq 2$, let ${\EuScript C}_k^{\ell}$ denote the $(\ell, k)$-fractus code  (see, Definition \ref{Akl4327}). We then define
$${\mathscr H}=\{ {\EuScript C}_k^{\ell}: (k, \ell)\in {\mathbb N}_0\times {\mathbb N}_0\}$$
\end{defn}
Recall that $H_{k^{\prime}}^{\ell^{\prime}}\unlhd H_k^{\ell}$ means that  $H_{k^{\prime}}^{\ell^{\prime}}$ is a submatrix $ H_k^{\ell}$.  Then 
$${\EuScript C}_k^{\ell}=\ker H_k^{\ell} \subseteq \ker H_{k^{\prime}}^{\ell^{\prime}}={\EuScript C}_{k^{\prime}}^{\ell^{\prime}}$$
\begin{defn}\label{6tre453}
We define the parcial order in ${\mathscr H}$
\begin{center}
${\EuScript C}_k^{\ell} \leq{\EuScript C}_{k^{\prime}}^{\ell^{\prime}}$ if and only if  $H_{k^{\prime}}^{\ell^{\prime}}\leq H_k^{\ell}$
\end{center}
\end{defn}
\begin{rem}\label{67trde111}
${\EuScript C}_k^{\ell} \leq{\EuScript C}_{k^{\prime}}^{\ell^{\prime}}$ if and only if  $k^{\prime} \leq k$, $\ell^{\prime}\leq \ell$ and  ${\EuScript C}_k^{\ell} \subseteq {\EuScript C}_{k^{\prime}}^{\ell^{\prime}}$\\
\end{rem}
It is easy to see from Remark \ref{67trde111} that $(\mathscr{H}, \leq)$ is a partially ordered set.\\
For each ${\EuScript C}_k^{\ell}$ and  ${\EuScript C}_{k^{\prime}}^{\ell^{\prime}}$ in ${\mathscr H}$ we define\\
meet; $${\EuScript C}_k^{\ell}\wedge {\EuScript C}_{k^{\prime}}^{\ell^{\prime}}={\EuScript C}_{\max\{k, k^{\prime}\}}^{\max\{\ell, \ell^{\prime}\}}$$ and
join; $${\EuScript C}_k^{\ell}\vee {\EuScript C}_{k^{\prime}}^{\ell^{\prime}}={\EuScript C}_{\min\{k, k^{\prime}\}}^{\min\{\ell, \ell^{\prime}\}}$$. 
\begin{prop}
$({\mathscr H},  \vee, \wedge)$ is a lattice and an associative algebra.
\end{prop}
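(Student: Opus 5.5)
The plan is to transport the lattice structure of $({\mathscr FM},\vee,\wedge)$ from Proposition \ref{join-meet} to ${\mathscr H}$ through the index map. I will show that $({\mathscr H},\vee,\wedge)$ is the order dual of $({\mathscr FM},\vee,\wedge)$. Concretely, I consider the assignment
$$\rho:{\mathscr FM}\longrightarrow{\mathscr H},\qquad H_k^\ell\longmapsto {\EuScript C}_k^{\ell},$$
which is surjective by Definition \ref{2t543ede}.

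By Definition \ref{6tre453}, ${\EuScript C}_k^{\ell}\leq{\EuScript C}_{k'}^{\ell'}$ holds exactly when $H_{k'}^{\ell'}\leq H_k^{\ell}$, so $\rho$ reverses the order. The operations defined just before the statement are
$${\EuScript C}_k^{\ell}\wedge{\EuScript C}_{k'}^{\ell'}={\EuScript C}_{\max\{k,k'\}}^{\max\{\ell,\ell'\}},\qquad {\EuScript C}_k^{\ell}\vee{\EuScript C}_{k'}^{\ell'}={\EuScript C}_{\min\{k,k'\}}^{\min\{\ell,\ell'\}}.$$
Hence $\rho(A\vee B)=\rho(A)\wedge\rho(B)$ and $\rho(A\wedge B)=\rho(A)\vee\rho(B)$.

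The main obstacle is well-definedness. The operations and the order on ${\mathscr H}$ are phrased through the indices $(k,\ell)$, but distinct index pairs can give equal codes. All degenerate cases, $(k,0)$ and $(0,\ell)$, give the zero code by Definition \ref{Akl4327}. To deal with this, I will treat the elements of ${\mathscr H}$ as indexed codes, that is, pairs $((k,\ell),{\EuScript C}_k^{\ell})$, following how the paper already handles ${\mathscr FM}$ through its parameters. With this convention $\rho$ is a bijection, and the order of Definition \ref{6tre453} is genuinely antisymmetric. Antisymmetry follows from Remark \ref{67trde111}: mutual comparability forces $k=k'$ and $\ell=\ell'$. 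I will also note that the order is compatible with inclusion of kernels via the submatrix relation $H_{k'}^{\ell'}\unlhd H_k^{\ell}$.

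With $\rho$ an order-reversing bijection, I verify the lattice axioms directly from $\min$ and $\max$ on $\mathbb N_0$. First, ${\EuScript C}_{\max\{k,k'\}}^{\max\{\ell,\ell'\}}$ is a lower bound of both codes. Second, any common lower bound ${\EuScript C}_a^b$ has $a\geq k,k'$ and $b\geq\ell,\ell'$, so it lies below that code, which is therefore the greatest lower bound. The dual argument shows the $\min$ code is the least upper bound. Alternatively, the laws transfer directly from Proposition \ref{join-meet}, since the dual of a (distributive) lattice is again one. Commutativity, associativity, idempotence and absorption are the corresponding identities for $\max$ and $\min$, applied componentwise. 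Distributivity comes for free by the same transfer.

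Finally, the ``associative algebra'' clause is interpreted, as for Definition \ref{Tgyh543}, as the statement that $\vee$ and $\wedge$ are associative binary operations on ${\mathscr H}$. This is immediate from associativity of $\max$ and $\min$. I would also remark, as a byproduct, that ${\EuScript C}_0^0$ is the greatest element and that there is no least element, mirroring Proposition \ref{join-meet}.
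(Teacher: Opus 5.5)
Your proof is correct, and it is more complete than the paper's. The paper works directly inside $\mathscr H$ from Definition \ref{6tre453}. It only checks that ${\EuScript C}_{\max\{k,k'\}}^{\max\{\ell,\ell'\}}$ lies below both codes and that ${\EuScript C}_{\min\{k,k'\}}^{\min\{\ell,\ell'\}}$ lies above both. It never shows these are the \emph{greatest} lower and \emph{least} upper bounds. It says nothing about the ``associative algebra'' clause, and it does not ask whether operations defined through indices are well defined on a set of codes.

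You instead go through $\rho$, which the paper introduces only later, in Lemma \ref{Ttre4321}. You observe that $\rho$ is an order-reversing bijection exchanging $\vee$ and $\wedge$. You then either verify the extremal property directly or pull the distributive-lattice structure back from Proposition \ref{join-meet} by duality. The paper's version is shorter and stays inside $\mathscr H$, but as written it is incomplete. Yours gives a complete argument and distributivity for free. It also gives the accurate picture of $\mathscr H$ as the order dual of $\mathscr{FM}$, with greatest element ${\EuScript C}_0^0$ and no least element. Incidentally, this shows that $\mathscr{FM}$ and $\mathscr H$ cannot be isomorphic as lattices, as Proposition \ref{1som0rph5654} later claims, since one has a least element and the other does not.

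Your indexed-code convention is genuinely needed, and not only for the degenerate indices. When $\operatorname{char}\mathbb F\geq 5$, both ${\EuScript C}_2^9$ and ${\EuScript C}_3^3$ are the zero subspace of $\mathbb F^{10}$. Yet their meets with ${\EuScript C}_4^2$ are ${\EuScript C}_4^9$ and ${\EuScript C}_4^3$, codes of lengths $220$ and $20$.

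You should drop the remark that the order is ``compatible with inclusion of kernels via $H_{k'}^{\ell'}\unlhd H_k^\ell$''. Nothing in your argument uses it, and in general it is false; it repeats the unjustified part of Remark \ref{67trde111}. It does hold for fixed $k$. There $H_k^{\ell-1}$ sits to the left of a zero block, so the leading coordinates of a word of ${\EuScript C}_k^\ell$ form a word of ${\EuScript C}_k^{\ell-1}$. It fails once $k'<k$. Over any field, $x=(1,-1,0,0,-1,1)$ lies in ${\EuScript C}_3^2$. Its restriction to the columns of the lower-right block $H_2^2$ is $(0,-1,1)$, and $H_2^2(0,-1,1)^T=(-1,1,0)^T\neq 0$. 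For antisymmetry you only need the index part of Remark \ref{67trde111}, which follows directly from Definition \ref{6tre453}.
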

\begin{proof}
Clearly ${\EuScript C}_k^{\ell}\wedge {\EuScript C}_{k^{\prime}}^{\ell^{\prime}}\leq {\EuScript C}_k^{\ell}$ if and only if  
${\EuScript C}_{\max\{k, k^{\prime}\}}^{\max\{\ell, \ell^{\prime}\}}\leq {\EuScript C}_k^{\ell}$, similarly, it is demonstrated that 
${\EuScript C}_k^{\ell}\wedge {\EuScript C}_{k^{\prime}}^{\ell^{\prime}}\leq {\EuScript C}_{k^{\prime}}^{\ell^{\prime}}$       .\\
Now ${\EuScript C}_k^{\ell}\leq {\EuScript C}_k^{\ell}\vee {\EuScript C}_{k^{\prime}}^{\ell^{\prime}}$ if and only if ${\EuScript C}_k^{\ell}\leq {\EuScript C}_{\min\{k, k^{\prime}\}}^{\min\{\ell, \ell^{\prime}\}}$  
similarly, it is demonstrated that  
${\EuScript C}_{k^{\prime}}^{\ell^{\prime}}\leq {\EuScript C}_k^{\ell}\vee {\EuScript C}_{k^{\prime}}^{\ell^{\prime}}$. This shows that $({\mathscr H}, \vee, \wedge)$ is a lattice.
\end{proof}
\begin{defn}\label{554rde32}
Let ${\EuScript C}_k^{\ell}\in {\mathscr H}$ be any element of ${\EuScript H}$. Define
\begin{equation}\label{jhuy654}
\Sigma_{{\EuScript C}_k^{\ell}}=\{ {\EuScript C}_{\alpha}^{\beta} \in {\mathscr H}:  {\EuScript C}_k^{\ell}\leq {\EuScript C}_{\alpha}^{\beta} \}
\end{equation}
\end{defn}
There is a bijective correspondence:
\begin{equation}\label{treeew321}
\begin{aligned}
\rho: {\mathscr FM} &\longrightarrow {\mathscr H}\\
                           H_k^{\ell}& \longmapsto {\EuScript C}_k^{\ell}
\end{aligned}
\end{equation}
Clearly, the inverse image of ${\EuScript C}_k^{\ell}$ under $\rho$ is
$\rho^{-1}({\EuScript C}_k^{\ell})=H_k^{\ell}.$
\begin{lem}\label{Ttre4321}
Consider the function $\rho$ defined in (\ref{treeew321}). Then the following properties hold:
\begin{description}
 \item[1)] If $H_{k^{\prime}}^{\ell^{\prime}}\leq H_k^{\ell}$ then  ${\EuScript C}_k^{\ell} \leq {\EuScript C}_{k^{\prime}}^{\ell^{\prime}}$\\
\item[2)] $\rho(H_k^{\ell}\wedge  H_{k^{\prime}}^{\ell^{\prime}})= {\EuScript C}_k^{\ell}\vee {\EuScript C}_{k^{\prime}}^{\ell^{\prime}}$\\
\item[3)] $\rho(H_k^{\ell}\vee  H_{k^{\prime}}^{\ell^{\prime}})=  {\EuScript C}_k^{\ell}\wedge {\EuScript C}_{k^{\prime}}^{\ell^{\prime}}$\\
\item[4)] $\rho(H_{k}^{\ell}\wedge(H_{k^{\prime}}^{\ell^{\prime}}\vee H_{k^{\prime\prime}}^{\ell^{\prime\prime}}))={\EuScript C}_k^{\ell}\vee ({\EuScript C}_{k^{\prime}}^{\ell^{\prime}}\wedge{\EuScript C}_{k^{\prime\prime}}^{\ell^{\prime\prime}})$\\
\item[5)] $\rho(H_{k}^{\ell}\vee(H_{k^{\prime}}^{\ell^{\prime}}\wedge H_{k^{\prime\prime}}^{\ell^{\prime\prime}}))={\EuScript C}_k^{\ell}\wedge ({\EuScript C}_{k^{\prime}}^{\ell^{\prime}}\vee{\EuScript C}_{k^{\prime\prime}}^{\ell^{\prime\prime}})$\\
\item[6)] $\rho(\Omega_{H_k^{\ell}})=\Sigma_{{\EuScript C}_k^{\ell}}$
\end{description}
\end{lem}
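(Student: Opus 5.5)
The plan is to reduce every item to index arithmetic via the definitions. The order on $\mathscr{FM}$ compares indices (Definition \ref{t543ede}), and the order on $\mathscr H$ is defined by reversing it (Definition \ref{6tre453}). The lattice operations on $\mathscr H$ are likewise defined by swapping $\max$ and $\min$ relative to those on $\mathscr{FM}$. So $\rho$ is an order-reversing bijection by construction, and each item becomes a comparison of subscripts and superscripts.

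For item 1), assume $H_{k'}^{\ell'}\leq H_k^{\ell}$. Definition \ref{6tre453}, read with the roles of the two codes exchanged, gives ${\EuScript C}_k^{\ell}\leq{\EuScript C}_{k'}^{\ell'}$ directly. I would add a remark in the spirit of Remark \ref{67trde111}: $H_{k'}^{\ell'}$ is a submatrix of $H_k^\ell$ by Lemma \ref{yytred345}, which motivates the inclusion of kernels. I would treat that only as motivation, since the order is defined combinatorially. Strictly, the two kernels live in ambient spaces of different lengths, so the containment only makes sense after padding coordinates, and the proof should not depend on it.

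For items 2) and 3), I compute
\[
\rho(H_k^\ell\wedge H_{k'}^{\ell'})=\rho\bigl(H_{\min\{k,k'\}}^{\min\{\ell,\ell'\}}\bigr)={\EuScript C}_{\min\{k,k'\}}^{\min\{\ell,\ell'\}},
\]
which is exactly the definition of ${\EuScript C}_k^\ell\vee{\EuScript C}_{k'}^{\ell'}$. The computation for $\vee$ with $\max$ is symmetric.

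Items 4) and 5) follow by applying 2) and 3) twice. For 4), I write $\rho(H_k^\ell\wedge(H'\vee H''))=\rho(H_k^\ell)\vee\rho(H'\vee H'')$ using 2), then use 3) on the inner term to get ${\EuScript C}_k^\ell\vee({\EuScript C}'\wedge{\EuScript C}'')$. Item 5) is dual.

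For item 6), I show both inclusions. An element of $\Omega_{H_k^\ell}$ is $H_\alpha^\beta$ with $\alpha\leq k$ and $\beta\leq\ell$. Its image ${\EuScript C}_\alpha^\beta$ then satisfies ${\EuScript C}_k^\ell\leq{\EuScript C}_\alpha^\beta$ by item 1), so it lies in $\Sigma_{{\EuScript C}_k^\ell}$. Conversely, any ${\EuScript C}_\alpha^\beta\in\Sigma_{{\EuScript C}_k^\ell}$ has $H_\alpha^\beta\leq H_k^\ell$ by Definition \ref{6tre453}, so it equals $\rho(H_\alpha^\beta)$ with $H_\alpha^\beta\in\Omega_{H_k^\ell}$. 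Bijectivity of $\rho$ finishes the argument.

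The main obstacle is not computational but definitional. The codes ${\EuScript C}_k^\ell$ for $k$ or $\ell$ equal to $0$ or $1$ are degenerate (zero codes by Definition \ref{Akl4327}), so distinct indices could give equal codes as sets. That would make $\rho$ fail to be injective and the order on $\mathscr H$ ill-defined. I would handle this by stating explicitly that elements of $\mathscr H$ are indexed by $(k,\ell)$, that is, codes are regarded as labeled by their Fractus parity-check matrix. Under that convention $\rho$ is a bijection and all six items hold.
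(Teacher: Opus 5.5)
Your proposal is correct and follows the paper's proof essentially step for step. Item 1 comes from Definition \ref{6tre453}, items 2--3 from the $\min/\max$ computation, items 4--5 from composing 2 and 3, and item 6 from the two inclusions. Your additional remark that elements must be treated as labeled by $(k,\ell)$ is a worthwhile clarification the paper skips when it simply asserts that $\rho$ is a bijection. The same convention is needed in $\mathscr{FM}$ as well, since there $H_k^0$, $H_0^\ell$ and $H_1^1$ all coincide as the matrix $[1]$.
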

\begin{proof}
The proof of item (1) follows from definition \ref{6tre453}\\
For (2) 
\begin{align*}
\rho(H_k^{\ell}\wedge  H_{k^{\prime}}^{\ell^{\prime}})&=\rho(H_{\min\{k, k^{\prime}\}}^{\min\{\ell, \ell^{\prime}\}})\\
                                                                                  &={\EuScript C}_{\min\{k, k^{\prime}\}}^{\min\{\ell, \ell^{\prime}\}}\\
                                                                                  &={\EuScript C}_k^{\ell} \vee {\EuScript C}_{k^{\prime}}^{\ell^{\prime}}\\
                                                                                  &=\rho(H_k^{\ell})\vee  \rho(H_{k^{\prime}}^{\ell^{\prime}})
\end{align*}
The proof of (3) is similar to that of item (2). The proofs of (4) and (5) follow from items (2) and (3).\\
For (5);   
Let $H_{\alpha}^{\beta}\in \Omega_{H_k^{\ell}}$ then by definition  $H_{\alpha}^{\beta}\leq H_k^{\ell}$ and ${\EuScript C}_k^{\ell} \leq {\EuScript C}_{\alpha}^{\beta}$
therefore  $\rho(\Omega_{H_k^{\ell}})\subseteq\Sigma_{{\EuScript C}_k^{\ell}}$. The other contention is similar.
\end{proof}
\begin{prop}\label{1som0rph5654}
The Lattices ${\mathscr FM}$ and ${\mathscr H}$ are isomorphis and $\Sigma_{{\EuScript C}_k^{\ell}}$ is a principal ideal.
\end{prop}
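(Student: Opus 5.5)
The plan is to show that the bijection $\rho$ of (\ref{treeew321}) is a lattice isomorphism. By Lemma \ref{Ttre4321}(2)--(3), $\rho$ exchanges the operations: $\rho(A\wedge B)=\rho(A)\vee\rho(B)$ and $\rho(A\vee B)=\rho(A)\wedge\rho(B)$. So $\rho$ is an anti-isomorphism, i.e.\ an isomorphism $({\mathscr FM},\vee,\wedge)\to({\mathscr H},\wedge,\vee)$, which is an isomorphism onto the dual lattice.

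To get an isomorphism in the literal sense, I compose $\rho$ with the order-reversing map available on the index set. The concrete choice is to read ``isomorphic'' as ``isomorphic as lattice structures up to duality.'' Since ${\mathscr H}$ is defined with $\wedge$ given by $\max$ and $\vee$ given by $\min$, the structure $({\mathscr H},\wedge,\vee)$ is literally the image of $({\mathscr FM},\vee,\wedge)$ under $\rho$. I will say this explicitly and check four things:
\begin{itemize}
\item $\rho$ is bijective, with $\rho^{-1}({\EuScript C}_k^\ell)=H_k^\ell$;
\item $\rho$ is order-reversing in both directions, by Definition \ref{6tre453} and Lemma \ref{Ttre4321}(1);
\item items (2) and (3) of Lemma \ref{Ttre4321} give preservation of the operations after swapping them;
\item distributivity and the minimum $H_0^0$ of Proposition \ref{join-meet} transfer to ${\mathscr H}$, where $H_0^0$ becomes the top element ${\EuScript C}_0^0$.
\end{itemize}

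For the ideal statement, I use Lemma \ref{Ttre4321}(6), $\rho(\Omega_{H_k^\ell})=\Sigma_{{\EuScript C}_k^\ell}$, together with Proposition \ref{id3al8766}, which says $\Omega_{H_k^\ell}$ is the principal ideal generated by $H_k^\ell$. Because $\rho$ reverses order, the image is the principal filter $\{{\EuScript C}:{\EuScript C}_k^\ell\le{\EuScript C}\}$ in $({\mathscr H},\le)$. This is exactly the principal ideal generated by ${\EuScript C}_k^\ell$ with respect to the order induced from ${\mathscr FM}$ through $\rho$, that is, in the dual order where ideals correspond. I will also verify the ideal conditions directly:
\begin{itemize}
\item Closure under the relevant join: if ${\EuScript C}_k^\ell\le{\EuScript C}_\alpha^\beta,{\EuScript C}_{\alpha'}^{\beta'}$ then $\alpha,\alpha'\le k$ and $\beta,\beta'\le\ell$, so $\max$ and $\min$ of the indices stay in range.
\item Downward closure in the transported order, which follows from transitivity.
\end{itemize}

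The main obstacle is the order direction. As written, ${\mathscr H}$ carries the reversed order, so $\rho$ is strictly an anti-isomorphism, and $\Sigma$ is literally an up-set. I would handle this by stating clearly that ${\mathscr H}$ is isomorphic to ${\mathscr FM}$ via $\rho$ after the $\vee/\wedge$ swap that is built into the definitions. Equivalently, ${\mathscr H}$ is isomorphic to the dual of ${\mathscr FM}$, and $({\mathbb N}_0^2,\le)$ is not self-dual, so there is no map to fall back on. Under this identification $\Sigma_{{\EuScript C}_k^\ell}=\rho(\Omega_{H_k^\ell})$ is the principal ideal generated by ${\EuScript C}_k^\ell$, and it has $(k+1)(\ell+1)$ elements.
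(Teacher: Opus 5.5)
Your argument follows the paper's route (bijectivity of $\rho$, Lemma~\ref{Ttre4321}(2), (3), (6), and Proposition~\ref{id3al8766}), and you correctly make explicit what the paper's proof glosses over: since $\rho$ exchanges $\vee$ and $\wedge$, it is a dual isomorphism. So $\mathscr{H}$ is isomorphic to the dual of $\mathscr{FM}$, not to $\mathscr{FM}$ itself (which has a bottom and no top, while $\mathscr{H}$ has the top ${\EuScript C}_0^0$ and no bottom), and $\Sigma_{{\EuScript C}_k^{\ell}}$ is a principal filter of $(\mathscr{H},\le)$, hence a principal ideal only in the dual order. The one thing to delete is your sentence about composing $\rho$ with an order-reversing map on the index set: as you observe later, $(\mathbb{N}_0^2,\le)$ is not self-dual, so no such map exists and the statement can only be proved in this dual form.
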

\begin{proof}
As $\rho$ is a bijection, then by  lemma \ref{Ttre4321}  $\rho$ is an isomorphism of lattice algebras.
By proposition \ref{id3al8766} it is concluded that $\Sigma_{{\EuScript C}_k^{\ell}}$ is an ideal.
\end{proof}
As in section \ref{231yytrfde}, we denote by $T_{{\EuScript C}_k^{\ell}}$ the Tanner graph  and  by $g({\EuScript C}_k^{\ell})$ the girth of ${\EuScript C}_k^{\ell}$, respectively.
\begin{prop}\label{hujyy5477}
If ${\EuScript C}_{k^{\prime}}^{\ell^{\prime}}\in \Sigma_{{\EuScript C}_k^{\ell}}$, then ${\EuScript C}_k^{\ell}$ is a subcode of ${\EuScript C}_{k^{\prime}}^{\ell^{\prime}}$,
$T_{{\EuScript C}_{k^{\prime}}^{\ell^{\prime}}} \subseteq T_{{\EuScript C}_k^{\ell}}$\\ 
and $g({\EuScript C}_k^{\ell})\leq g({\EuScript C}_{k^{\prime}}^{\ell^{\prime}})$
\end{prop}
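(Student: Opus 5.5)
The plan is to unwind the membership hypothesis into a statement about submatrices and then read off the three conclusions in turn.

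\emph{Step 1: from the hypothesis to a submatrix relation.} Suppose ${\EuScript C}_{k^{\prime}}^{\ell^{\prime}}\in\Sigma_{{\EuScript C}_k^{\ell}}$. By Definition \ref{554rde32} this means ${\EuScript C}_k^{\ell}\leq {\EuScript C}_{k^{\prime}}^{\ell^{\prime}}$. By Definition \ref{6tre453} and Remark \ref{67trde111}, this is equivalent to $k^{\prime}\leq k$ and $\ell^{\prime}\leq\ell$, that is, $H_{k^{\prime}}^{\ell^{\prime}}\leq H_k^{\ell}$. Lemma \ref{yytred345}(1) then gives $H_{k^{\prime}}^{\ell^{\prime}}\unlhd H_k^{\ell}$. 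To make this concrete, I will use the recursive definition \eqref{m4tr1x67532}: iterating $H_k^{\ell-1}\unlhd H_k^{\ell}$ (the upper-left block) and $H_{k-1}^{\ell}\unlhd H_k^{\ell}$ (the lower-right block) exhibits $H_{k^{\prime}}^{\ell^{\prime}}$ as the submatrix of $H_k^{\ell}$ on a row set $R^{\prime}$ and a column set $Q^{\prime}$.

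\emph{Step 2: the subcode claim.} The statement ${\EuScript C}_k^{\ell}\subseteq{\EuScript C}_{k^{\prime}}^{\ell^{\prime}}$ is already part of Remark \ref{67trde111}. If instead one argues directly, the points to check are these:
\begin{itemize}
\item In the block decomposition, the rows of $H_k^{\ell}$ indexed by $R^{\prime}$ vanish outside $Q^{\prime}$. This holds for the upper-left embedding because of the $0$ block. For the lower-right embedding it fails because of the $I$ block.
\item Consequently, the subcode relation must be read up to the identification of coordinates used in the paper, namely restriction to $Q^{\prime}$, which is what the preceding display asserts.
\end{itemize}
I will therefore cite Remark \ref{67trde111} together with that display rather than reprove it.

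\emph{Step 3: the Tanner graph inclusion.} By Definition \ref{56543lkj}, the Tanner graph $T_{H}$ has one node per row and per column of $H$, and an edge for each nonzero entry. Restricting to the row set $R^{\prime}$ and column set $Q^{\prime}$ from Step 1 gives the induced subgraph of $T_{{\EuScript C}_k^{\ell}}$ on those nodes. Its edges are exactly the nonzero entries of $H_{k^{\prime}}^{\ell^{\prime}}$. Hence $T_{{\EuScript C}_{k^{\prime}}^{\ell^{\prime}}}\subseteq T_{{\EuScript C}_k^{\ell}}$, and the inclusion is in fact an induced subgraph.

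\emph{Step 4: the girth inequality.} Every cycle in a subgraph is a cycle in the ambient graph, so $g({\EuScript C}_k^{\ell})\leq g({\EuScript C}_{k^{\prime}}^{\ell^{\prime}})$, with the convention that the girth is $\infty$ when there is no cycle. This covers degenerate cases such as $k^{\prime}\le1$, where the small graph may be acyclic. For $k^{\prime},\ell^{\prime}\geq 2$, Lemma \ref{maths1333432}, Proposition \ref{6cicl35555} and Theorem \ref{gfbd3233} suggest that both girths equal $6$ in most cases, so equality is typical.

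\emph{Main obstacle.} The main obstacle is Step 1: the submatrix relation, as iterated, must really be a row/column restriction. Care is needed with the identity block, which attaches extra entries to the rows of the lower-right copy. This does not affect the graph claims in Steps 3 and 4, since those only need an induced subgraph, but it is exactly why Step 2 must lean on the paper's stated identification.
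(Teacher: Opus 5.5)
\textbf{Steps 3 and 4 are fine.} They follow the paper's own argument: submatrix, then subgraph, then the girth inequality. Your remark that $T_{H_{k'}^{\ell'}}$ is the induced subgraph of $T_{H_k^{\ell}}$ on the nodes indexed by $R'\cup Q'$ holds for any submatrix, and the $\infty$ convention handles the degenerate cases.

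\textbf{The gap is Step 2.} You correctly notice that, for the lower-right copy $H_{k-1}^{\ell}$, the rows indexed by $R'$ carry identity-block entries outside $Q'$. Row-regularity makes this unavoidable for any embedding once $k'<k$, since each selected row has $k-k'$ ones outside $Q'$. You then fall back on Remark \ref{67trde111} and the display preceding Definition \ref{6tre453}. These only assert $\ker H_k^{\ell}\subseteq\ker H_{k'}^{\ell'}$; the paper's proof makes the same unjustified ``clearly''. Being a submatrix does not give containment of kernels. You would need every row of $H_{k'}^{\ell'}$, padded by zeros outside $Q'$, to lie in the row space of $H_k^{\ell}$, and nothing guarantees this.

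\textbf{The subcode claim is in fact false in general when $k'<k$.} Take $(k,\ell)=(3,2)$ and $(k',\ell')=(2,2)$, so ${\EuScript C}_2^2\in\Sigma_{{\EuScript C}_3^2}$.
\begin{itemize}
\item Over $GF(5)$: $\det H_2^2=-2\neq 0$, so ${\EuScript C}_2^2=\{0\}$. Yet $(1,4,0,0,4,1)\in{\EuScript C}_3^2$ by Example \ref{decodingH32GF5}. Restricting this codeword to the last three coordinates gives $(0,4,1)$, and $H_2^2(0,4,1)^T=(4,1,0)^T\neq 0$.
\item Over $GF(2)$: $\dim{\EuScript C}_3^2=3>1=\dim{\EuScript C}_2^2$.
\end{itemize}
So no identification of coordinates, whether restriction to $Q'$ or an injective embedding, makes ${\EuScript C}_k^{\ell}$ a subcode of ${\EuScript C}_{k'}^{\ell'}$ here. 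That part of the proposition cannot be proved as stated.

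\textbf{What can be salvaged is the case $k'=k$.} Then only upper-left blocks are used, and the top rows of $H_k^{\ell}$ have the form $[H_k^{\ell'}\mid 0]$. Restriction to the first $C_{\ell'}^{k+\ell'-1}$ coordinates therefore maps ${\EuScript C}_k^{\ell}$ into ${\EuScript C}_k^{\ell'}$. This is a puncturing statement, not a literal inclusion.
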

\begin{proof}
If ${\EuScript C}_{k^{\prime}}^{\ell^{\prime}}\in \Sigma_{{\EuScript C}_k^{\ell}}$ then ${\EuScript C}_k^{\ell}\leq {\EuScript C}_{k^{\prime}}^{\ell^{\prime}}$ and $H_{k^{\prime}}^{\ell^{\prime}}\leq H_k^{\ell}$ which implies $H_{k^{\prime}}^{\ell^{\prime}}$ is submatrix $H_k^{\ell}$. 
Clearly  ${\EuScript C}_k^{\ell}=\ker H_k^{\ell}\subseteq \ker H_{k^{\prime}}^{\ell^{\prime}} ={\EuScript C}_{k^{\prime}}^{\ell^{\prime}}$ then ${\EuScript C}_k^{\ell}$ is a subcode of ${\EuScript C}_{k^{\prime}}^{\ell^{\prime}}$. For Tanner graph we have  $T_{{\EuScript C}_{k^{\prime}}^{\ell^{\prime}}} \subset T_{{\EuScript C}_k^{\ell}}$ and this implies 
$g({\EuScript C}_k^{\ell})\leq g({\EuScript C}_{k^{\prime}}^{\ell^{\prime}})$
\end{proof}
\begin{cor}
If ${\EuScript C}_k^{\ell}\in{\mathscr H}$ then the girth $g({\EuScript C}_k^{\ell})=6$ 
\end{cor}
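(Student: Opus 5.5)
The statement to prove is that every ${\EuScript C}_k^{\ell}\in{\mathscr H}$ with $k,\ell\geq 2$ has girth $6$. I will prove it by a squeeze: an upper bound $g\leq 6$ from an explicit $6$-cycle, and a lower bound $g\geq 6$ from excluding $4$-cycles. Degenerate indices, where the code is $0$ or the matrix is a single row or column, are excluded, as in Definition \ref{Akl4327}.

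\textbf{Upper bound.} By Lemma \ref{Subm564544}, $H_k^2\unlhd H_k^{\ell}$ for all $k,\ell\geq 2$. Proposition \ref{6cicl35555} exhibits a $6$-cycle in $T_{{\EuScript C}_k^2}$, supported on the top-left $3\times 3$ pattern that already appears in $H_2^2$. A submatrix's Tanner graph is a subgraph of the ambient Tanner graph. Equivalently, since ${\EuScript C}_k^2\in\Sigma_{{\EuScript C}_k^{\ell}}$, Proposition \ref{hujyy5477} gives $T_{{\EuScript C}_k^2}\subseteq T_{{\EuScript C}_k^{\ell}}$. Hence $T_{{\EuScript C}_k^{\ell}}$ contains a $6$-cycle and $g({\EuScript C}_k^{\ell})\leq 6$.

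\textbf{Lower bound.} By Lemma \ref{maths1333432}, every off-diagonal entry of $(H_k^{\ell})^TH_k^{\ell}$ is at most $1$. So no two columns of $H_k^{\ell}$ share two rows, and by the Tanner--Gallager criterion there is no $4$-cycle. Since Tanner graphs are bipartite, $g\geq 6$.

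This argument is exactly Theorem \ref{gfbd3233}, but without the hypothesis $\ell<k$. The hypothesis is unnecessary: neither Lemma \ref{Subm564544} nor Lemma \ref{maths1333432} uses it. Alternatively, for $\ell>k$ one can apply the flip-transpose. By Lemma \ref{FM5643}, $(H_k^{\ell})^F=J H_\ell^{kT} J$, which is $H_\ell^k$ transposed and permuted. Transposing swaps the variable and check sides of the Tanner graph, and permuting relabels nodes, so the Tanner graph of $H_k^\ell$ is isomorphic to that of $H_\ell^k$, and the case $\ell>k$ reduces to the case $\ell<k$. The case $k=\ell$ is covered directly by the squeeze, for instance with $H_2^2$ as the witnessing submatrix.

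\textbf{Main obstacle.} The delicate step is the lower bound, because Lemma \ref{maths1333432} is proved by a block induction. The off-diagonal block of $(H_k^{\ell})^TH_k^{\ell}$ is $H_{k-1}^{\ell}$, a $0/1$ matrix, so it is harmless. The top-left block is $(H_k^{\ell-1})^TH_k^{\ell-1}+I$, and adding $I$ changes only the diagonal. Both facts need to be checked explicitly so that the induction really yields off-diagonal entries $\leq 1$ for every $k,\ell\geq 2$, including the base cases $H_k^2$ and $H_2^\ell$. For $H_k^2$ I would verify this directly: two columns share at most one row, since columns of $H_k^2$ correspond to $2$-subsets of $\{1,\dots,k+1\}$.
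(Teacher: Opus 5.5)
Your proof is correct and is essentially the paper's. The paper proves the corollary by citing Theorem~\ref{gfbd3233}, whose proof is the same squeeze you give: Lemma~\ref{maths1333432} excludes $4$-cycles, and Lemma~\ref{Subm564544} with Proposition~\ref{6cicl35555} supplies a $6$-cycle.

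Your remark that the hypothesis $\ell<k$ is never used is the point the paper's one-line citation skips, since Theorem~\ref{gfbd3233} is stated only for $2\leq\ell<k$; your version is what actually covers $\ell\geq k$. Excluding the degenerate indices is also necessary, since for instance $H_k^1$ has a cycle-free Tanner graph.

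One small slip: for $k\geq 3$ the $6$-cycle of $H_k^2$ does not lie in its top-left $3\times 3$ block (it uses columns $1,2,k+1$). Your other witness, $H_2^2\unlhd H_k^2\unlhd H_k^{\ell}$, is correct.
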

\begin{proof}
If ${\EuScript C}_k^{\ell}\in{\mathscr H}$ then ${\EuScript C}_k^{\ell}=\ker H_k^{\ell}$. By  theorem \ref{gfbd3233} the girth $g({\EuScript C}_k^{\ell})=6$
\end{proof}
\begin{lem}
Let $k, \ell \geq 2$ and $0\leq r \leq \min\{k-1, \ell-1 \}$. Then \\
$${\EuScript C}_k^{\ell}=\bigwedge_{i=0}^r\binom{r}{i} {\EuScript C}_{k-i}^{\ell-(r-i)} $$
\end{lem}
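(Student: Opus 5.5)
The plan is to transport Proposition \ref{dfrim43443} through the lattice isomorphism $\rho$ of Proposition \ref{1som0rph5654}, and then to check directly that the $r$-expansion of $H_k^\ell$ collapses back to $H_k^\ell$. The symbol $\binom{r}{i}{\EuScript C}_{k-i}^{\ell-(r-i)}$ will be read as in \eqref{Not54rde}: a repetition of the same element under the lattice operation, here $\wedge$. By idempotence of $\wedge$ in ${\mathscr H}$, each such repetition equals ${\EuScript C}_{k-i}^{\ell-(r-i)}$, exactly as \eqref{TTy4rde67} gives $nH_k^\ell=H_k^\ell$ in ${\mathscr FM}$.

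\textbf{Step 1 (evaluate the join in ${\mathscr FM}$).} I would first compute $\operatorname{Exp}_r(H_k^\ell)$ directly from Definition \ref{t543ede}. Since the join is computed coordinatewise by maxima, and repetitions are absorbed by idempotence,
$$\bigvee_{i=0}^{r}\binom{r}{i}H_{k-i}^{\ell-(r-i)}=H_{\max_i (k-i)}^{\max_i(\ell-r+i)}.$$
The maximum of $k-i$ over $0\le i\le r$ is $k$, attained at $i=0$. The maximum of $\ell-r+i$ is $\ell$, attained at $i=r$. Hence, by Proposition \ref{dfrim43443}, $\operatorname{Exp}_r(H_k^\ell)=H_k^\ell$. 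The hypothesis $r\le\min\{k-1,\ell-1\}$ guarantees that every index $k-i$ and $\ell-(r-i)$ is at least $1$, so all terms lie in ${\mathscr FM}$.

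\textbf{Step 2 (apply $\rho$).} By Lemma \ref{Ttre4321}(3), $\rho$ turns joins in ${\mathscr FM}$ into meets in ${\mathscr H}$. Iterating this over the finite join, which is legitimate because both operations are associative and commutative, gives
$$\rho\Bigl(\bigvee_{i=0}^{r}\binom{r}{i}H_{k-i}^{\ell-(r-i)}\Bigr)=\bigwedge_{i=0}^{r}\binom{r}{i}{\EuScript C}_{k-i}^{\ell-(r-i)}.$$
Since $\rho(H_k^\ell)={\EuScript C}_k^\ell$, combining this with Step 1 gives the claim.

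As a cross-check, and as an alternative route that avoids $\rho$ entirely, I would also note that the meet in ${\mathscr H}$ is defined by ${\EuScript C}_{\max}^{\max}$. The same maxima computation as in Step 1 then yields ${\EuScript C}_k^\ell$ directly.

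The only step needing care is the bookkeeping with multiplicities. I must make sure that $\binom{r}{i}\,\cdot$ is interpreted via idempotence, so that the binomial counts do not affect the value of the meet. I must also check that the extreme indices $i=0$ and $i=r$ really produce ${\EuScript C}_k^{\ell-r}$ and ${\EuScript C}_{k-r}^{\ell}$, which realize the two maxima. I expect no genuine obstacle beyond this.
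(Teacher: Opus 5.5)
Your proposal is correct and follows the paper's route: the paper also applies $\rho$, via Lemma \ref{Ttre4321}, to the join decomposition from Proposition \ref{dfrim43443} to turn the join into the stated meet. Your explicit max computation showing $\bigvee_{i=0}^{r}\binom{r}{i}H_{k-i}^{\ell-(r-i)}=H_k^\ell$ supplies a step the paper leaves implicit, since that proposition only identifies this join with $\operatorname{Exp}_r(H_k^\ell)$, not with $H_k^\ell$ itself.
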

\begin{proof}
By lemma \ref{dfrim43443} we have $H_k^{\ell}=\bigvee_{i=0}^r\binom{r}{i} H_{k-i}^{\ell-(r-i)}$.

Applying the function $\rho$ (see, \ref{treeew321} ) we have 
\begin{align*}
\rho (H_k^{\ell})&=\rho(\bigvee_{i=0}^r\binom{r}{i} H_{k-i}^{\ell-(r-i)})\\
                        &=\bigwedge_{i=0}^r\binom{r}{i} \rho(H_{k-i}^{\ell-(r-i)})\\ 
                        &=\bigwedge_{i=0}^r\binom{r}{i} {\EuScript C}_{k-i}^{\ell-(r-i)}
\end{align*}
\end{proof}
\begin{figure}[H]
\centering

\begin{tikzpicture}[>=stealth,scale=0.8]

\draw[->,thick] (0,0) -- (8,0) node[right] {$k$};
\draw[->,thick] (0,0) -- (0,5) node[above] {$\ell$};

\coordinate (A) at (5,4);
\coordinate (B) at (6,4);
\coordinate (C) at (6,3);
\coordinate (D) at (5,3);
\coordinate (D1) at (3,4);
\coordinate (E) at (6,2);
\coordinate (F) at (4,4);
\coordinate (X) at (0,0);
\coordinate (Y) at (1,1);
\coordinate (Z) at (2,2);
\coordinate (M) at (3,3);
\coordinate (M1) at (6,1);
\coordinate (M2) at (6,0);
\coordinate (N1) at (1,4);
\coordinate (N2) at (0,4);

\foreach \p in {A,B,C,D,D1,E,F,X,Y,Z,M,M1,M2,N1,N2}
\fill (\p) circle (1.5pt);

\draw[->,thick] (B) -- (A);
\draw[->,thick] (B) -- (C);
\draw[->,thick] (A) -- (D);
\draw[->,thick] (C) -- (D);
\draw[->,thick] (C) -- (E);
\draw[->,thick] (A) -- (F);
\draw[->,thick] (Y) -- (X);
\draw[->,thick] (Z) -- (Y);
\draw[->,thick] (M) -- (Z);
\draw[->,thick] (F) -- (D1);
\draw[->,thick] (M1) -- (M2);
\draw[->,thick] (N1) -- (N2);

\node[above] at (A) {${\EuScript C}_{k-1}^{\ell}$};
\node[right] at (B) {${\EuScript C}_k^{\ell}
={ \EuScript C}_k^{\ell-1} \vee {\EuScript C}_{k-1}^{\ell}$};
\node[right] at (C) {${\EuScript C}_k^{\ell-1}$};
\node[left] at (D) {${\EuScript C}_{k-1}^{\ell-1}$};
\node[above] at (Y) {${\EuScript C}_1^1$};
\node[above] at (Z) {${\EuScript C}_2^2$};
\node[right] at (M1) {${\EuScript C}_k^1$};
\node[below] at (M2) {${\EuScript C}_k^0$};
\node[above] at (N1) {${\EuScript C}_1^{\ell}$};
\node[left] at (N2) {${\EuScript C}_0^{\ell}$};
\node[above] at (F) {${\EuScript C}_{k-2}^{\ell}$};
\node[right] at (E) {${\EuScript C}_k^{\ell-2}$};

\end{tikzpicture}

\caption{The diagram sketches the ideal
$\Sigma_{{\EuScript C}_k^\ell}$. The arrows indicate inclusions and show the
three-decomposition
${\EuScript C}_k^\ell
= {\EuScript C}_k^{\ell-2}
\wedge {\EuScript C}_{k-1}^{\ell-1}
\wedge {\EuScript C}_{k-2}^\ell$.}
\label{fig}
\end{figure}
\begin{lem}\label{DFRE67yt}
Suppose that $\operatorname{char}{\mathbb F}\leq \ell$. Then
\begin{description}
\item[ 1)]  $\operatorname{rank}H_k^{\ell}<C^{k+\ell-1}_{\ell-1}$ if $k\geq\ell$
\item[ 2)]  $\operatorname{rank}H_k^{\ell}<C^{k+\ell-1}_{k-1}$ if $k\leq\ell$
\end{description}
\end{lem}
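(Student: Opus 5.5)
The plan is to prove item 1 directly, by exhibiting a nonzero vector in the left kernel of $H_k^{\ell}$. Item 2 will then be obtained from item 1 through the flip--transpose symmetry. The basic tool for item 1 is the block recursion \eqref{m4tr1x67532}. A row vector $y=(y_1,y_2)$, with $y_1\in{\mathbb F}^{C_{\ell-2}^{k+\ell-2}}$ and $y_2\in{\mathbb F}^{C_{\ell-1}^{k+\ell-2}}$, satisfies $yH_k^{\ell}=0$ if and only if
\begin{equation*}
y_1H_k^{\ell-1}+y_2=0 \quad\text{and}\quad y_2H_{k-1}^{\ell}=0 .
\end{equation*}
Equivalently, $y_2=-y_1H_k^{\ell-1}$ and $y_1H_k^{\ell-1}H_{k-1}^{\ell}=0$. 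This is the same reduction that underlies Lemma \ref{lemR}. So $\operatorname{rank}H_k^{\ell}<C_{\ell-1}^{k+\ell-1}$ as soon as the product $H_k^{\ell-1}H_{k-1}^{\ell}$ has a nontrivial left kernel, or $H_{k-1}^{\ell}$ has a nontrivial left kernel (take $y_1=0$).

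For item 1 ($k\ge\ell$) I will argue by induction on $\ell$, starting from the all-ones row vector. Since every column of $H_k^{\ell}$ contains exactly $\ell$ ones (Lemma \ref{regul44556}), we have $\mathbf 1\,H_k^{\ell}=\ell\,\mathbf 1$. This vanishes whenever $\operatorname{char}{\mathbb F}$ divides $\ell$, which gives the base situation. When $p=\operatorname{char}{\mathbb F}<\ell$ does not divide $\ell$, I descend along the recursion. Using the displayed reduction, I look for the left-kernel vector inside the block $H_{k-1}^{\ell}$, and then inside the diagonal tessellations of the $r$-fragmentation $H_k^{\ell}(r)$ of Proposition \ref{TrePP3d}. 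Among these there is a block $H_{k-i}^{\ell-(r-i)}$ whose column weight is a multiple of $p$. I take the all-ones vector supported on that block and propagate it upward through the identity blocks, solving $y_2=-y_1H$ at each level. The hypothesis $k\ge\ell$ guarantees $r\le\min\{k-1,\ell-1\}$ is large enough to reach a column weight equal to $p$. It also keeps all intermediate blocks genuine Fractus matrices with positive indices, so the recursion never degenerates. The identity $H_k^{\ell-1}H_{k-1}^{\ell}=0$ in characteristic $2$ (Corollary \ref{wsewq435}) serves as the model for the compatibility condition $y_1H_k^{\ell-1}H_{k-1}^{\ell}=0$ that must be checked at each step.

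For item 2 ($k\le\ell$) I will use Lemma \ref{FM5643} and Lemma \ref{FTR43eDD}. These give $\operatorname{rank}H_k^{\ell}=\operatorname{rank}(H_k^{\ell})^F=\operatorname{rank}H_{\ell}^{k}$. The matrix $H_{\ell}^k$ has $C_{k-1}^{k+\ell-1}$ rows, and its first index is at least its second. So item 2 becomes item 1 for $H_\ell^k$, provided the characteristic hypothesis transfers under the swap of $k$ and $\ell$.

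The main obstacle is exactly this transfer, together with the propagation step in item 1. The column weight of $H_\ell^k$ is $k$, not $\ell$, so after flipping, the descent argument needs a tessellation whose column weight is a multiple of $p$ with $p\le\ell$ and $k\le\ell$. I would first try to find it among the blocks $H_{\ell-i}^{k-(r-i)}$ of the fragmentation, using the full range of $r$ allowed. If that fails, I would check the smallest cases explicitly, such as $H_2^3$ in characteristic $3$. For $H_2^3$ the right kernel is trivial over $GF(3)$, so this statement rests entirely on the left kernel and on the row count. That case would decide whether the reduction via $F$ is sufficient as stated.
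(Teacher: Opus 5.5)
Your suspicion about item 2 is correct, and your own test case refutes it. The slip is in the sentence about the left kernel. Since $C^{k+\ell-1}_{k-1}=C^{k+\ell-1}_{\ell}$ is the number of \emph{columns} of $H_k^{\ell}$, item 2 asserts that $H_k^{\ell}$ never has full column rank, i.e.\ that its right kernel is nontrivial; the left kernel and the row count play no role. For $H_2^3$ over $GF(3)$, the equations $x_1+x_2=x_1+x_3=x_2+x_3=0$ give $2x_1=0$, so $x_1=x_2=x_3=0$, and then $x_4=0$. Hence $\operatorname{rank}H_2^3=4=C^{4}_{1}$, even though $\operatorname{char}\mathbb F=3\le\ell=3$ and $k=2\le\ell$. (More generally, $(H_2^{\ell})^F=H_{\ell}^2$ is the vertex--edge incidence matrix of $K_{\ell+1}$, which has full rank $\ell+1$ in every odd characteristic. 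So item 2 fails for $k=2$, every $\ell\ge 3$ and every odd characteristic $p\le\ell$.) Your flip--transpose reduction is valid, but what it delivers is item 2 under the stronger hypothesis $\operatorname{char}\mathbb F\le k$. The paper's proof of item 2 applies item 1 to $H_\ell^k$ without this check. The gap you isolated is therefore in the statement itself, and no argument can close it.

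Item 1 is true, but your descent goes into the wrong block. From $y_2=-y_1H_k^{\ell-1}$, the choice $y_1=0$ forces $y_2=0$, so a left-kernel vector of $H_{k-1}^{\ell}$ does not lift to $H_k^{\ell}$. Moreover, $H_{k-1}^{\ell}$ still has column weight $\ell$, so descending into it never brings the column weight down to $p$. Likewise, an all-ones vector placed on a tessellation $H_{k-i}^{\ell-(r-i)}$ with $i>0$ meets identity blocks on its left, and the ``propagation'' you describe need not be solvable. What does lift is a left-kernel vector $y_1$ of the upper-left block $H_k^{\ell-1}$, which extends to $(y_1,0)$. Iterating, $H_k^{p}$ occupies the upper-left corner of $H_k^{\ell}$ with only zeros to its right. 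So the all-ones vector on its rows, padded by zeros, is a nonzero left-kernel vector of $H_k^{\ell}$, because each column of $H_k^{p}$ has exactly $p$ ones. This proves item 1 without even using $k\ge\ell$. It is the paper's induction on $\ell$ unrolled: the row-sum argument when $\operatorname{char}\mathbb F=\ell$, and the bound $\operatorname{rank}H_k^{\ell}\le C^{k+\ell-2}_{\ell-1}+\operatorname{rank}H_k^{\ell-1}$ otherwise.
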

\begin{proof}
The proof proceeds by mathematical induction on $\ell$.\\
If $\ell=2$ then  $\operatorname{char}({\mathbb F})=2$ and
\begin{align*}
 H_k^2 &\sim \left[ \begin{tabular}{c|c}
$I_k$ & $ H_{k-1}^2 $ \cr\hline $0$  & $-2, \ldots, -2$
\end{tabular} \right] \\
&\sim \left[ \begin{tabular}{c|c}
$I_k$ & $ H_{k-1}^{\ell} $ 
\end{tabular} \right] 
\end{align*}
Therefore $\operatorname{rank}H_k^\ell<k+1.$\\
Our induction hypothesis states that,  $2\leq \ell' < \ell$.
If $\operatorname{char}{\mathbb F}\leq \ell^{\prime}$ then $\operatorname{rank}H_k^{\ell^{\prime}}<C^{k+\ell^{\prime}-1}_{\ell^{\prime}-1}.$\\
Consider the $(k,\ell)$-Fractus matrix $H_k^{\ell}$ of size $C_{\ell-1}^{k+\ell-1}\times C_{\ell}^{k+\ell-1}$, and suposse that 
$$H_k^{\ell}=\left[\begin{smallmatrix}
R_1\\
R_2 \\
\vdots\\
R_{C_{\ell-1}^{k+\ell-1}}
\end{smallmatrix}\right]$$
where $R_1,R_2,\ldots,R_{C_{\ell-1}^{k+\ell-1}}$ are the rows of $H_k^{\ell}$. Is easy see that $$(\ell, \ell,\ldots, \ell)=R_1+R_2+\cdots+R_{C_{\ell-1}^{k+\ell-1}}$$ and
$$H_k^{\ell}\sim\left[\begin{smallmatrix}
\ell, \ell,\ldots, \ell\\
R_2 \\
\vdots\\
R_{C_{\ell-1}^{k+\ell-1}}
\end{smallmatrix}\right].$$\\
Therefore, if \text{char}${\mathbb F}=\ell$, then \text{rank}$H_k^{\ell}<C_{\ell-1}^{k+\ell-1}$.\\  
If \text{char}${\mathbb F}<\ell$ then by \ref{m4tr1x67532} and induction hypothesis we have
\begin{align*}
\operatorname{rank}(H_k^\ell)
&= C_{\ell-1}^{k+\ell-2}+
\operatorname{rank}(H_k^{\ell-1}) \\
&\leq C_{\ell-1}^{k+\ell-2}+
C_{\ell-2}^{k+\ell-2}\\
&=C_{\ell-1}^{k+\ell-1}
\end{align*}
Therefore, if \text{char}${\mathbb F}\leq\ell$ and $\ell<k$, then \text{rank}$(H_k^{\ell})<C_{\ell-1}^{k+\ell-1}$.\\  
For item (2), by lemma \ref{FTR43eDD} and item (1) we have 
\begin{align*}
\operatorname{rank}(H_k^\ell)&=\operatorname{rank}(H_{\ell}^k)\\
                                                &<C_{k-1}^{k+\ell-1}
\end{align*}  
Therefore, if \text{char}${\mathbb F}\leq\ell$ and $\ell>k$, then \text{rank}$(H_k^{\ell})<C_{k-1}^{k+\ell-1}$.\\ 
\end{proof}
\begin{lem}\label{hgf56y7}
Suppose that \text{char}${\mathbb F}>\ell$. Then 
\begin{description}
\item[ 1)]  \text{rank}$H_k^{\ell}=C_{\ell-1}^{k+\ell-1}$ if $k\geq\ell$.
\item[ 2)]  \text{rank}$H_k^{\ell}=C_{k-1}^{k+\ell-1}$ if $k\leq\ell$.
\end{description}
\end{lem}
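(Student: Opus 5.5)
The plan is to identify $H_k^{\ell}$, up to row and column permutations, with a classical subset-inclusion matrix. I would then read off the rank from Wilson's theorem on the $p$-rank of inclusion matrices, and obtain item 2) from item 1) by the flip-transpose symmetry.

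\textbf{Step 1 (combinatorial model).} Put $n=k+\ell-1$ and let $W_{\ell-1,\ell}(n)$ be the $(0,1)$-matrix with rows indexed by the $(\ell-1)$-subsets of $\{1,\ldots,n\}$ and columns indexed by the $\ell$-subsets, with entry $1$ exactly when the row subset is contained in the column subset. I would prove by induction along the recursion \eqref{m4tr1x67532} that $H_k^{\ell}$ equals $W_{\ell-1,\ell}(n)$ after suitable orderings of rows and columns. The base cases $H_k^1$ and $H_1^{\ell}$ are immediate, and $H_2^2=W_{1,2}(3)$ by inspection. For the inductive step I split subsets according to whether they contain the element $n$.
\begin{itemize}
\item Rows not containing $n$, against columns not containing $n$, give $W_{\ell-2,\ell-1}(n-1)$, which is $H_k^{\ell-1}$ by the inductive hypothesis.
\item Rows not containing $n$, against columns containing $n$, give the zero block.
\item Rows containing $n$, against columns containing $n$, give (after deleting $n$) an identity block of size $C_{\ell-1}^{k+\ell-2}$.
\item Rows containing $n$, against columns not containing $n$, give $W_{\ell-1,\ell}(n-1)=H_{k-1}^{\ell}$.
\end{itemize}
To match the paper's block layout the column order has to be chosen suitably. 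The block sizes agree by Pascal's identity \eqref{HHYTR}. As a sanity check, this model is consistent with Corollary~\ref{wsewq435}, since $W_{\ell-2,\ell-1}W_{\ell-1,\ell}=2W_{\ell-2,\ell}$.

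\textbf{Step 2 (rank, item 1).} Let $p=\operatorname{char}\mathbb F>\ell$. Wilson's theorem states that for $t\le\min\{s,n-s\}$ the $p$-rank of $W_{t,s}(n)$ is
$$\sum_{i:\ p\nmid \binom{s-i}{t-i}}\left(\binom{n}{i}-\binom{n}{i-1}\right).$$
Here $t=\ell-1$ and $s=\ell$, and the condition $t\le n-s$ is exactly $k\ge\ell$. The coefficient $\binom{\ell-i}{\ell-1-i}=\ell-i$ lies in $\{1,\ldots,\ell\}$, so it is nonzero modulo $p$ because $p>\ell$. Hence every term $i=0,\ldots,\ell-1$ contributes, and the sum telescopes to $\binom{n}{\ell-1}=C_{\ell-1}^{k+\ell-1}$. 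This is full row rank. If $\operatorname{char}\mathbb F=0$, the same conclusion follows from the rational version of the theorem (Gottlieb–Kantor), since $n\ge 2\ell-1$.

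\textbf{Step 3 (item 2).} For $k\le\ell$ I would apply Lemma~\ref{FM5643} and Lemma~\ref{FTR43eDD}, which give $\operatorname{rank}H_k^{\ell}=\operatorname{rank}H_{\ell}^{k}$. Then item 1) applied to $H_{\ell}^k$ needs only $\operatorname{char}\mathbb F>k$, which holds since $k\le\ell$. This yields $C_{k-1}^{k+\ell-1}$.

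\textbf{Main obstacle.} The main difficulty is Step 1: the block identification must respect exactly the paper's ordering conventions, and the sizes have to be tracked through Pascal's identity. Step 2 otherwise rests on an external theorem. If a self-contained argument is preferred, I would instead induct via Lemma~\ref{lemR}. That gives $\operatorname{rank}H_k^{\ell}=C_{\ell-1}^{k+\ell-2}+\operatorname{rank}(H_k^{\ell-1}H_{k-1}^{\ell})$, and $H_k^{\ell-1}H_{k-1}^{\ell}=2W_{\ell-2,\ell}(k+\ell-2)$, so one must show that this matrix has full row rank $C_{\ell-2}^{k+\ell-2}$. That reduces to the same kind of statement for $W_{t,t+2}$, so the induction would have to be run for general $W_{t,s}$; this is essentially reproving Wilson's theorem, and it is where I expect the real work to lie.
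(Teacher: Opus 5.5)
Your strategy is correct and takes a genuinely different route from the paper's. The Step 1 write-up needs repair, as explained in the second paragraph.

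\textbf{Comparison with the paper.} The paper never leaves the block recursion. It inducts on $\ell$, asserts $\operatorname{rank}H_k^{\ell}=C_{\ell-1}^{k+\ell-2}+\operatorname{rank}H_k^{\ell-1}$ directly from \eqref{m4tr1x67532}, and closes with Pascal's identity; the base case $\ell=2$ uses $-2\neq 0$. It gets item 2) by flip-transpose, exactly as in your Step 3.
\begin{itemize}
\item That asserted equality is the whole difficulty, and the paper does not justify it. By Lemma~\ref{lemR}, $\operatorname{rank}H_k^{\ell}=C_{\ell-1}^{k+\ell-2}+\operatorname{rank}(H_k^{\ell-1}H_{k-1}^{\ell})$, so one needs $H_k^{\ell-1}H_{k-1}^{\ell}$ to have full row rank $C_{\ell-2}^{k+\ell-2}$.
\item The paper's inductive step uses only $\operatorname{char}\mathbb F>\ell-1$, through the induction hypothesis. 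The same reasoning would therefore give full row rank when $\operatorname{char}\mathbb F=\ell$ (e.g.\ $\ell=3$ in characteristic $3$). There the rows of $H_k^{\ell}$ sum to $\ell(1,\ldots,1)=0$, so that conclusion is false.
\item Your Wilson-based Step 2 is rigorous and makes the role of $p>\ell$ explicit: $p\nmid \ell-i$ for $0\le i\le\ell-1$. The cost is importing an external theorem, whereas the paper's route is elementary but incomplete as written.
\item Your fallback paragraph locates the crux correctly, and the same tool closes it. We have $H_k^{\ell-1}H_{k-1}^{\ell}=2W_{\ell-2,\ell}(k+\ell-2)$. Its Wilson coefficients $\binom{\ell-i}{2}$, $0\le i\le \ell-2$, are prime to $p>\ell$, and $\ell-2\le k-2$ holds since $k\ge \ell$. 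So this product has full row rank.
\end{itemize}

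\textbf{Repairing Step 1.} The claim of Step 1 is true, but your four bullets attach the wrong blocks. For instance, the rows avoiding $n$ are the $C_{\ell-1}^{k+\ell-2}$ subsets of size $\ell-1$ of $\{1,\ldots,n-1\}$. Against columns avoiding $n$ they give $W_{\ell-1,\ell}(n-1)=H_{k-1}^{\ell}$, not $W_{\ell-2,\ell-1}(n-1)$, which has only $C_{\ell-2}^{k+\ell-2}$ rows. List the rows and columns containing $n$ first. The correct blocks are then:
\begin{itemize}
\item rows and columns containing $n$: after deleting $n$, $W_{\ell-2,\ell-1}(n-1)=H_k^{\ell-1}$;
\item rows containing $n$ against columns avoiding $n$: $0$;
\item rows avoiding $n$ against columns containing $n$: the identity, via $R\mapsto R\cup\{n\}$;
\item rows and columns avoiding $n$: $H_{k-1}^{\ell}$.
\end{itemize}

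There is also an ordering issue. An induction hypothesis of the form ``equal after some orderings'' does not produce a literal identity block. The row order of $H_{k-1}^{\ell}$ and the column order of $H_k^{\ell-1}$, both indexed by the $(\ell-1)$-subsets of $\{1,\ldots,n-1\}$, must coincide. Carry a fixed recursive order through the induction: first the subsets containing the largest element, ordered recursively after deleting it, then the others, ordered recursively. With this order the identification is exact. For example, it reproduces $H_3^2$ with rows $\{4\},\{3\},\{2\},\{1\}$ and columns $\{3,4\},\{2,4\},\{1,4\},\{2,3\},\{1,3\},\{1,2\}$.

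Steps 2 and 3 are otherwise fine.
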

\begin{proof}
Let $p=\operatorname{char}\mathbb F>\ell$. We proceed by induction on $\ell$.
For $\ell=2$, we have $p>2$. In the block decomposition of $H_k^2$, the lower block contains the scalar $-2$, which is nonzero in $\mathbb F$. Therefore,
$$\operatorname{rank}H_k^2=
\left[ \begin{tabular}{c|c}
$I_k$ & $ H_{k-1}^2 $ \cr\hline $0$  & $-2, \ldots, -2$
\end{tabular} \right]
=C^{k+2-1}_{2-1}=k+1.$$
Now suppose that the result holds for $\ell-1$, and that $\operatorname{char}{\mathbb F}>\ell$. In particular,
$$\operatorname{char}{\mathbb F}>\ell-1,$$
and hence, by the induction hypothesis,
$$\operatorname{rank}H_k^{\ell-1}=\binom{k+\ell-2}{\ell-2}.$$
Using the same block decomposition as in \ref{m4tr1x67532}, we obtain
$$\operatorname{rank}(H_k^\ell)=\left[ \begin{tabular}{c|c}
 $H_k^{\ell-1}$ & $ 0 $ \cr\hline $I_{C_{\ell-1}^{k+\ell-2}}$  & $H_{k-1}^{\ell}$
\end{tabular} \right]=C_{\ell-1}^{k+\ell-2}+\operatorname{rank}(H_k^{\ell-1}).$$
Substituting the induction hypothesis, we get
$$\begin{aligned}
\operatorname{rank}(H_k^\ell)
&=C^{k+\ell-2}_{\ell-1}+
C^{k+\ell-2}_{\ell-2}\\
&=C^{k+\ell-1}_{\ell-1},
\end{aligned}$$
where the last equality follows from Pascal's identity. Consequently,
$$\operatorname{char}{\mathbb F}>\ell
\quad\Longrightarrow\quad
\operatorname{rank}H_k^\ell=C^{k+\ell-1}_{\ell-1}.$$
Since $H_k^\ell$ has exactly $C^{k+\ell-1}_{\ell-1}$ rows, this shows that it has full row rank.\\
For item (2), by lemma \ref{FTR43eDD} and item (1) we have 
$$\operatorname{rank}(H_k^\ell)=\operatorname{rank}(H_{\ell}^k)=C_{k-1}^{k+\ell-1}$$ 
Therefore, if \text{char}${\mathbb F}>\ell$ and $\ell>k$, then \text{rank}$(H_k^{\ell})=C_{k-1}^{k+\ell-1}$.\\ 
\end{proof}
\begin{cor}
Let $k,\ell\geq 2$, and let $r_{{\EuScript C}_k^{\ell}}$ denote the rate of the $(k,\ell)$-Fractus code ${\EuScript C}k^{\ell}$. Then:
\begin{enumerate}
\item Let $\operatorname{char}\mathbb F>\ell$,\\
if $k>\ell$ then $r_{{\EuScript C}_k^{\ell}}=1-\frac{\ell}{k}$  or if $k<\ell$ then $r_{{\EuScript C}_k^{\ell}}=1-\frac{k}{\ell}.$ 
\item Let $\operatorname{char}\mathbb F\leq\ell$,\\
if $k>\ell$ then $r_{{\EuScript C}_k^{\ell}}>1-\frac{\ell}{k}$ or  if $k<\ell$ then  $r_{{\EuScript C}_k^{\ell}}>1-\frac{k}{\ell}.$ 
\end{enumerate}
\end{cor}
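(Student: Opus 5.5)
The plan is to compute the rate from \eqref{FRED453}, $r=1-R/N$, where $N=C_{\ell}^{k+\ell-1}$ is the number of columns of $H_k^{\ell}$ and $R=\operatorname{rank}H_k^{\ell}$. The rank is supplied by Lemma \ref{hgf56y7} when $\operatorname{char}\mathbb F>\ell$ and bounded by Lemma \ref{DFRE67yt} when $\operatorname{char}\mathbb F\leq\ell$. The only computation needed is a binomial-quotient identity. With $n=k+\ell-1$ one has
$$\frac{C_{\ell-1}^{n}}{C_{\ell}^{n}}=\frac{\ell}{n-\ell+1}=\frac{\ell}{k},$$
obtained by writing both coefficients as factorial quotients and cancelling.

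\emph{Case $k>\ell$, $\operatorname{char}\mathbb F>\ell$.} Lemma \ref{hgf56y7}(1) gives $R=C_{\ell-1}^{k+\ell-1}$. Hence
$$r=1-\frac{C_{\ell-1}^{k+\ell-1}}{C_{\ell}^{k+\ell-1}}=1-\frac{\ell}{k}.$$

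\emph{Case $k>\ell$, $\operatorname{char}\mathbb F\le\ell$.} Lemma \ref{DFRE67yt}(1) gives the strict inequality $R<C_{\ell-1}^{k+\ell-1}$. Dividing by $N$ and using the same identity gives $r>1-\ell/k$.

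\emph{Case $k<\ell$.} Here I would exploit the flip-transpose symmetry. By Lemma \ref{FM5643} and Lemma \ref{FTR43eDD}, $\operatorname{rank}H_k^{\ell}=\operatorname{rank}H_{\ell}^{k}$. The pair $(\ell,k)$ now has first index larger than second, so the previous two cases apply with the roles of $k$ and $\ell$ exchanged. This gives the quotient $C_{k-1}^{k+\ell-1}/C_{k}^{k+\ell-1}=k/\ell$, hence $r=1-k/\ell$ in the full-rank case, and the strict inequality $r>1-k/\ell$ from Lemma \ref{DFRE67yt}(2) otherwise.

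The main obstacle is this last case, because the denominator is not symmetric under $k\leftrightarrow\ell$. The length of ${\EuScript C}_k^{\ell}$ is $C_{\ell}^{k+\ell-1}=C_{k-1}^{k+\ell-1}$. So the literal value $R=C_{k-1}^{k+\ell-1}$ from Lemma \ref{hgf56y7}(2) equals the number of columns, not the number of rows $C_{k}^{k+\ell-1}$ of $H_{\ell}^{k}$. Before writing this case I would recheck the rank count for $k<\ell$ directly from the block decomposition \eqref{m4tr1x67532}. The safe route is to pass to the flipped code ${\EuScript C}_{\ell}^{k}$, whose length is $C_{k}^{k+\ell-1}$, and read the rate formula as the statement that $(H_k^{\ell})^F=H_{\ell}^{k}$ has rate $1-k/\ell$. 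I would state explicitly which parity-check matrix is meant, so that the claimed formula is what is actually proved.
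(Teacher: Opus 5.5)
Your $k>\ell$ cases are correct and follow the paper's own proof. Both substitute the rank from Lemma \ref{hgf56y7}(1), or the bound from Lemma \ref{DFRE67yt}(1), into $r=1-R/N$; you also supply the quotient $C_{\ell-1}^{k+\ell-1}/C_{\ell}^{k+\ell-1}=\ell/k$, which the paper leaves implicit.

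The gap is in the case $k<\ell$, and it cannot be closed because the statement is false there. The paper's proof only dismisses this case as ``similar''. The observation you record as an obstacle is in fact decisive. Lemma \ref{hgf56y7}(2) gives $\operatorname{rank}H_k^{\ell}=C_{k-1}^{k+\ell-1}=C_{\ell}^{k+\ell-1}$, which is the number of \emph{columns} of $H_k^{\ell}$. So $H_k^{\ell}$ has full column rank, ${\EuScript C}_k^{\ell}=\{0\}$, and $r_{{\EuScript C}_k^{\ell}}=0<1-k/\ell$.

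Concretely, take $(k,\ell)=(2,3)$ and $\operatorname{char}\mathbb F\neq 2$. The first three rows of $H_2^3$ force $x_1=x_2=x_3=0$, and the fourth row then forces $x_4=0$, so $r=0$. This refutes item (1) for $\operatorname{char}\mathbb F\geq 5$ and item (2) for $\operatorname{char}\mathbb F=3$. Over $GF(2)$ the code is spanned by $(1,1,1,1)$, so $r=1/4$, which is still not $>1/3$. No direct recheck from the block decomposition is needed; the lemma you already cite settles the matter.

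Your fallback, reading the formula as the rate of the flipped code ${\EuScript C}_{\ell}^{k}$, proves a different statement. Even in that form, item (2) is wrong. The second index of $H_{\ell}^{k}$ is $k$, so Lemma \ref{hgf56y7}(1) gives $H_{\ell}^{k}$ full row rank whenever $\operatorname{char}\mathbb F>k$. Hence $r_{{\EuScript C}_{\ell}^{k}}=1-k/\ell$ exactly when $k<\operatorname{char}\mathbb F\leq\ell$, not strictly greater. For the same reason, Lemma \ref{DFRE67yt}(2), which you planned to cite for the strict inequality, is false in that range: $H_2^3$ over $GF(3)$ has rank $4=C_1^4$.

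What can actually be proved is the following:
\begin{itemize}
\item the corollary for $k>\ell$;
\item for $k<\ell$, the code ${\EuScript C}_k^{\ell}$ is zero whenever $\operatorname{char}\mathbb F>k$;
\item the flipped code ${\EuScript C}_{\ell}^{k}$ has rate equal to $1-k/\ell$ when $\operatorname{char}\mathbb F>k$, and strictly greater than $1-k/\ell$ when $\operatorname{char}\mathbb F\leq k$.
\end{itemize}
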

\begin{proof}
For item (1) by lemma \ref{hgf56y7} if \text{char}${\mathbb F}>\ell$, then \text{rank}$H_k^{\ell}=C_{\ell-1}^{k+\ell-1}$, then 
$$r_{{\EuScript C}_k^{\ell}}=1-\frac{\operatorname{rank}H_k^\ell}{C_\ell^{k+\ell-1}}=1-\frac{\ell}{k}.$$
The evidence provided by the other party is similar.\\
On the other hand, if $\operatorname{char}\mathbb F\leq\ell$, then
$\operatorname{rank}H_k^\ell<C_{\ell-1}^{k+\ell-1}.$
Consequently,
$$1-\frac{\ell}{k}<1-\frac{\operatorname{rank}H_k^\ell}{C_\ell^{k+\ell-1}}=r_{{\EuScript C}_k^{\ell}}$$
The other party provides similar evidence.
\end{proof}
\section{Conclusions}
In this paper, we introduced Fractus matrices, a new family of recursively generated sparse matrices characterized by self-similarity, regularity, and hierarchical structure. These matrices provide a unified algebraic framework for constructing regular LDPC codes while preserving the sparsity required for efficient iterative decoding.
Several structural properties of Fractus matrices were established, including recursive decomposition, flip-transpose symmetry, sparsity, and regularity. These properties naturally lead to recursive algorithms for solving the associated linear systems and for encoding the corresponding LDPC codes. The resulting encoding procedure has nearly linear computational complexity and exploits the recursive nature of the proposed matrices without requiring the explicit computation of dense generator matrices.
From the coding-theoretic perspective, the proposed construction produces LDPC codes whose Tanner graphs are free of 4-cycles and have girth six. Consequently, the minimum Hamming distance satisfies the Tanner tree bound, while iterative decoding can be performed using standard message-passing algorithms with complexity proportional to the number of graph edges.
One of the main contributions of this work is the identification of a lattice structure associated with the family of Fractus matrices and the corresponding Fractus codes. This connection establishes a bridge between lattice theory and coding theory, providing an algebraic viewpoint that complements the traditional graph-theoretic approach to LDPC codes. In particular, the lattice framework offers a natural setting for studying recursive constructions, structural symmetries, ideals, and hierarchical relationships among parity-check matrices and their associated codes.
Future research includes the study of lattice automorphisms and ideals associated with Fractus matrices, the characterization of the resulting lattices, the analysis of trapping and stopping sets through the recursive structure, the extension of the proposed construction to irregular and non-binary LDPC codes, and the optimization of the recursive encoding and decoding algorithms for hardware implementations and large-scale communication systems.
\section*{Declaration on the Use of AI Technologies}

The author used ChatGPT (OpenAI)  as supportive tools during the preparation of this work. These tools were employed to assist with literature searches, verify computations, facilitate mathematical discussions that contributed to refining key arguments, and improve the clarity and style of the manuscript. All definitions, proofs, original research ideas, and the final content of the manuscript are the sole responsibility of the author. The author assumes full responsibility for the accuracy, integrity, and content of this publication.
%

\end{document}